\newtheorem{corollary}{Corollary}
\newtheorem{definition}{Definition}  
\newtheorem{theorem}{Theorem}		
\newtheorem{obs}{Observation} 
\newtheorem{prop}{Proposition}
\newtheorem{lemma}{Lemma}
\newcommand\piund{{\pi}}
\newcommand\qc	{{\cal Q}}
\newcommand\scc	{{\cal S}}
\newcommand\xc	{{\cal X}}
\newcommand\yc		{{\cal Y}}
\newcommand\fc  	{{\cal F}}
\newcommand\oneb	{{\boldsymbol{1}}}
\newcommand\Wb	{{\boldsymbol{W}}}
\newcommand\Xb	{{\boldsymbol{X}}}
\newcommand\Zb  	{{\boldsymbol{Z}}}
\newcommand\ep	{{\varepsilon}}
\newcommand\de  	{{\delta}}
\newcommand\code {\mbox{${\textstyle~{\cal C}\!\!\!\!\!\!\!\sim}$}}
\newcommand\codeb{{~{\cal C}\!\!\!\!\!\!\sim}}
\begin{document}

\newcommand{\ddoublespace}{\renewcommand{\baselinestretch}{1.6}\small\normalsize}
\newcommand{\Doublespace}{\renewcommand{\baselinestretch}{1.5}\small\normalsize}

\ddoublespace

\begin{center}
\begin{tabular}[b]{c}
{\LARGE \bf Mathematics and Engineering} \\ \medskip
{\LARGE \bf Communications Laboratory}
\ \\
\ \\
{\Large \sf Technical Report}
\end{tabular}\hspace{0.4in}\epsfig{file=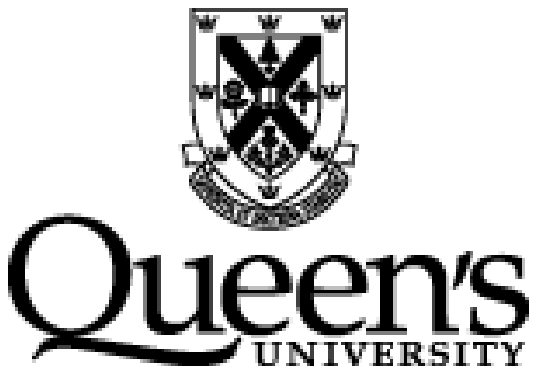,width=3.0cm}

\rule{6.0in}{0.8mm}
\end{center}

\bigskip
\bigskip
\bigskip
\bigskip
\vspace{1.0in}

\begin{center}
{\Large\bf A Generalized Poor-Verd\'{u} Error Bound for Multihypothesis} \\
\smallskip
{\Large\bf Testing and the Channel Reliability Function} \\
\vspace{0.6in}
{\large {\it P.-N.~Chen} and {\it F.~Alajaji}}  \
\
\end{center}

\vspace{2.0in}
\begin{center}
{\large July 2011}
\end{center}

\newpage

\Doublespace

\title{A Generalized Poor-Verd\'{u} Error Bound for Multihypothesis Testing
and the Channel Reliability Function}

\author{Po-Ning~Chen and Fady Alajaji}

\maketitle

\begin{abstract}
A lower bound on the minimum error probability for multihypothesis testing
is established. The bound, which is expressed in terms of the   
cumulative distribution function of the {\it tilted} posterior 
hypothesis distribution given the observation with tilting parameter $\theta\ge 1$,
generalizes an earlier bound due the Poor and Verd\'{u} (1995).
A sufficient condition is established under which the new bound (minus a multiplicative
factor) provides the exact error probability in the limit of $\theta$ going to infinity. Examples illustrating the new bound are also provided.

The application of this generalized Poor-Verd\'{u} bound to the channel reliability
function is next carried out, resulting in two information-spectrum upper bounds.
It is observed that, for a class of channels including 
the finite-input memoryless Gaussian channel, one of the bounds is tight and
gives a multi-letter asymptotic expression for the reliability function, 
albeit its determination or calculation in single-letter form remains an open
challenging problem. Numerical examples regarding the other bound are finally 
presented. 
\end{abstract}

\begin{keywords}
Hypothesis testing, probability of error, maximum-a-posteriori and maximum likelihood estimation, 
channel coding, channel reliability function, error exponent, 
binary-input additive white Gaussian noise channel.
\end{keywords}

\renewcommand{\thefootnote}{}
\footnotetext{
P.~N.~Chen is with the Department of Electrical Engineering (Institute of Communications Engineering), 
National Chiao Tung University, Taiwan, ROC (email: poning@faculty.nctu.edu.tw).
F.~Alajaji is with the Department of Mathematics and Statistics,
Queen's University, Kingston, Ontario K7L 3N6, Canada
(email: fady@mast.queensu.ca).

This work was supported in part by 
the Natural Sciences and Engineering Research Council
of Canada and the National Science Council of
Taiwan, R.O.C.
}
\renewcommand{\thefootnote}{\arabic{footnote}}
\setcounter{footnote}{0}

\newpage
\section{Introduction}\label{intro}
In \cite{PV95}, Poor and Verd\'{u} establish a lower bound to the minimum error probability
of multihypothesis testing. Specifically, given two random variables 
$X$ and $Y$ with joint distribution $P_{X,Y}$, $X$ taking values in a finite
or countably-infinite alphabet $\xc$ and $Y$ taking values in an arbitrary alphabet $\yc$, they
show that the optimal maximum-a-posteriori (MAP) estimation of $X$ given $Y$ results in
the following lower bound on the probability of estimation error $P_e$: 
\begin{equation}
P_e \ge (1-\alpha)P_{X,Y}\left\{(x,y)\in\xc\times\yc:~
P_{X|Y}(x|y)\leq \alpha\right\}
\label{pv-bound}
\end{equation}
for each $\alpha\in[0,1]$, where $P_{X|Y}$ denotes the posterior distribution of $X$
given $Y$ and the prior distribution $P_X$ is arbitrary (not necessarily
uniform).
This bound has pertinent information-theoretic applications such as in the proof of
the converse part of the channel coding theorem that yield formulas for both capacity 
and $\ep$-capacity for general channels with memory 
(not necessarily information stable, stationary, etc) \cite{VH94,PV95}. It also 
improves upon previous lower bounds
due to Shannon \cite{shannon57}, \cite[Eq.~(7)]{PV95} 
and to Verd\'{u} and Han \cite{VH94}, \cite[Eq.~(9)]{PV95}.

Furthermore, Poor and Verd\'{u} use the above bound
to establish an information-spectrum based upper bound to the
reliability function $E^\ast(R)$ -- i.e., the optimal error exponent or
the largest rate of asymptotic exponential
decay of the error probability of channel 
codes \cite{gallager,blahut,csiszar,viterbi}--
of general channels \cite[Eq.~(14)]{PV95}.
They conjecture that this bound, which is expressed in terms of a large-deviation rate function for
the normalized channel information density (see Section~\ref{prelim} for the definition),  
is tight (i.e., exactly equal to $E^\ast(R)$) for all rates $R$.
In \cite{pv02}, it is however shown via a counterexample involving the memoryless
binary erasure channel (BEC) that the bound is not tight at low rates, and a slightly
tighter bound is presented \cite[Corollary~1]{pv02}.

In this work, we generalize the above Poor-Verd\'{u} lower bound in \eqref{pv-bound}
for the minimum error probability of multihypothesis testing. The new bound is expressed
in terms of the cdf of the tilted posterior distribution of $X$ given $Y$ 
with tilting parameter $\theta \ge 1$, and it reduces to \eqref{pv-bound} when $\theta=1$;
see Theorem~\ref{improved-pv}. 
We also provide a sufficient condition under which our generalized Poor-Verd\'{u} bound,
without the multiplicative factor $(1-\alpha)$,
is exact in the limit of $\theta$ going to infinity. Specifically, the sufficient condition
requires having a unique MAP estimate
of $X$ from $Y$ almost surely in $P_Y$, where $P_Y$ is the distribution of $Y$;
see Theorem~\ref{tightness-theo}. We present a few examples to illustrate the results
of Theorems~\ref{improved-pv} and~\ref{tightness-theo}.  

We proceed by applying the above results to the reliability
function $E^\ast(R)$ of general channels.
We employ Theorem~\ref{improved-pv} to establish two information-spectrum 
upper bounds to $E^\ast(R)$; see Theorem~\ref{exp-upper-bound}.
One upper bound, $E_{\text{PV}}^{(\theta)}(R)$, is a function
of the tilting parameter $\theta$, while the other bound, $\bar{E}_{\text{PV}}(R)$, involves
taking the limit infimum of $\theta$. It turns out that if the channel satisfies
a symmetry condition, then both upper bounds can be expressed in terms of the information
density of an auxiliary channel whose transition distribution is nothing but the tilted 
distribution of the original channel distribution; see Observation~\ref{auxiliary-channel}.  

We next use Theorem~\ref{tightness-theo} to show that for the memoryless finite-input
additive white Gaussian noise (AWGN) channel, the upper bound $\bar{E}_{\text{PV}}(R)$
is tight, hence yielding an information-spectral formula
for this channel's reliability function:
$E^\ast(R)=\bar{E}_{\text{PV}}(R)$ for all rates $R$ between 0 and channel capacity; 
see Theorem~\ref{awgn-exact-exp-theo}.
The calculation or determination in closed (single-letter) form
of $\bar{E}_{\text{PV}}(R)$ is however a formidable task and remains 
a notoriously open problem, 
as it requires solving the optimization
of a large-deviation rate function in additions to two limiting operations; this makes it
quite difficult to compare $\bar{E}_{\text{PV}}(R)$ to well-known lower/upper bounds
to $E^\ast(R)$ (such as the random coding lower bound and the
sphere packing upper bound \cite{gallager,blahut}\footnote{The sphere
packing bound \cite{gallager}
is referred to as the space partitioning bound in \cite{blahut}.}) 
for this AWGN channel. 
Nevertheless, the above multi-letter asymptotic expression for $E^\ast(R)$
may be conceptually useful for the future determination
of $E^\ast(R)$ in computable single-letter form at low rates.\footnote{For the finite-input
AWGN channel as well as the whole class of memoryless channels, $E^\ast(R)$ is already
exactly determined in terms of a simple (single-letter) expression at high rates
(beyond some critical rate) since the random coding and sphere-packing bounds
coincide in that rate region \cite{gallager}. Further improvements were recently
established for the memoryless binary symmetric channel (BSC) and the continuous-input
AWGN channel in \cite{barg,ben-haim}, where it is shown that 
$E^\ast(R)$ is also exactly determined for rates $R$
in some interval directly below the critical rate.} 
We also note that the equality $E^\ast(R)=\bar{E}_{\text{PV}}(R)$ holds
for a class of channels satisfying the sufficient condition of Theorem~\ref{tightness-theo};
see Corollary~\ref{exact-exp-cor} and Observation~\ref{cont-channel-obs}.

Finally, we provide a lower bound to $E_{\text{PV}}^{(\theta)}(R)$ for the case of memoryless
channels, which is computable for a given value of $\theta$. We use this lower bound
to demonstrate numerically that for the memoryless BSC,
$E_{\text{PV}}^{(\theta)}(R)$ is not tight at all rates when $\theta=1$ (which corresponds to the
original Poor-Verd\'{u} reliability function upper bound). We also
numerically show that for the memoryless Z-channel, $E_{\text{PV}}^{(\theta)}(R)$ is 
not tight at high rates for all considered values of $\theta$ (including large ones).

The rest of the paper is organized as follows. In Section~\ref{gen-pv-bound-dec},
the generalized Poor-Verd\'{u} lower bound to the multihypothesis testing 
minimum error probability is established in terms of the tilted
posterior distribution with parameter $\theta$ (Theorem~\ref{improved-pv}). 
A sufficient condition under which
an exact expression for the error probability is given in terms of an asymptotic (in $\theta$)
term of the bound (minus a multiplying factor) is also shown (Theorem~\ref{tightness-theo}).
Examples illustrating Theorems~\ref{improved-pv} and~\ref{tightness-theo} are provided
in Section~\ref{examples}. In Section~\ref{channel-reliability},
the two upper bounds, given by $E_{\text{PV}}^{(\theta)}(R)$ and $\bar{E}_{\text{PV}}(R)$, respectively, 
for the channel reliability function are proved (Theorem~\ref{exp-upper-bound}).
Furthermore, it is noted that $\bar{E}_{\text{PV}}(R)$ provides an exact asymptotic
characterization for the channel reliability function at all rates for the finite-input
AWGN channel as well as other channels (Theorem~\ref{awgn-exact-exp-theo} and Corollary~\ref{exact-exp-cor}).
Numerical examples involving the BSC and the Z-channel indicating the looseness
of $E_{\text{PV}}^{(\theta)}(R)$ for specific choices of $\theta$ are next provided.   
Finally, conclusions are stated in section~\ref{conclusion}.
Note that we will use the natural logarithm throughout.

\section{A generalized error lower bound for multihypothesis testing}\label{gen-pv-bound-dec}

We herein generalize the Poor-Verd\'{u} lower bound in (\ref{pv-bound})
for the multihypothesis testing error probability.

\bigskip

Consider two (correlated) random variables $X$ and $Y$, where
$X$ has a discrete (i.e., finite or countably infinite) alphabet
$\xc=\left\{x_1,x_2,x_3,\ldots\right\}$ and $Y$ takes on
values in an arbitrary alphabet $\yc$. The minimum 
probability of error $P_e$ in estimating $X$ from $Y$ is given by
\begin{equation}
P_e\triangleq\Pr\left[X\neq e(Y)\right]
\label{pe-definition}
\end{equation}
where $e(Y)$ is the MAP estimate defined as 
\begin{equation}
\label{eq:pv1}
e(Y)=\arg \max_{x\in\xc}P_{X|Y}(x|Y).
\end{equation}

\bigskip

\begin{theorem}\label{improved-pv}{\rm
The above minimum 
probability of error $P_e$ in estimating $X$ from $Y$ satisfies
the following inequality
\begin{equation}
P_e\geq(1-\alpha)P_{X,Y}\left\{(x,y)\in\xc\times\yc:~
P_{X|Y}^{(\theta)}(x|y)\leq \alpha\right\}
\label{gen-pv-bound}
\end{equation}
for each $\alpha\in[0,1]$ and $\theta\geq 1$, where for each $y \in \yc$,
\begin{equation}\label{twistdist}
P_{X|Y}^{(\theta)}(x|y)\triangleq \frac{P_{X|Y}^\theta(x|y)}
{\sum_{x'\in\xc}P_{X|Y}^\theta(x'|y)}, \quad x \in \xc,
\end{equation}
is the tilted distribution of $P_{X|Y}(\cdot|y)$ with parameter $\theta$
\cite{Bucklewbook}.
}\end{theorem}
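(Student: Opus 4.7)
The plan is to follow the two-case structure of the original Poor-Verd\'{u} argument \cite{PV95}, reducing the claim to a pointwise (in $y$) inequality and then isolating the single new ingredient needed, namely a monotonicity property of the MAP value under tilting.

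The first step is to observe that the arg-max is preserved by the map $p \mapsto p^\theta$ on $[0,1]$ for $\theta \geq 1$, so the MAP estimate of the tilted distribution $P_{X|Y}^{(\theta)}(\cdot|y)$ coincides with $e(y)$ from \eqref{eq:pv1} for every $y$. I would then set $A_\alpha^{(\theta)}(y) := \{x \in \xc : P_{X|Y}^{(\theta)}(x|y) \leq \alpha\}$ and aim to prove, for each $y \in \yc$,
\begin{equation*}
\Pr[X \neq e(y)\mid Y=y] \;\geq\; (1-\alpha)\,\Pr[X \in A_\alpha^{(\theta)}(y)\mid Y=y],
\end{equation*}
whose expectation over $Y$ is exactly \eqref{gen-pv-bound}.

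For this pointwise inequality I would split on whether $e(y)\in A_\alpha^{(\theta)}(y)$. If $e(y)\notin A_\alpha^{(\theta)}(y)$, every $x \in A_\alpha^{(\theta)}(y)$ automatically differs from $e(y)$, so
\begin{equation*}
\Pr[X \neq e(y)\mid Y=y]\;\geq\;\Pr[X \in A_\alpha^{(\theta)}(y)\mid Y=y]\;\geq\;(1-\alpha)\,\Pr[X \in A_\alpha^{(\theta)}(y)\mid Y=y],
\end{equation*}
and the claim is immediate. If $e(y)\in A_\alpha^{(\theta)}(y)$, then because $e(y)$ is also the MAP of the tilted distribution, $A_\alpha^{(\theta)}(y)=\xc$ and the right-hand side collapses to $(1-\alpha)$, so the target reduces to showing $P_{X|Y}(e(y)|y)\leq \alpha$.

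The main obstacle lies in this last implication, which I would settle via the tilting-monotonicity relation $P_{X|Y}^{(\theta)}(e(y)|y)\geq P_{X|Y}(e(y)|y)$ valid for all $\theta\geq 1$. Writing $r_{x'}:=P_{X|Y}(x'|y)/P_{X|Y}(e(y)|y)\in[0,1]$, this follows from
\begin{equation*}
P_{X|Y}^{(\theta)}(e(y)|y) \;=\; \frac{1}{\sum_{x'\in\xc} r_{x'}^\theta} \;\geq\; \frac{1}{\sum_{x'\in\xc} r_{x'}} \;=\; P_{X|Y}(e(y)|y),
\end{equation*}
where the middle inequality uses the elementary fact that $r^\theta\leq r$ for $r\in[0,1]$ and $\theta\geq 1$. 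Combined with the on-peak hypothesis $P_{X|Y}^{(\theta)}(e(y)|y)\leq\alpha$, this yields $P_{X|Y}(e(y)|y)\leq\alpha$ and hence $\Pr[X\neq e(y)\mid Y=y]=1-P_{X|Y}(e(y)|y)\geq 1-\alpha$, completing the argument. The recovery of the original Poor-Verd\'{u} bound \eqref{pv-bound} at $\theta=1$ is then transparent, since the tilted distribution reduces to $P_{X|Y}$.
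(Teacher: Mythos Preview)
Your proof is correct. Both your argument and the paper's hinge on the same monotonicity lemma, namely that the MAP value does not decrease under tilting: $P_{X|Y}^{(\theta)}(e(y)\,|\,y)\geq P_{X|Y}(e(y)\,|\,y)$ for $\theta\geq 1$, which you and the paper derive identically via the ratios $r_{x'}\in[0,1]$.

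The assembly of the final inequality, however, is genuinely different. You work conditionally on $y$ and split into two cases according to whether $e(y)\in A_\alpha^{(\theta)}(y)$, handling each directly; this mirrors the original two-case argument in \cite{PV95} and is arguably the most elementary route. The paper instead passes to the complementary event $\{P_{X|Y}^{(\theta)}>\alpha\}$, orders the posterior values as $h_1(y)\geq h_2(y)\geq\cdots$, lower-bounds the sum $\sum_j h_j(y)\,\oneb(h_j^{(\theta)}(y)>\alpha)$ by its $j=1$ term, and then closes with the scalar inequality $U\leq\alpha+(1-\alpha)\,U\,\oneb(U>\alpha)$ applied to $U=h_1(Y)$. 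Your approach is shorter and more transparent for this theorem in isolation; the paper's $h_j$ decomposition, on the other hand, is reused verbatim in the tightness analysis of Theorem~\ref{tightness-theo}, where tracking all terms $j\geq 1$ (not just the MAP) is essential.
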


\medskip
\noindent
{\it Note:} When $\theta=1$, the above bound in (\ref{gen-pv-bound}) reduces
to the  Poor-Verd\'{u} bound in (\ref{pv-bound}).

\medskip
\begin{proof} Fix $\theta\geq 1$. We only provide the proof for $\alpha<1$ since 
the lower bound trivially holds when $\alpha=1$.

From (\ref{pe-definition}) and (\ref{eq:pv1}), the minimum error probability $P_e$ 
incurred in testing among the
values of $X$ satisfies
\begin{eqnarray*}
1-P_e
&=& \Pr[X=e(Y)] \\
&=&\int_\yc P_{X|Y}(e(y)|y) \ dP_Y(y)\\
&=&\int_\yc
\left(\max_{x\in\xc}P_{X|Y}(x|y)\right)dP_Y(y)\\
&=&\int_\yc \left(\max_{x\in\xc}f_x(y)\right)dP_Y(y)\\
&=&E\left[\max_{x\in\xc}f_x(Y)\right],
\end{eqnarray*}
where $f_x(y)\triangleq P_{X|Y}(x|y)$.
For a fixed $y\in \yc$, let
$h_j(y)$ be the $j$-th element in the set
$$\{f_{x_1}(y),f_{x_2}(y),f_{x_3}(y),\ldots\}$$
such that its elements are listed in non-increasing order; i.e., 
$$h_1(y)
\geq h_2(y)\geq h_3(y)\geq \cdots$$ and
$$\{h_1(y),h_2(y),h_3(y),\ldots\}=
\{f_{x_1}(y),f_{x_2}(y),f_{x_3}(y),\ldots\}.$$
Then
\begin{equation}
\label{eq:firstpv}
1-P_e=E[h_1(Y)].
\end{equation}
Furthermore, for each $h_j(y)$ above, define $h_j^{(\theta)}(y)$ such that 
$h_j^{(\theta)}(y)$ be the respective element for $h_j(y)$ satisfying
$$h_j(y)=f_{x_j}(y)=P_{X|Y}(x_j|y)\ \Leftrightarrow\ h_j^{(\theta)}(y)=P_{X|Y}^{(\theta)}(x_j|y).$$
Since $h_1(y)$ is the largest among $\{h_j(y)\}_{j\geq 1}$, 
$$h_1^{(\theta)}(y)=\frac{h_1^\theta(y)}{\sum_{j\geq 1} h_j^\theta(y)}=\frac{1}{1+
\sum_{j\geq 2} [h_j(y)/h_1(y)]^\theta}$$ is non-decreasing in $\theta$ for each $y$; this implies that
\begin{equation}
h_1^{(\theta)}(y)\geq h_1(y) \quad \text{for $\theta\ge 1$ and $y\in\yc$}.
\label{monotone-prop}
\end{equation}

\noindent
For any $\alpha\in[0,1)$, we can write
$$P_{X,Y}\left\{(x,y)\in\xc\times\yc:~P_{X|Y}^{(\theta)}(x|y)>\alpha\right\}
=\int_\yc
P_{X|Y}\left\{x\in\xc:~P_{X|Y}^{(\theta)}(x|y)>\alpha\right\}dP_Y(y).$$
Noting that
\begin{eqnarray*}
P_{X|Y}\left\{x\in\xc~:~P_{X|Y}^{(\theta)}(x|y)>\alpha\right\}
&=&\sum_{x\in\xc}P_{X|Y}(x|y)\cdot\oneb\left(P_{X|Y}^{(\theta)}(x|y)>\alpha\right)\\
&=&\sum_{j=1}^\infty h_j(y)\cdot\oneb\left(h_j^{(\theta)}(y)>\alpha\right),
\end{eqnarray*}
where $\oneb(\cdot)$ is the indicator function, yields 
\begin{eqnarray}
P_{X,Y}\left\{(x,y)\in\xc\times\yc:~P_{X|Y}^{(\theta)}(x|y)>\alpha\right\}
&=&\int_\yc\left(\sum_{j=1}^\infty h_j(y)\cdot\oneb\left(h^{(\theta)}_j(y)>\alpha\right)\right)dP_Y(y) \nonumber\\
&\geq&\int_\yc h_1(y)\cdot\oneb(h^{(\theta)}_1(y)>\alpha) dP_Y(y) \nonumber \\
&\geq&\int_\yc h_1(y)\cdot\oneb(h_1(y)>\alpha) dP_Y(y) \nonumber \\
&=&E[h_1(Y)\cdot\oneb(h_1(Y)>\alpha)] \label{h-indicate},
\end{eqnarray}
where the second inequality follows from \eqref{monotone-prop}.
To complete the proof, we next 
relate $E[h_1(Y)\cdot\oneb(h_1(Y)>\alpha)]$
with $E[h_1(Y)]$, which is exactly $1-P_e$.
Invoking \cite[eq.~(19)]{PV95}, we have that 
for any $\alpha\in[0,1]$ and any random variable $U$ with $\Pr\{0\leq U\leq 1\}=1$, 
the following inequality holds with probability one
$$U\leq \alpha+(1-\alpha)\cdot U\cdot \oneb(U>\alpha).$$
Thus
$$E[U]\leq \alpha + (1-\alpha)E[U\cdot\oneb(U>\alpha)].$$
Applying the above inequality to (\ref{h-indicate})
by setting $U=h_1(Y)$, we obtain
\begin{eqnarray}
(1-\alpha)P_{X,Y}\left\{(x,y)\in\xc\times\yc~:~P_{X|Y}^{(\theta)}(x|y)>\alpha\right\}
&\geq&(1-\alpha)E[h_1(Y)\cdot\oneb(h_1(Y)>\alpha)] \nonumber \\
&\geq&E[h_1(Y)]-\alpha \nonumber\\
&=&(1-P_e)-\alpha \nonumber \\
&=&(1-\alpha)-P_e, \nonumber
\end{eqnarray}
where the first equality follows from \eqref{eq:firstpv}.
\end{proof}

We next show that if the MAP estimate $e(Y)$ of $X$ from $Y$ is almost surely
unique in (\ref{eq:pv1}), then
the bound of Theorem~\ref{improved-pv}, without the $(1-\alpha)$ factor, 
is tight in the limit of $\theta$ going to infinity.

\bigskip

\begin{theorem}\label{tightness-theo}{\rm
Consider two random variables $X$ and $Y$, where
$X$ has a finite or countably infinite alphabet $\xc=\left\{x_1,x_2,x_3,\ldots\right\}$
and $Y$ has an arbitrary alphabet $\yc$. Assume that
\begin{equation}\label{condition}
P_{X|Y}(e(y)|y)>\max_{x\in\xc: x \neq e(y)}P_{X|Y}(x|y)
\end{equation}
holds almost surely in $P_Y$, where
$e(y)$ is the MAP estimate from $y$ as defined in (\ref{eq:pv1}); 
in other words, the MAP estimate is almost surely unique in $P_Y$.
Then, 
the error probability in the MAP estimation of $X$ from $Y$ satisfies
\begin{equation}\label{exact-pe}
P_e=\lim_{\theta\rightarrow\infty}P_{X,Y}\left\{(x,y)\in\xc\times\yc:~
P_{X|Y}^{(\theta)}(x|y)\leq \alpha\right\}
\end{equation}
for each $\alpha\in(0,1)$, where the tilted distribution
$P_{X|Y}^{(\theta)}(\cdot|y)$  
is given in (\ref{twistdist}) for $y\in \yc$.
}\end{theorem}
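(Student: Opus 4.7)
The plan is to exploit the fact that as $\theta\to\infty$ the tilted posterior $P_{X|Y}^{(\theta)}(\cdot|y)$ concentrates all of its mass at the unique MAP estimate $e(y)$, so that the event $\{P_{X|Y}^{(\theta)}(x|y)\le\alpha\}$ asymptotically coincides with the error event $\{x\neq e(y)\}$ for any $\alpha\in(0,1)$.

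First I would establish pointwise convergence of the tilted masses. Reusing the non-increasing ordering $h_1(y)\ge h_2(y)\ge\cdots$ of $\{P_{X|Y}(x|y)\}_{x\in\xc}$ from the proof of Theorem~\ref{improved-pv}, the uniqueness condition \eqref{condition} gives $h_j(y)<h_1(y)$ for all $j\ge 2$, almost surely in $P_Y$. Starting from
\begin{equation*}
h_1^{(\theta)}(y)=\frac{1}{1+\sum_{j\ge 2}(h_j(y)/h_1(y))^\theta},
\end{equation*}
an application of dominated convergence to the series --- using the $\theta=1$ term as a summable dominator, since $\sum_{j\ge 2}h_j(y)/h_1(y)=(1-h_1(y))/h_1(y)<\infty$ --- gives $h_1^{(\theta)}(y)\to 1$, and consequently $h_j^{(\theta)}(y)\to 0$ for each $j\ge 2$. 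Equivalently, $P_{X|Y}^{(\theta)}(\cdot|y)$ converges to the point mass at $e(y)$ for $P_Y$-almost every $y$.

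Next I would pass to the integrated probability. Expressing
\begin{equation*}
P_{X,Y}\{(x,y):P_{X|Y}^{(\theta)}(x|y)\le\alpha\}=\int_\yc\sum_{x\in\xc}P_{X|Y}(x|y)\,\oneb\!\left(P_{X|Y}^{(\theta)}(x|y)\le\alpha\right)dP_Y(y),
\end{equation*}
for $P_Y$-almost every $y$ and any $\alpha\in(0,1)$, the indicator at $x=e(y)$ eventually equals $0$ since the corresponding tilted mass tends to $1>\alpha$, while for $x\neq e(y)$ with $P_{X|Y}(x|y)>0$ it eventually equals $1$ since the tilted mass tends to $0\le\alpha$ (points with $P_{X|Y}(x|y)=0$ contribute nothing to the sum). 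Dominated convergence on the inner sum, with $\{P_{X|Y}(x|y)\}_{x\in\xc}$ as summable dominator, produces the pointwise-in-$y$ limit $1-P_{X|Y}(e(y)|y)$; a final dominated-convergence step over $y$, with constant dominator $1$, then yields
\begin{equation*}
\lim_{\theta\to\infty}P_{X,Y}\{(x,y):P_{X|Y}^{(\theta)}(x|y)\le\alpha\}=\int_\yc\bigl(1-P_{X|Y}(e(y)|y)\bigr)\,dP_Y(y)=P_e,
\end{equation*}
where the last equality uses $1-P_e=E[P_{X|Y}(e(Y)|Y)]$ derived in the proof of Theorem~\ref{improved-pv}.

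The principal obstacle is the countably-infinite case: no uniform-in-$x$ lower bound on the gap $h_1(y)-h_j(y)$ is available, since $h_j(y)/h_1(y)$ may approach $1$ as $j\to\infty$. The argument must therefore lean on termwise domination --- first at the level of the series defining $h_1^{(\theta)}(y)$, then at the level of the sum over $\xc$ in the integrand --- rather than on uniform estimates. This layered use of dominated convergence is also what forces the restriction $\alpha\in(0,1)$: at $\alpha=0$ the $x=e(y)$ indicator need not vanish asymptotically, and at $\alpha=1$ the event on the right-hand side becomes trivial.
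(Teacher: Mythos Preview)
Your proof is correct and follows essentially the same route as the paper: establish that the tilted posterior concentrates at the unique MAP point, then pass the limit through the sum over $\xc$ and the integral over $\yc$ by dominated convergence. The only organizational difference is that the paper works with the complementary event $\{P_{X|Y}^{(\theta)}(x|y)>\alpha\}$ and invokes a separate lemma (Appendix~A) to interchange the limit with the inner sum, whereas you apply dominated convergence directly to the sum with $\{P_{X|Y}(x|y)\}_x$ as dominator --- which is arguably cleaner, since the Appendix~A lemma as stated in the paper is actually false without a domination hypothesis (consider $a_{n,j}=\oneb(j=n)$), though its use in the paper is saved precisely by the domination you make explicit.
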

  
\medskip
\begin{proof}
It can be easily verified from the definitions of $h_j(\cdot)$
and $h_j^{(\theta)}(\cdot)$ that the following two limits hold for each $y\in \yc$:
$$\lim_{\theta\rightarrow\infty}h_1^{(\theta)}(y)=\frac 1{\ell(y)},$$
where 
\begin{equation}
\ell(y)\triangleq\max\{j\in\mathbb{N}:h_j(y)=h_1(y)\}
\label{ell-def}
\end{equation}
and
$\mathbb{N}\triangleq\{1,2,3,\ldots\}$ is the set of positive integers,
and
\begin{equation}
\lim_{\theta \rightarrow \infty} 
h_j(y) \cdot\oneb\left(h^{(\theta)}_j(y)>\alpha\right)=
\begin{cases}
h_j(y) \cdot \oneb\left(\frac 1{\ell(y)}>\alpha\right) 
& \text{for $j=1,2,\cdots,{\ell(y)}$} \\
0 & \text{for $j > {\ell(y)}$}
\end{cases}
\label{limit-h}
\end{equation}
where $\oneb(\cdot)$ is the indicator function.

As a result, we obtain
that for any $\alpha \in [0,1)$,
\begin{eqnarray}
\lefteqn{\lim_{\theta\rightarrow\infty}P_{X,Y}\left\{(x,y)\in\xc\times\yc:~P_{X|Y}^{(\theta)}(x|y)>\alpha\right\}} \nonumber\\
&=&\lim_{\theta\rightarrow\infty}\int_\yc\left(\sum_{j=1}^\infty h_j(y)\cdot\oneb\left(h^{(\theta)}_j(y)>\alpha\right)\right)dP_Y(y)\nonumber\\
&=&\int_\yc\lim_{\theta\rightarrow\infty}\left(\sum_{j=1}^\infty h_j(y)\cdot\oneb\left(h^{(\theta)}_j(y)>\alpha\right)\right)dP_Y(y)\label{dom}\\
&=&\int_\yc\left(\sum_{j=1}^{\ell(y)} h_j(y)
\cdot \oneb\left(\frac 1{\ell(y)}>\alpha\right)\right)dP_Y(y),\label{limit-sum}
\end{eqnarray}
where \eqref{dom} follows from the Dominated Convergence Theorem \cite[Thm.~16.4]{billingsley} 
since
$$\left|\sum_{j=1}^\infty h_j(y)\cdot\oneb\left(h^{(\theta)}_j(y)>\alpha\right)\right|
\leq \sum_{j=1}^\infty h_j(y)=1.$$
Furthermore, (\ref{limit-sum}) holds since the limit (in $\theta$) of 
$$a_{\theta,j} \triangleq h_j(y) \cdot \oneb\left(h^{(\theta)}_j(y)>\alpha\right)$$
exists for every $j=1,2,\cdots$ by (\ref{limit-h}), hence implying 
(as shown in Appendix~A) that
$$\lim_{\theta \rightarrow \infty} \sum_{j=1}^{\infty} a_{\theta,j}
= \sum_{j=1}^{\infty} \lim_{\theta \rightarrow \infty} a_{\theta,j}.$$

\noindent
Now condition (\ref{condition}) is equivalent to 
\begin{equation}
\Pr[\ell(Y)=1] \triangleq P_Y\left\{y\in \yc: 
\ell(y)=1\right\}=1;
\label{ell-cond}
\end{equation}
thus,
\begin{eqnarray}
\lim_{\theta\rightarrow\infty}P_{X,Y}\left\{(x,y)\in
\xc\times\yc:~P_{X|Y}^{(\theta)}(x|y)>\alpha\right\}
&=&\int_\yc h_1(y)\cdot\oneb\left(1>\alpha\right)dP_Y(y) 
=E[h_1(Y)] \nonumber \\
&=&1-P_e, \label{h1-Pe}
\end{eqnarray}
where (\ref{h1-Pe}) follows from (\ref{eq:firstpv}).

This immediately yields that for $0<\alpha<1$,
\begin{eqnarray*}
P_e&=&1-\lim_{\theta\rightarrow\infty}P_{X,Y}\left\{(x,y)\in\xc\times\yc:~P_{X|Y}^{(\theta)}(x|y)>\alpha\right\} \\
&=&\lim_{\theta\rightarrow\infty}P_{X,Y}\left\{(x,y)\in\xc\times\yc:~P_{X|Y}^{(\theta)}(x|y)\leq\alpha\right\}.
\end{eqnarray*}
\end{proof}

\bigskip

\begin{obs}
{\rm 
We first note that since the bound in (\ref{gen-pv-bound}) holds for every 
$\theta \ge 1$, it also holds in the limit of $\theta$ going to infinity (the limit exists as shown in the above proof):
\begin{equation}
P_e \ge (1-\alpha)\lim_{\theta\rightarrow\infty}P_{X,Y}\left\{
(x,y)\in\xc\times\yc:~P_{X|Y}^{(\theta)}(x|y)\leq \alpha\right\}
\label{limit-gen-pv-bound}
\end{equation}
for any $0 \leq \alpha \leq 1$.

Furthermore, 
if condition \eqref{condition} does not hold (or equivalently from (\ref{ell-cond}), 
if $\Pr[\ell(Y)=1]<1$), but
there exists an integer $L>1$ such that $\Pr[\ell(Y)\leq L]=1$, then 
using (\ref{limit-sum}), we can write (\ref{limit-gen-pv-bound}) as
\begin{eqnarray}
P_e&\geq&(1-\alpha) \left[1-\int_\yc\left(\sum_{j=1}^{\ell(y)} h_j(y)\cdot\oneb
\left(\frac 1{\ell(y)}>\alpha\right)\right)dP_Y(y)\right] \nonumber \\
&=&(1-\alpha) \left[\int_\yc \left(\sum_{j=1}^{\infty} h_j(y)\right)dP_Y(y)-\int_\yc\left(\sum_{j=1}^{\ell(y)} h_j(y)\cdot\oneb
\left(\frac 1{\ell(y)}>\alpha\right)\right)dP_Y(y)\right] \nonumber \\
&=&(1-\alpha)\int_\yc\left(\sum_{j=1}^{\ell(y)} h_j(y)\cdot\oneb\left(\frac 1{\ell(y)}\leq\alpha\right)+\sum_{j=\ell(y)+1}^{\infty} h_j(y)\right)dP_Y(y) \label{a1}\\
&=&(1-\alpha)\left[\int_{y:\ell(y)=1}\left(\sum_{j=1}^{1} h_j(y)\cdot\oneb\left(1\leq\alpha\right)
+\sum_{j=2}^{\infty} h_j(y)\right)dP_Y(y)\right.\nonumber\\
&&+\int_{y:\ell(y)=2}\left(\sum_{j=1}^{2} h_j(y)\cdot\oneb\left(\frac 1{2}\leq\alpha\right)+\sum_{j=3}^{\infty} h_j(y)\right)dP_Y(y)\nonumber\\
&&\left.+\cdots+\int_{y:\ell(y)=L}\left(\sum_{j=1}^{L} h_j(y)\cdot\oneb\left(\frac 1{L}\leq\alpha\right)+\sum_{j=L+1}^{\infty} h_j(y)\right)dP_Y(y)\right].\label{a2}
\end{eqnarray}
To render this lower bound 
as large as possible, its formula above indicates that although 
the multiplicative constant $(1-\alpha)$ favors a small $\alpha$,
the integration term in \eqref{a1} actually has its smallest value 
when $\alpha$ is less than $1/L$ (see (\ref{a2})).
Therefore, a compromise in the choice of $\alpha$ has to be made in order 
to maximize the bound.
}
\end{obs}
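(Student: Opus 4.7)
The plan is to derive the two displayed statements in sequence, reusing the ingredients already assembled in the proof of Theorem~\ref{tightness-theo}. For the first claim \eqref{limit-gen-pv-bound}, I would observe that Theorem~\ref{improved-pv} gives $P_e \ge (1-\alpha)P_{X,Y}\{P_{X|Y}^{(\theta)}(x|y)\le \alpha\}$ for every $\theta \ge 1$, while the left-hand side is independent of $\theta$. The proof of Theorem~\ref{tightness-theo} has already shown, via the Dominated Convergence Theorem applied to the uniformly bounded integrand $\sum_{j\geq 1} h_j(y)\oneb(h^{(\theta)}_j(y)>\alpha)\le 1$, that the limit of $P_{X,Y}\{P_{X|Y}^{(\theta)}(x|y)>\alpha\}$ as $\theta\to\infty$ exists and equals the integral in \eqref{limit-sum}; by complementation the corresponding $\{\le \alpha\}$-limit exists as well. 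Passing to this limit on both sides of \eqref{gen-pv-bound} then yields \eqref{limit-gen-pv-bound} for every $\alpha\in[0,1]$.

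For the second claim, I would substitute \eqref{limit-sum} into the complementary form of \eqref{limit-gen-pv-bound}, obtaining
\begin{equation*}
P_e \ge (1-\alpha)\left[1-\int_\yc \sum_{j=1}^{\ell(y)} h_j(y)\cdot\oneb\!\left(\tfrac{1}{\ell(y)}>\alpha\right)dP_Y(y)\right].
\end{equation*}
Then I would rewrite the leading $1$ as $\int_\yc \sum_{j=1}^\infty h_j(y)\,dP_Y(y)$, which is valid pointwise in $y$ since $\{h_j(y)\}_j$ is just a non-increasing rearrangement of the posterior masses and hence sums to $1$. Splitting the inner series as $\sum_{j=1}^{\ell(y)} + \sum_{j=\ell(y)+1}^{\infty}$ and applying the identity $1-\oneb(1/\ell(y)>\alpha) = \oneb(1/\ell(y)\le \alpha)$ on the first block (while the tail block $\sum_{j=\ell(y)+1}^\infty h_j(y)$ is unaffected by the subtraction) produces exactly \eqref{a1}.

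To reach \eqref{a2}, I would invoke the assumption $\Pr[\ell(Y)\le L]=1$ to decompose the outer integral as the finite sum $\sum_{k=1}^{L}\int_{\{y:\ell(y)=k\}}(\cdot)\,dP_Y(y)$ and then substitute the constant value $\ell(y)=k$ inside each sub-integrand. Finiteness of $L$ is what makes this a genuine partition and is precisely where the hypothesis is used; the case $L=1$ recovers Theorem~\ref{tightness-theo}. The concluding remark on the trade-off is then qualitative: the prefactor $(1-\alpha)$ favors small $\alpha$, but the indicators $\oneb(1/k\le \alpha)$ in \eqref{a2} only activate for $\alpha\ge 1/k$, so the contribution from the $\ell(y)=k$ stratum is absent until $\alpha$ crosses $1/k$, with the full tail becoming active only once $\alpha\ge 1/L$. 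The main obstacle here is not analytic but book-keeping: all analytic work (existence of the $\theta\to\infty$ limit and the interchange of limit with sum and integral) has been discharged in Theorem~\ref{tightness-theo}, and I would take care only to keep the tail $\sum_{j=\ell(y)+1}^\infty h_j(y)$ visible throughout so that the transition from \eqref{a1} to \eqref{a2} is transparent.
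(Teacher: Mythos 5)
Your proposal is correct and mirrors the paper's own (terse) derivation step for step: take $\theta\to\infty$ using the limit established via dominated convergence in the proof of Theorem~\ref{tightness-theo}, substitute \eqref{limit-sum} into the complementary form, replace $1$ by $\int_\yc\sum_{j\ge 1}h_j(y)\,dP_Y(y)$, use $1-\oneb(1/\ell(y)>\alpha)=\oneb(1/\ell(y)\le\alpha)$ to obtain \eqref{a1}, and finally partition by the finitely many values of $\ell(y)$ to obtain \eqref{a2}. There is no genuine gap and no genuinely different route here; the only work beyond the paper's exposition is the bookkeeping you spell out.
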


\section{Examples for the generalized Poor-Verd\'{u} bound}\label{examples}

In this section, we provide four examples (three of them with a finite observation 
alphabet and one with a continuous observation alphabet) to illustrate
the results of the previous section.

\subsection{Ternary Hypothesis Testing}\label{pv-example}
We revisit the ternary hypothesis testing example examined
in \cite[Figs.~1 and~2]{PV95}, where random variables $X$ and $Y$
have identical alphabets $\xc=\yc=\{0,1,2\}$, $X$ is uniformly distributed 
($P_X(x)=1/3 \ \forall x\in \xc$) and $Y$ is related to $X$ 
via
$$P_{Y|X}(y|x)=\begin{cases}
1-v_1-v_2&{\rm if} \ y=x\\
v_1&{\rm if} \ x=1\ {\rm and}\ y=0\\
v_2&{\rm if} \ x=2\ {\rm and}\ y=0\\
v_1&{\rm if} \ y\neq x\ {\rm and}\ y=1\\
v_2&{\rm if} \ y\neq x\ {\rm and}\ y=2
\end{cases}
$$
where we assume that $1-v_1-v_2>v_2>v_1>0$.
In \cite{PV95}, $v_1=0.27$ and $v_2=0.33$ are used.

A direct calculation reveals that the MAP estimation function
(\ref{eq:pv1}) for guessing $X$ from $Y$ is given by $e(y)=y$
for every $y\in \yc$, resulting in a probability of error
of $P_e=v_1+v_2=0.6$ when $v_1=0.27$ and $v_2=0.33$. 
Furthermore, we obtain that $P_e$ is exactly determined via
$$
\lim_{\theta\rightarrow\infty}P_{X,Y}\left\{(x,y)\in\xc\times\yc:~
P_{X|Y}^{(\theta)}(x|y)\leq \alpha\right\}= v_1+v_2 = P_e;$$
as predicted by Theorem~\ref{tightness-theo}, since
condition~(\ref{condition}) holds (since $\ell(Y)=1$ almost surely in $P_Y$,
where $\ell(\cdot)$ is defined in \eqref{ell-def}).

We next compute the new bound in (\ref{gen-pv-bound}) 
for $v_1=0.27$, $v_2=0.33$ and for different values
of $\theta \ge 1$ and plot it in Fig.~\ref{pv-examp-plot1}, along with
Fano's original bound (referred to as ``Fano'' in the figure) given by
$$P_e \ge \frac{\log 3-I(X;Y)-\log 2}{\log2}=0.568348,$$
and Fano's weaker (but commonly used) bound 
$$P_e \ge 1-\frac{I(X;Y)+\log 2}{\log 3}=0.358587$$
shown in \cite[Fig.~2]{PV95} (referred to as ``Weakened Fano'' in the figure).
The case of $\theta=1$ corresponds to the
original Poor-Verd\'{u} bound in (\ref{pv-bound}).
As can be seen from the figure, bound (\ref{gen-pv-bound})
for $\theta=20$ and 100 improves upon (\ref{pv-bound}) and both Fano bounds 
and approaches the exact probability of error 
as $\theta$ is increased without bound (e.g., for $\theta=100$ and $\alpha\downarrow 0$, 
the bound is quite close to $P_e$). 
In Fig.~\ref{pv-examp-plot2}, bounds
(\ref{gen-pv-bound}) and (\ref{pv-bound}), maximized over $\alpha \in [0,1]$,
are plotted versus $\theta$. It is observed that when $\theta \ge 16$,
bound (\ref{gen-pv-bound}) improves upon (\ref{pv-bound}). 

\begin{figure}[h!]
\begin{center}
\input{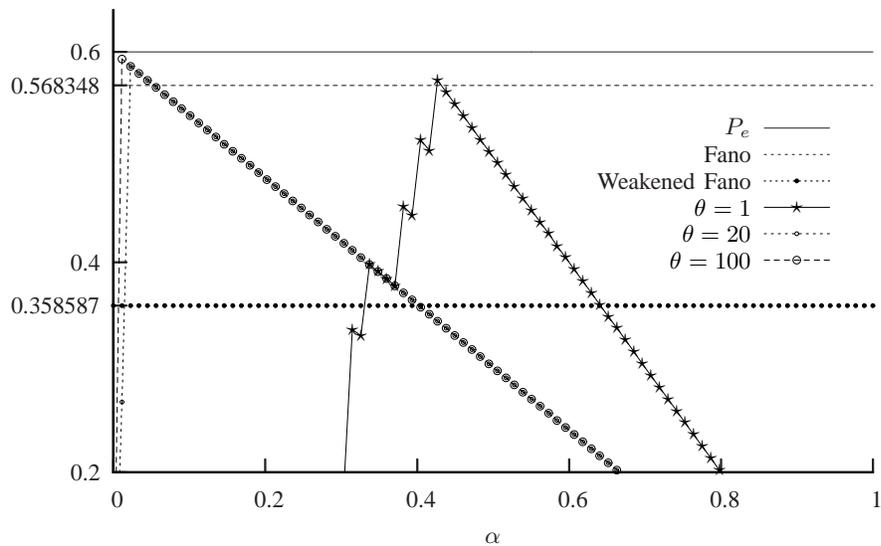}
\end{center}
\vspace{-0.2in}
\caption{Lower bounds on the minimum probability of error for Example~\ref{pv-example}: 
bound (\ref{gen-pv-bound}) versus $\alpha$ for $\theta=1,20,100$
and Fano's original and weakened bounds.}
\label{pv-examp-plot1}
\end{figure}

\begin{figure}[h!]
\begin{center}
\setlength{\unitlength}{0.120450pt}
\begin{picture}(3000,1800)(0,0)
\footnotesize
\thicklines \path(512,377)(553,377)
\put(471,377){\makebox(0,0)[r]{0.4}}
\thicklines \path(512,936)(553,936)
\put(471,936){\makebox(0,0)[r]{0.5}}
\thicklines \path(512,1352)(553,1352)
\put(471,1352){\makebox(0,0)[r]{0.574468}}
\thicklines \path(512,1494)(553,1494)
\put(471,1494){\makebox(0,0)[r]{0.6}}
\thicklines \path(638,265)(638,306)
\put(638,182){\makebox(0,0){ 2}}
\thicklines \path(889,265)(889,306)
\put(889,182){\makebox(0,0){ 4}}
\thicklines \path(1140,265)(1140,306)
\put(1140,182){\makebox(0,0){ 6}}
\thicklines \path(1391,265)(1391,306)
\put(1391,182){\makebox(0,0){ 8}}
\thicklines \path(1642,265)(1642,306)
\put(1642,182){\makebox(0,0){ 10}}
\thicklines \path(1893,265)(1893,306)
\put(1893,182){\makebox(0,0){ 12}}
\thicklines \path(2144,265)(2144,306)
\put(2144,182){\makebox(0,0){ 14}}
\thicklines \path(2395,265)(2395,306)
\put(2395,182){\makebox(0,0){ 16}}
\thicklines \path(2646,265)(2646,306)
\put(2646,182){\makebox(0,0){ 18}}
\thicklines \path(2897,265)(2897,306)
\put(2897,182){\makebox(0,0){ 20}}
\thicklines \path(512,1718)(512,265)(2897,265)
\put(1704,58){\makebox(0,0){$\theta$}}
\put(2108,783){\makebox(0,0)[r]{$P_e$}}
\thinlines \path(2149,783)(2354,783)
\thinlines \path(512,1494)(512,1494)(573,1494)(634,1494)(695,1494)(757,1494)(818,1494)(879,1494)(940,1494)(1001,1494)(1062,1494)(1124,1494)(1185,1494)(1246,1494)(1307,1494)(1368,1494)(1429,1494)(1490,1494)(1552,1494)(1613,1494)(1674,1494)(1735,1494)(1796,1494)(1857,1494)(1919,1494)(1980,1494)(2041,1494)(2102,1494)(2163,1494)(2224,1494)(2285,1494)(2347,1494)(2408,1494)(2469,1494)(2530,1494)(2591,1494)(2652,1494)(2714,1494)(2775,1494)(2836,1494)(2897,1494)
\put(2108,700){\makebox(0,0)[r]{maximized bound for $\theta=1$}}
\thinlines \dashline[90]{10}(2149,700)(2354,700)
\thinlines \dashline[90]{10}(512,1352)(512,1352)(573,1352)(634,1352)(695,1352)(757,1352)(818,1352)(879,1352)(940,1352)(1001,1352)(1062,1352)(1124,1352)(1185,1352)(1246,1352)(1307,1352)(1368,1352)(1429,1352)(1490,1352)(1552,1352)(1613,1352)(1674,1352)(1735,1352)(1796,1352)(1857,1352)(1919,1352)(1980,1352)(2041,1352)(2102,1352)(2163,1352)(2224,1352)(2285,1352)(2347,1352)(2408,1352)(2469,1352)(2530,1352)(2591,1352)(2652,1352)(2714,1352)(2775,1352)(2836,1352)(2897,1352)
\put(2108,617){\makebox(0,0)[r]{maximized bound for $\theta \ge 1$}}
\thinlines \dottedline{10}(2149,617)(2354,617)
\thinlines \dottedline{10}(512,1352)(512,1352)(573,1087)(634,820)(695,555)(757,484)(818,519)(879,558)(940,600)(1001,643)(1062,688)(1124,733)(1185,778)(1246,822)(1307,865)(1368,907)(1429,948)(1490,986)(1552,1023)(1613,1058)(1674,1091)(1735,1123)(1796,1152)(1857,1179)(1919,1204)(1980,1228)(2041,1250)(2102,1270)(2163,1289)(2224,1306)(2285,1322)(2347,1336)(2408,1350)(2469,1362)(2530,1374)(2591,1384)(2652,1394)(2714,1402)(2775,1410)(2836,1418)(2897,1424)
\put(512,1352){\makebox(0,0){$\star$}}
\put(573,1087){\makebox(0,0){$\star$}}
\put(634,820){\makebox(0,0){$\star$}}
\put(695,555){\makebox(0,0){$\star$}}
\put(757,484){\makebox(0,0){$\star$}}
\put(818,519){\makebox(0,0){$\star$}}
\put(879,558){\makebox(0,0){$\star$}}
\put(940,600){\makebox(0,0){$\star$}}
\put(1001,643){\makebox(0,0){$\star$}}
\put(1062,688){\makebox(0,0){$\star$}}
\put(1124,733){\makebox(0,0){$\star$}}
\put(1185,778){\makebox(0,0){$\star$}}
\put(1246,822){\makebox(0,0){$\star$}}
\put(1307,865){\makebox(0,0){$\star$}}
\put(1368,907){\makebox(0,0){$\star$}}
\put(1429,948){\makebox(0,0){$\star$}}
\put(1490,986){\makebox(0,0){$\star$}}
\put(1552,1023){\makebox(0,0){$\star$}}
\put(1613,1058){\makebox(0,0){$\star$}}
\put(1674,1091){\makebox(0,0){$\star$}}
\put(1735,1123){\makebox(0,0){$\star$}}
\put(1796,1152){\makebox(0,0){$\star$}}
\put(1857,1179){\makebox(0,0){$\star$}}
\put(1919,1204){\makebox(0,0){$\star$}}
\put(1980,1228){\makebox(0,0){$\star$}}
\put(2041,1250){\makebox(0,0){$\star$}}
\put(2102,1270){\makebox(0,0){$\star$}}
\put(2163,1289){\makebox(0,0){$\star$}}
\put(2224,1306){\makebox(0,0){$\star$}}
\put(2285,1322){\makebox(0,0){$\star$}}
\put(2347,1336){\makebox(0,0){$\star$}}
\put(2408,1350){\makebox(0,0){$\star$}}
\put(2469,1362){\makebox(0,0){$\star$}}
\put(2530,1374){\makebox(0,0){$\star$}}
\put(2591,1384){\makebox(0,0){$\star$}}
\put(2652,1394){\makebox(0,0){$\star$}}
\put(2714,1402){\makebox(0,0){$\star$}}
\put(2775,1410){\makebox(0,0){$\star$}}
\put(2836,1418){\makebox(0,0){$\star$}}
\put(2897,1424){\makebox(0,0){$\star$}}
\put(2251,617){\makebox(0,0){$\star$}}
\thicklines \path(512,1718)(512,265)(2897,265)
\end{picture}
\end{center}
\vspace{-0.2in}
\caption{Lower bounds on the minimum probability of error for Example~\ref{pv-example}:
bounds (\ref{pv-bound}) and (\ref{gen-pv-bound}) versus $\theta$ optimized over $\alpha$.}
\label{pv-examp-plot2}
\end{figure}
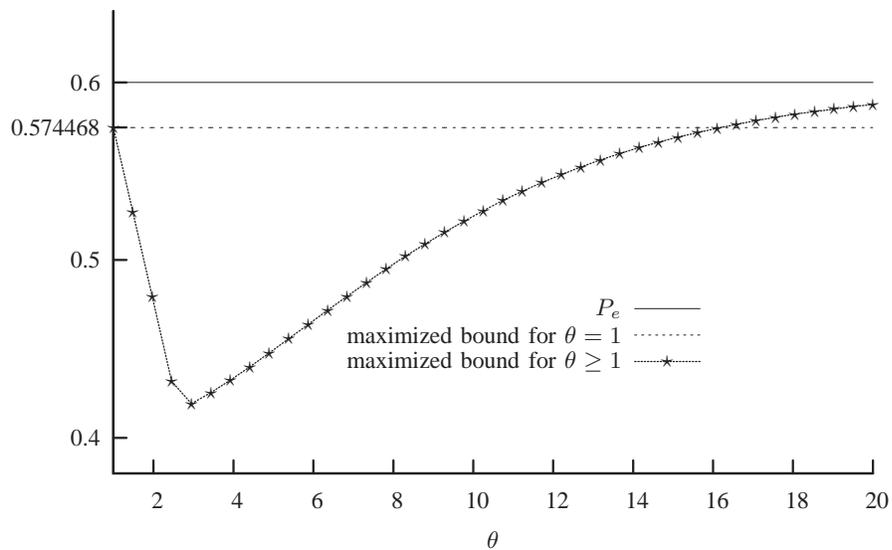

\subsection{Binary Erasure Channel}

Suppose that $X$ and $Y$ are respectively the channel input and output of a 
BEC with erasure probability $\ep$, where $\xc=\{0,1\}$ 
and $\yc=\{0,1,{\tt E}\}$. Let $\Pr[X=0]=1-p$ and $\Pr[X=1]=p$ with $0<p<1/2$.
Then, the MAP estimate of $X$ from $Y$ is given by 
$$e(y)= \begin{cases}
y & \text{if $y\in \{0,1\}$} \\
0 & \text{if $y={\tt E}$}
\end{cases}
$$
and the resulting error probability is $P_e=\ep p$.

Calculating bound (\ref{gen-pv-bound}) of Theorem \ref{improved-pv}
yields
\begin{multline}\label{BECeq}
(1-\alpha)P_{X,Y}\left\{(x,y)\in\xc\times\yc:~
P_{X|Y}^{(\theta)}(x|y)\leq \alpha\right\}\\=
\begin{cases}
0& {\rm if} \ 0\leq\alpha<\displaystyle\frac{p^\theta}{p^\theta+(1-p)^\theta}\\
\ep p(1-\alpha)& {\rm if} \ \displaystyle\frac{p^\theta}{p^\theta+(1-p)^\theta}\leq\alpha<\frac{(1-p)^\theta}{p^\theta+(1-p)^\theta}\\
\ep(1-\alpha)& {\rm if} \ \displaystyle\frac{(1-p)^\theta}{p^\theta+(1-p)^\theta}\leq\alpha<1.
\end{cases}
\end{multline}
Thus, taking $\theta\uparrow\infty$ and then $\alpha\downarrow 0$ in 
(\ref{BECeq}) results in the exact error probability $\ep p$. 
Note that in this example, the original Poor-Verd\'{u} bound (i.e., with $\theta=1$) 
also achieves the exact error probability $\ep p$ by choosing $\alpha=1-p$;
however this maximizing choice of $\alpha=1-p$ for the original bound
is a function of system's statistics
(here, the input distribution $p$) which is undesirable.
On the other hand, the generalized bound (\ref{gen-pv-bound}) can herein achieve 
its peak by systematically taking $\theta\uparrow\infty$ and then letting 
$\alpha\downarrow 0$. 

Furthermore, since in this example, $\ell(y)=1$ for every $y\in\{0,1,{\tt E}\}$,
we have that (\ref{condition}) holds; hence, by Theorem~\ref{tightness-theo}, 
\eqref{exact-pe} yields
\begin{eqnarray*}
P_e
&=&\lim_{\theta\rightarrow\infty}P_{X,Y}\left\{(x,y)\in\xc\times\yc:~P_{X|Y}^{(\theta)}(x|y)\leq\alpha\right\}\\
&=&\ep p\mbox{ for }0\leq\alpha<1,
\end{eqnarray*}
where the last equality follows directly from \eqref{BECeq} without the
$(1-\alpha)$ factor.

\subsection{Multiple-Use BEC}

We now extend the previous example of the single-use BEC to the case
of using the memoryless BEC $n$ times with an input $n$-tuple $X^n=(X_1,\cdots,X_n)$
of independent and identically distributed
(i.i.d.) random variables $X_i$ with $\Pr[X_i=1]=p$, where $0<p<1/2$. 
Here again we determine the MAP estimation of $X^n$ by observing the channel output $Y^n$.
For a received output $n$-tuple $y^n$,
\begin{equation}
P_{X^n|Y^n}(x^n|y^n)= \begin{cases}
(1-p)^{d_{0{\tt E}}(x^n,y^n)}p^{d_{1{\tt E}}(x^n,y^n)}& \text{if $d_{01}(x^n,y^n)=d_{10}(x^n,y^n)=0$} \\
0&\text{otherwise}
\end{cases}
\label{8b}
\end{equation}
where $d_{0{\tt E}}(x^n,y^n)$ is the number of occurrences of $(x_j,y_j)=(0,{\tt E})$ 
in $(x^n,y^n)$, and the other $d$-terms are defined similarly.
The above equation indicates that for a given $y^n$, $P_{X^n|Y^n}(x^n|y^n)$ always peaks
for $d_{1{\tt E}}(x^n,y^n)=0$ since $0<p<1/2$. 
Thus the MAP estimator $e(y^n)$ replaces all erasures in $y^n$ by 0's
while keeping the 0's and 1's in $y^n$ unchanged (e.g., if $n=5$ and $y^n=(0,0,{\tt E},{\tt E},1)$,
then $e(y^n)=(0,0,0,0,1)$). The resulting probability of error is given by
\begin{eqnarray*}
P_e&=&1-\sum_{y^n\in\yc^n}P_{X^n}(e(y^n))P_{Y^n|X^n}(y^n|e(y^n))\\
&=&1-\sum_{k=0}^n\sum_{i=0}^{n-k}\binom nk\binom{n-k}i(1-p)^{n-i}p^i\ep^k(1-\ep)^{n-k}\\
&=&1-(1-\ep p)^n
\end{eqnarray*}
where $k$ is the number of erasures ${\tt E}$ in $y^n$
and $i$ is the number of 1's in $y^n$.

On the other hand, we directly obtain from \eqref{8b}
that condition (\ref{condition}) holds (or equivalently
condition (\ref{ell-cond}), i.e., $\ell(y^n)=1$ with probability one in $P_{Y^n}$). 
We can then apply Theorem~\ref{tightness-theo} to obtain from \eqref{exact-pe} that
\begin{eqnarray*}
P_e&=& 1-(1-\ep p)^n \\
&=&\lim_{\theta\rightarrow\infty}P_{X^n,Y^n}\left\{(x^n,y^n)\in\xc\times\yc:~P_{X^n|Y^n}^{(\theta)}(x^n|y^n)\leq\alpha\right\}.
\end{eqnarray*}

We next consider the case of $p=1/2$, i.e,. the input $X^n$ is uniformly distributed. 
In this case, \eqref{8b} yields that
$$h_1(y^n)=h_2(y^n)=\cdots=h_{2^k}(y^n)=2^{-k}$$
and
$$h_{2^k+1}(y^n)=h_{2^k+2}(y^n)=\cdots=h_{2^n}(y^n)=0$$
where $k$ is the number of erasures ${\tt E}$ in $y^n$.
Thus $\ell(y^n)=2^k$ and Theorem~\ref{tightness-theo} no longer holds.
Furthermore, $h_j^{(\theta)}(y^n)=h_j(y^n)$ for every $\theta\geq 1$; this
implies that for the uniform-input multiple-use BEC, 
the generalized bound (\ref{gen-pv-bound}) does not improve upon
the original Poor-Verd\'{u} bound (\ref{pv-bound}).

\subsection{Binary Input Observed in Gaussian Noise}\label{binary-input-awgn}

We herein consider an example with a continuous observation alphabet
$\yc =\mathbb{R}$, where $\mathbb{R}$ is the set of real numbers. Specifically, 
let the observation be given by $Y=X+N$, where $X$ is uniformly distributed over 
$\xc=\{-1,+1\}$ and $N$ is a zero-mean Gaussian random variable with variance $\sigma^2$.
Assuming that $X$ and $N$ are independent from each other, then
\begin{eqnarray}
P_{X|Y}(x|y)&=&\frac{\frac 12\cdot\frac 1{\sqrt{2\pi\sigma^2}}\exp\{-\frac{(y-x)^2}
{2\sigma^2}\}}{\frac 12\cdot\frac 1{\sqrt{2\pi\sigma^2}}\exp\{-\frac{(y-1)^2}
{2\sigma^2}\}
+\frac 12\cdot\frac 1{\sqrt{2\pi\sigma^2}}\exp\{-\frac{(y+1)^2}{2\sigma^2}\}} \nonumber\\
&=&\frac{\exp\{\frac{xy}{\sigma^2}\}}{\exp\{\frac{y}{\sigma^2}\}
+\exp\{-\frac{y}{\sigma^2}\}} 
=\frac{1}{1 +\exp\{-\frac{2xy}{\sigma^2}\}} \label{cond-x-y}
\end{eqnarray}
for $x\in \{-1,+1\}$, $y \in \mathbb{R}$. This directly implies that the MAP estimate
of $X$ from $Y$ is given by $e(y)=+1$ if $y>0$ and $e(y)=-1$ if $y\le0$.
The resulting error probability is $P_e = \Phi(-1/\sigma)$,
where $\Phi(z) \triangleq \frac{1}{\sqrt{2\pi}} \int_{-\infty}^{z}\exp{-\frac{t^2}{2}} dt$ 
is the cdf of the standard (zero-mean unit-variance) 
Gaussian distribution.
Furthermore, since $x\in\{-1,+1\}$, we have
$$P_{X|Y}^{(\theta)}(x|y)
=\frac{\left(\frac{\exp\{\frac{xy}{\sigma^2}\}}{\exp\{\frac{y}{\sigma^2}\}
+\exp\{-\frac{y}{\sigma^2}\}}\right)^\theta}{\left(
\frac{\exp\{\frac{y}{\sigma^2}\}}{\exp\{\frac{y}{\sigma^2}\}
+\exp\{-\frac{y}{\sigma^2}\}}\right)^\theta+\left(\frac{\exp\{\frac{-y}{\sigma^2}\}}
{\exp\{\frac{y}{\sigma^2}\}
 +\exp\{-\frac{y}{\sigma^2}\}}\right)^\theta}
=\frac{1}{1 +\exp\{-\frac{2xy}{\sigma^2/\theta}\}}, 
$$
and the generalized Poor-Verd\'{u} bound (\ref{gen-pv-bound}) yields
\begin{eqnarray}
P_e&\geq&(1-\alpha)P_{X,Y}\left\{(x,y)\in\xc\times\yc~:~
P_{X|Y}^{(\theta)}(x|y)\leq \alpha\right\}\nonumber\\
&=&(1-\alpha)P_{X}(-1)
\int_{y\in\mathbb{R}~:~
\frac{1}{1 +\exp\left\{\frac{2y}{\sigma^2/\theta}\right\}}\leq \alpha}
\frac 1{\sqrt{2\pi\sigma^2}}\exp\left\{-\frac{(y+1)^2}{2\sigma^2}\right\}dy\nonumber\\
&&+(1-\alpha)P_{X}(1)\int_{y\in\mathbb{R}~:~ \frac{1}{1
 +\exp\left\{-\frac{2y}{\sigma^2/\theta}]\right\}}\leq \alpha}\frac 1{\sqrt{2\pi
\sigma^2}}\exp\left\{-\frac{(y-1)^2}{2\sigma^2}\right\}dy\nonumber\\
&=&\frac{(1-\alpha)}2\int_{\frac{\sigma^2}{2\theta}\log\left(\frac 1\alpha-1\right)}^\infty
\frac 1{\sqrt{2\pi\sigma^2}}\exp\left\{-\frac{(y+1)^2}{2\sigma^2}\right\}dy\nonumber\\
 &&+\frac{(1-\alpha)}2\int_{-\infty}
^{-\frac{\sigma^2}{2\theta}\log\left(\frac 1\alpha-1\right)}\frac 1{\sqrt{2\pi
\sigma^2}}\exp\left\{-\frac{(y-1)^2}{2\sigma^2}\right\}dy\nonumber\\
 &=&(1-\alpha)\int_{-\infty}^{-\frac{\sigma^2}{2\theta}\log\left(\frac 1\alpha-1
\right)-1}
\frac 1{\sqrt{2\pi\sigma^2}}\exp\left\{-\frac{t^2}{2\sigma^2}\right\}dt\nonumber\\
&=&(1-\alpha)\Phi\left(-\frac{\sigma}{2\theta}\log\left(\frac 1\alpha-1\right)-
\frac 1\sigma\right). \label{rhs-theta}
\end{eqnarray}

\bigskip

Now taking the limits $\theta\uparrow\infty$ followed by $\alpha\downarrow 0$
for the right-hand side term in \eqref{rhs-theta} yields exactly 
$\Phi\left(-\frac 1\sigma\right)=P_e$; hence the generalized Poor-Verd\'{u} bound 
(\ref{gen-pv-bound}) is asymptotically tight. The bound is illustrated in Fig.~\ref{awgn-examp}
for $\sigma=0.429858$ which gives $P_e=0.01$. It can be seen that for $\theta=100$
and $\alpha \downarrow 0$, bound (\ref{gen-pv-bound}) is quite close to $P_e$. 
Finally note that \eqref{cond-x-y} directly ascertains that condition \eqref{condition}
of Theorem~\ref{tightness-theo} holds; thus $P_e$ is given by \eqref{exact-pe}.
\begin{figure}[h!]
\begin{center}
\input{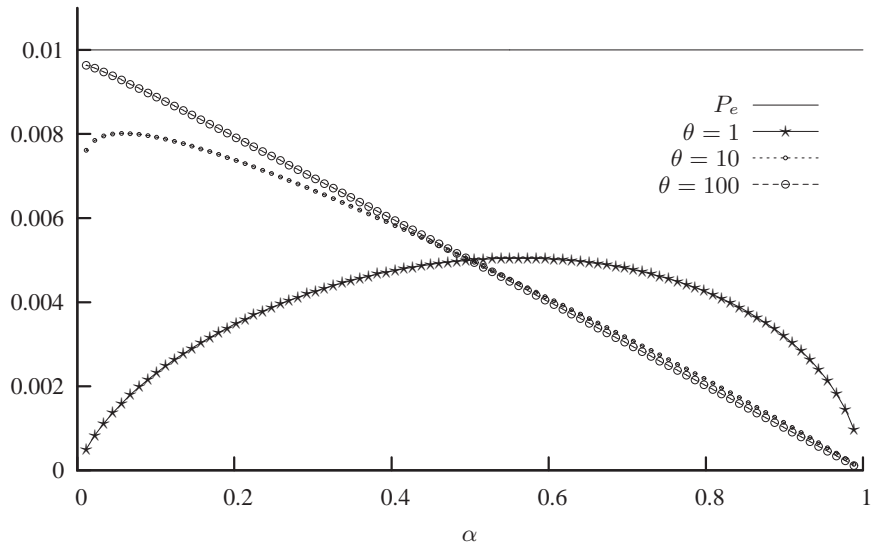}
\end{center}
\vspace{-0.2in}
\caption{Example~\ref{binary-input-awgn}: 
bound (\ref{gen-pv-bound}) versus $\alpha$ for $\theta=1,10,100$;
$\sigma=0.429858$ and $P_e=0.01$.}
\label{awgn-examp}
\end{figure}

\section{Channel reliability function}\label{channel-reliability}

We next use the results of Section~\ref{gen-pv-bound-dec} to study the channel
reliability function.

\subsection{Preliminaries}\label{prelim}

Consider an arbitrary input process {\bf X}  defined by a sequence of
finite-dimensional distributions \cite{VH94,Hanbook} $${\bf
X}\triangleq
\left\{X^n=\left(X^{(n)}_1,\cdots,X^{(n)}_n\right)\right\}^{\infty}_{n=1}
.$$
Denote by $${\bf
Y}\triangleq
\left\{Y^n=\left(Y^{(n)}_1,\cdots,Y^{(n)}_n\right)\right\}^{\infty}_{n=1}
$$ the corresponding output process induced by {\bf X} via a general channel with memory
$${\bf W}
\triangleq\{
W^n=P_{Y^n|X^n}: {\cal X}^n \rightarrow {\cal Y}^n\}_{n=1}^{\infty}$$
which is an arbitrary sequence of $n$-dimensional conditional distributions
from ${\cal X}^n$ to ${\cal Y}^n$, where $\cal X$ and $\cal Y$ are the input
and output alphabets, respectively. 

We assume throughout this section
that $\cal X$ is finite and that $\cal Y$ is arbitrary.
Note though that for the sake of clarity, we adopt the notations of a
discrete probability space for $\yc$ with the usual caveats (such as replacing summations with 
integrals and working with the appropriate probability measures, e.g., 
see \cite[Remark~3.2.1]{Hanbook}). 
 
\bigskip

\begin{definition}[Channel block code]
{\rm An $(n,M)$ code $\code_n$ for channel $\Wb$ with input alphabet $\cal X$ and
output alphabet $\cal Y$ is a pair of maps $(f,g)$,
where 
$$f: \{1,2,\cdots,M\}\rightarrow {\cal X}^n$$
is the encoding function yielding codewords $f(1), f(2), \cdots, f(M) \in \xc^n$,
each of length $n$, and $$g: {\cal Y}^n \rightarrow \{1,2,\cdots,M\}$$
is the decoding function.
The set of the $M$ codewords
is called the codebook and we also usually write
$\code_n = \{f(1), f(2), \cdots, f(M)\}$ to list the codewords.
}
\end{definition}

The set $\{1,2,\ldots,M\}$ is called the message set and we assume
that a message $V$ is drawn from the message set according 
to the uniform distribution. To convey message $V$ over channel $\Wb$, 
its corresponding codeword $X^n=f(V)$ is sent over the channel.
Then $Y^n$ is received at the channel output
and $\hat{V}=g(Y^n)$ is yielded as the message estimate.

The code's average error probability (or average probability of decoding error) is given by
$$P_e(\code_n) \triangleq
\frac 1 M \sum _{m=1}^M \sum_{\{y^n: g(y^n)
\neq m\}}W^n(y^n|f(m)).$$
Since message $V$ is uniformly distributed over $\{1,2,\ldots,M\}$, we have that
$P_e(\code_n)=\Pr[V \neq \hat{V}]$.

\bigskip

\begin{definition}[Channel reliability function \cite{PV95}]\label{pv-exp-def}
{\rm
For any $R  > 0$, define the channel reliability function $E^\ast(R)$
for a channel $\Wb$ as
the largest scalar $\beta>0$ such that
there exists a sequence of $\code_n=(n,M_n)$ codes with\footnote{
When no $\beta>0$ satisfies the definition, we simply set $E^\ast(R)=0$.
}
$$
\beta\leq\liminf_{n\rightarrow\infty}-\frac 1n\log P_e(\code_n)
$$
and
\begin{equation}
\label{eq:13}
R < \liminf_{n\rightarrow\infty}\frac 1n\log M_n.
\end{equation}
}
\end{definition}

\bigskip

\begin{obs}{\rm \
We have adopted the above definition of channel reliability function
from \cite{PV95} for the sake of consistency. Note that
this definition is not exactly identical to the traditional
definition of the channel reliability function.
If $P_{e,\text{min}}(n,R)$ denotes the probability of error of 
the best $(n,\lceil 2^{nR}\rceil)$ code (i.e., the code with
smallest error probability) for channel $\Wb$, then
the channel's reliability function is traditionally defined as
\footnote{The limit supremum is also commonly used instead of the limit infimum
in the definition of $E(R)$, e.g., see \cite[p.~160]{gallager}.
We could have also used the limit supremum in the inequality on
$\beta$ in Definition~\ref{pv-exp-def}; in that case the results of this section
would still hold by replacing $\liminf_n$ 
with $\limsup_n$ in Theorems~\ref{exp-upper-bound}
and~\ref{awgn-exact-exp-theo} and Corollary~\ref{exact-exp-cor}.}
$$E(R)=\liminf_{n\rightarrow\infty}-\frac 1n\log P_{e,\text{min}}(n,R).$$ 
However, the following relation can be shown between
$E^\ast(R)$ and $E(R)$:
$$E(R)\geq E^\ast(R)\geq \lim_{\delta\downarrow 0}E(R+\delta).$$
Thus the above two definitions are equivalent
except possibly for discontinuity rate points (of which there are at most
countably many as $E^\ast(R)$ and $E(R)$ are non-increasing in $R$).
}
\end{obs}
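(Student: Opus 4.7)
The claim consists of two inequalities; I will handle them separately, and both reduce to unpacking the definitions plus an elementary expurgation argument, so no appeal to the earlier Theorems~\ref{improved-pv} or~\ref{tightness-theo} is needed.

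For $E(R)\geq E^\ast(R)$, I fix any $\beta$ with $0<\beta<E^\ast(R)$ and let $\{\code_n\}$ be a witnessing sequence of $(n,M_n)$ codes from Definition~\ref{pv-exp-def}, so that $\liminf_n-\frac 1n\log P_e(\code_n)\geq\beta$ and $\liminf_n\frac 1n\log M_n>R$. The \emph{strict} inequality on the rate implies that for all sufficiently large $n$ we have $M_n\geq\lceil e^{nR}\rceil$, so I expurgate $\code_n$ by retaining the $\lceil e^{nR}\rceil$ codewords with smallest individual conditional error probabilities. Since the mean of the $K$ smallest of $M_n$ non-negative numbers is no larger than their overall mean, this expurgated $(n,\lceil e^{nR}\rceil)$ code has average error probability at most $P_e(\code_n)$, hence $P_{e,\text{min}}(n,R)\leq P_e(\code_n)$ eventually. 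Taking $-\frac 1n\log$ and then $\liminf_n$ yields $E(R)\geq\beta$, and letting $\beta\uparrow E^\ast(R)$ completes the first inequality.

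For $E^\ast(R)\geq\lim_{\delta\downarrow 0}E(R+\delta)$, I fix $\delta>0$ and feed the sequence of \emph{best} $(n,\lceil e^{n(R+\delta)}\rceil)$ codes---those achieving $P_{e,\text{min}}(n,R+\delta)$---directly into the definition of $E^\ast(R)$. Such a sequence trivially satisfies $\frac 1n\log M_n\geq R+\delta$ for every $n$, so $\liminf_n\frac 1n\log M_n\geq R+\delta>R$ and condition~(\ref{eq:13}) holds; meanwhile $\liminf_n-\frac 1n\log P_e(\code_n)=E(R+\delta)$ by definition. Hence $\beta=E(R+\delta)$ is admissible and $E^\ast(R)\geq E(R+\delta)$. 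The monotonicity of $E(\cdot)$ guarantees that $\lim_{\delta\downarrow 0}E(R+\delta)$ exists (as a supremum over $\delta>0$), and letting $\delta\downarrow 0$ yields the second inequality.

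The only mildly delicate point is the interplay between the \emph{strict} rate inequality in Definition~\ref{pv-exp-def} and the expurgation argument: it is precisely this strictness that allows passing from $M_n$ codewords down to exactly $\lceil e^{nR}\rceil$ of them for $n$ large. The rest is routine $\liminf$ bookkeeping; the discontinuity points of $E(\cdot)$ are exactly where a gap between $E^\ast$ and $E$ could in principle open up, and the right-continuity bound $E^\ast(R)\geq\lim_{\delta\downarrow 0}E(R+\delta)$ quantifies precisely this potential gap.
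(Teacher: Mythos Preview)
Your argument is correct. The paper itself does not supply a proof of this observation---it merely asserts that the relation ``can be shown''---so there is nothing to compare against; your two-step treatment (expurgation from a witnessing sequence for $E(R)\geq E^\ast(R)$, and feeding the optimal rate-$(R+\delta)$ codes back into Definition~\ref{pv-exp-def} for the other direction) is the natural and standard way to establish the sandwich, and every step is sound.

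One very minor remark: in the second inequality you tacitly assume that a best $(n,\lceil e^{n(R+\delta)}\rceil)$ code \emph{exists} for each $n$. This is indeed true here because $\xc$ is finite (so there are only finitely many encoders, and for each encoder the MAP decoder is optimal), but it is worth a half-sentence since the observation alphabet $\yc$ is arbitrary. Also, your use of $\lceil e^{nR}\rceil$ rather than the paper's $\lceil 2^{nR}\rceil$ is the correct choice given the paper's stated natural-logarithm convention; the $2^{nR}$ in the statement appears to be a typographical slip.
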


\bigskip

\begin{definition}[\cite{VH94}]
{\rm
Given that $Y^n$ is the output of channel $W^n=P_{Y^n|X^n}$
due to input $X^n$ with distribution $P_{X^n}$, the channel information density
is defined as
\begin{equation}
i_{X^n W^n}(x^n;y^n) \triangleq \log \frac {W^n(y^n|x^n)}{P_{Y^n}(y^n)}
= \log \frac{P_{Y^n|X^n}(y^n|x^n)}{\sum_{\hat x^n \in \xc^n} 
P_{X^n}(\hat x^n) P_{Y^n|X^n}(y^n|\hat x^n)}
\label{info-density}
\end{equation}
for $(x^n,y^n) \in \xc^n \times \yc^n$.
}
\end{definition}

\bigskip

\begin{definition}
{\rm 
Fix $R>0$. For an input $\bf X$ and a channel $\bf W$, 
\begin{equation}
\piund_\Xb(R)~\triangleq~
\liminf_{n \rightarrow \infty} -\frac 1n \log 
P_{X^nW^n}\left\{(x^n,y^n)\in\xc^n\times\yc^n: \frac 1ni_{X^nW^n}(x^n;y^n)\leq R\right\}
\label{large-dev-spectrum}
\end{equation}
is called a large-deviation rate function for the normalized information density 
$\frac{1}{n}i_{X^nW^n}(\cdot,\cdot)$.
}
\end{definition}

\bigskip

\begin{prop}[Poor-Verd\'{u} upper bound to $E^\ast(R)$]
{\rm 
For a given channel $\Wb$, 
its reliability function $E^\ast(R)$ satisfies
\cite[Eq.~(14)]{PV95}, \cite[Theorem~1]{pv02}
\begin{equation}
E^\ast(R) \leq \sup_{\bf X}{\pi}_{\bf X}(R)
\label{pv-exp} 
\end{equation}
for any $R>0$, where
$\piund_\Xb(R)$ is the large-deviation rate function for $\frac{1}{n}i_{X^nW^n}(\cdot,\cdot)$
as defined in \eqref{large-dev-spectrum}.

Furthermore, the bound in \eqref{pv-exp} can be slightly tightened
by restricting the supremum operation over a smaller set of inputs
\cite[Corollary~1]{pv02}:
\begin{equation}
E^\ast(R) \leq E_{\text{PV}}(R) \triangleq \sup_{\Xb\in\qc(R)}\piund_\Xb(R),
\label{tightened-pv-exp}
\end{equation}
for any $R > 0$,
where
\begin{multline}
\qc(R)\triangleq\bigg\{\Xb~:~
{\rm Each}\ X^n\ {\rm in}\ \Xb\ {\rm is\ uniformly\ distributed\ over\ its \
support}\
\scc(X^n),\\
{\rm and}\ R < \liminf_{n\rightarrow\infty}\frac
1n\log|\scc(X^n)|\bigg\}.
\label{input-set}
\end{multline}
}
\end{prop}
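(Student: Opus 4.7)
The plan is to apply the original Poor-Verd\'u bound (Theorem~\ref{improved-pv} with $\theta=1$) to a code with uniform message distribution, converting the posterior-probability threshold into an information-density threshold and then passing to the asymptotic large-deviation rate. Since Definition~\ref{pv-exp-def} requires a code sequence $\{\code_n\}_{n\geq 1}$ with $R<\liminf_n \frac{1}{n}\log M_n$ and $\beta\leq \liminf_n -\frac{1}{n}\log P_e(\code_n)$ to achieve $E^\ast(R)\geq \beta$, it suffices to show that for any such sequence, $\beta\leq \piund_\Xb(R)$ for some $\Xb\in\qc(R)$.

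First, I would fix such a code sequence $\{\code_n\}$ and take $X^n$ to be uniformly distributed over the codebook $\scc(X^n)=\{f(1),\ldots,f(M_n)\}$; by construction this $\Xb=\{X^n\}$ lies in $\qc(R)$, so the tightened bound \eqref{tightened-pv-exp} (which implies \eqref{pv-exp} since $\qc(R)$ is a subset of all input processes) will follow. Under this uniform choice, the posterior reduces to
$$P_{X^n|Y^n}(x^n|y^n)=\frac{P_{X^n}(x^n)W^n(y^n|x^n)}{P_{Y^n}(y^n)}=\frac{1}{M_n}\exp\bigl\{i_{X^nW^n}(x^n;y^n)\bigr\}$$
for every $x^n\in\scc(X^n)$. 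Hence the event $\{P_{X^n|Y^n}(x^n|y^n)\leq \alpha\}$ is exactly $\{\frac{1}{n}i_{X^nW^n}(x^n;y^n)\leq \frac{1}{n}\log(M_n\alpha)\}$, which is the tool that links the multihypothesis bound to the information-density spectrum.

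Next, note that $P_e(\code_n)$ coincides with the optimal MAP error probability for estimating $X^n$ from $Y^n$ under this uniform prior (this is why the average-error criterion is the right one). Setting $\alpha_n\triangleq e^{nR}/M_n$, which lies in $(0,1)$ for all sufficiently large $n$ because $\liminf_n\frac{1}{n}\log M_n>R$, Theorem~\ref{improved-pv} with $\theta=1$ yields
$$P_e(\code_n)\;\geq\;(1-\alpha_n)\,P_{X^nW^n}\!\left\{(x^n,y^n):\tfrac{1}{n}i_{X^nW^n}(x^n;y^n)\leq R\right\}.$$
Taking $-\frac{1}{n}\log$ on both sides, using $-\frac{1}{n}\log(1-\alpha_n)\to 0$ (since $\alpha_n\to 0$ exponentially), and passing to $\liminf_{n\to\infty}$ gives
$$\beta\;\leq\;\liminf_{n\to\infty}-\tfrac{1}{n}\log P_e(\code_n)\;\leq\;\piund_\Xb(R).$$
Taking the supremum over all such input processes $\Xb\in\qc(R)$ delivers \eqref{tightened-pv-exp}, and enlarging the supremum to all $\Xb$ gives \eqref{pv-exp}.

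The main subtlety, and the only step requiring care, is the choice of $\alpha_n$: it must be strictly less than $1$ (to invoke the Poor-Verd\'u bound meaningfully) yet large enough that the threshold $\frac{1}{n}\log(M_n\alpha_n)$ equals $R$ exactly, so that the right-hand probability is precisely the one appearing in the definition of $\piund_\Xb(R)$. The strict inequality $R<\liminf_n\frac{1}{n}\log M_n$ in \eqref{eq:13} is what makes this work, and it is also what forces the multiplicative factor $(1-\alpha_n)$ to be asymptotically negligible on the exponential scale; without this strict gap the bound would degenerate. A minor additional point is that if no $\beta>0$ satisfies the definition then $E^\ast(R)=0$ and the inequality is trivial, so the above argument suffices.
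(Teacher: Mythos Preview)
The paper does not actually prove this proposition; it is quoted as a known result from \cite{PV95} and \cite{pv02}. Your argument is correct and, in fact, mirrors the paper's own proof of the more general Theorem~\ref{exp-upper-bound} specialized to $\theta=1$: uniform $X^n$ over the codebook, apply the Poor--Verd\'u bound, convert the posterior threshold to an information-density threshold, take $-\frac{1}{n}\log$ and $\liminf$, then supremize over $\qc(R)$.

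The only methodological difference worth noting is the choice of $\alpha$. You set $\alpha_n=e^{nR}/M_n$ so that the threshold $\frac{1}{n}\log(M_n\alpha_n)$ equals $R$ exactly; the paper instead fixes $\alpha=e^{-n\gamma}$ for an arbitrarily small $\gamma>0$, obtains the threshold $\frac{1}{n}\log M_n-\gamma$, and then argues by monotonicity of the cdf that replacing this by $R$ can only loosen the upper bound (since $R<\liminf_n\frac{1}{n}\log M_n-\gamma$ for suitably small $\gamma$). Your route is a bit more direct and avoids the extra monotonicity step; the paper's route has the minor advantage that $\alpha$ is a fixed deterministic sequence independent of the particular code, which makes the subsequent $\sup$--$\liminf$ manipulations slightly more uniform. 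Either way the argument goes through, and your handling of the ``subtlety'' about $\alpha_n\in(0,1)$ and $(1-\alpha_n)$ being exponentially negligible is correct.
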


\subsection{Upper Bounds for the Channel Reliability Function}

Using Theorem~\ref{improved-pv}, we provide a lower bound for the probability of decoding error
of any $(n,M)$ channel code and establish two information-spectrum
upper bounds for the channel reliability
function.

\bigskip

\begin{theorem}\label{exp-upper-bound}{\rm Every $\code_n=(n,M)$ code for channel $\Wb$
has its probability of decoding error satisfying
\begin{equation}\label{codebound}
P_e(\code_n)\geq\left(1-\alpha\right)
P_{X^nW^n}\left\{(x^n,y^n)\in\xc^n\times\yc^n: j_{X^nW^n}^{(\theta)}(x^n;y^n) \leq \log(M\alpha) \right\}
\end{equation}
for every $\alpha\in[0,1]$ and $\theta\geq 1$, where channel input 
$X^n$ places probability mass $1/M$ on
each codeword of $\code_n$ and
\begin{equation}\label{ti}
j_{X^nW^n}^{(\theta)}(x^n;y^n)
\triangleq \log\displaystyle\frac{P_{Y^n|X^n}^\theta(y^n|x^n)}
{\sum_{\hat x^n\in\xc^n}P_{X^n}(\hat x^n)P_{Y^n|X^n}^\theta(y^n|\hat x^n)}.
\end{equation}
Furthermore, the channel's reliability function satisfies
\begin{eqnarray}\label{rebound-theta}
E^\ast(R)
&\leq& \sup_{\boldsymbol X\in{\cal Q}(R)}\liminf_{n\rightarrow\infty}-\frac 1n\log
P_{X^nW^n}\left\{(x^n,y^n)\in\xc^n\times\yc^n: \frac{1}{n}j_{X^nW^n}^{(\theta)}(x^n;y^n)
\leq R\right\} \nonumber\\
&\triangleq& E_{\text{PV}}^{(\theta)}(R)
\end{eqnarray}
for every $R>0$ and $\theta\geq 1$, and
\begin{eqnarray}\label{rebound}
E^\ast(R)
&\leq&\sup_{\boldsymbol X\in{\cal Q}(R)}\liminf_{n\rightarrow\infty}\lim_{\theta\rightarrow\infty}-\frac 1n\log
P_{X^nW^n}\left\{(x^n,y^n)\in\xc^n\times\yc^n: \frac{1}{n}j_{X^nW^n}^{(\theta)}(x^n;y^n)\leq R
\right\} \nonumber\\
&\triangleq& \bar{E}_{\text{PV}}(R)
\end{eqnarray}
for every $R>0$, where the set $\qc(R)$ is given in \eqref{input-set}.
}\end{theorem}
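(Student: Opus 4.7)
The plan is to reduce channel decoding to the multihypothesis testing setup of Theorem~\ref{improved-pv} and then translate the resulting posterior-based bound into information-density form. Take $X = V$, the transmitted message, which is uniform on $\{1,2,\ldots,M\}$, and take the observation to be $Y^n$. Since MAP decoding minimizes error probability, any decoder $g$ of the code $\code_n = \{f(1),\ldots,f(M)\}$ satisfies $P_e(\code_n) \geq P_e^{\text{MAP}}$, so Theorem~\ref{improved-pv} gives
\begin{equation*}
P_e(\code_n) \;\geq\; (1-\alpha)\,P_{V,Y^n}\!\left\{(m,y^n): P_{V|Y^n}^{(\theta)}(m|y^n) \leq \alpha\right\}.
\end{equation*}
The next step is a direct computation: because $V$ is uniform on the codebook, $P_{V|Y^n}(m|y^n)$ is proportional to $P_{Y^n|X^n}(y^n|f(m))$, and a short calculation shows
\begin{equation*}
j_{X^nW^n}^{(\theta)}(f(m);y^n) \;=\; \log M \;+\; \log P_{V|Y^n}^{(\theta)}(m|y^n),
\end{equation*}
so the events $\{P_{V|Y^n}^{(\theta)}(V|Y^n) \leq \alpha\}$ and $\{j_{X^nW^n}^{(\theta)}(X^n;Y^n) \leq \log(M\alpha)\}$ coincide under the joint law $P_{X^nW^n}$ (where $X^n = f(V)$). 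This produces \eqref{codebound}.

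To obtain \eqref{rebound-theta}, fix $\theta \geq 1$ and pass to asymptotics. Let $\{\code_n\}$ be a sequence of $(n,M_n)$ codes with exponent $\beta$ and $R < \liminf_n \frac{1}{n}\log M_n$. For $n$ large, the choice $\alpha_n = e^{nR}/M_n$ lies in $(0,1]$ and tends to $0$, while $\log(M_n\alpha_n) = nR$, so the set in \eqref{codebound} is precisely $\{\tfrac{1}{n}j^{(\theta)} \leq R\}$. Taking $-\frac{1}{n}\log$ of \eqref{codebound} and then $\liminf_n$, the term $-\frac{1}{n}\log(1-\alpha_n)$ vanishes, giving
\begin{equation*}
\beta \;\leq\; \liminf_{n\to\infty}\,-\frac{1}{n}\log P_{X^nW^n}\!\left\{\tfrac{1}{n}j_{X^nW^n}^{(\theta)}(x^n;y^n)\leq R\right\}.
\end{equation*}
Because the code's uniform input on its support satisfies $\Xb \in \qc(R)$, taking the supremum over $\qc(R)$ yields $E^*(R) \leq E_{\text{PV}}^{(\theta)}(R)$.

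For the sharper bound \eqref{rebound}, the idea is to send $\theta \to \infty$ before $n \to \infty$. For each fixed $n$ and $\alpha_n$, the probability on the right side of \eqref{codebound} is monotone-nondecreasing and bounded in $\theta$ (as in \eqref{monotone-prop} and the dominated-convergence step \eqref{dom} of the proof of Theorem~\ref{tightness-theo}), so $\lim_{\theta\to\infty} P_{X^nW^n}\{j^{(\theta)} \leq \log(M_n\alpha_n)\}$ exists, and \eqref{codebound} with $\alpha_n = e^{nR}/M_n$ becomes
\begin{equation*}
P_e(\code_n) \;\geq\; (1-\alpha_n)\,\lim_{\theta\to\infty} P_{X^nW^n}\!\left\{\tfrac{1}{n}j_{X^nW^n}^{(\theta)}(x^n;y^n) \leq R\right\}.
\end{equation*}
Using continuity of $-\log(\cdot)$ on $(0,1]$ to commute the $\theta$-limit with the logarithm, taking $\liminf_n$, and then the supremum over $\Xb \in \qc(R)$, produces $E^*(R) \leq \bar{E}_{\text{PV}}(R)$. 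The main obstacle will be a clean justification that the $\theta \to \infty$ operation can be brought inside both the logarithm and the $\liminf_n$, but this follows from the monotonicity of the probability in $\theta$ (since $h_1^{(\theta)}$ is nondecreasing, as observed in \eqref{monotone-prop}) together with boundedness, exactly paralleling the treatment in Observation~1 and Theorem~\ref{tightness-theo}.
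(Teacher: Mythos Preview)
Your approach is essentially the same as the paper's: both apply Theorem~\ref{improved-pv} with $X^n$ uniform over the codebook, rewrite the tilted posterior as $P_{X^n|Y^n}^{(\theta)}(x^n|y^n)=\tfrac{1}{M}\exp\{j_{X^nW^n}^{(\theta)}(x^n;y^n)\}$, and pass to asymptotics. Your choice $\alpha_n=e^{nR}/M_n$ (so that $\log(M_n\alpha_n)=nR$ exactly) is a little more direct than the paper's $\alpha=e^{-n\gamma}$, which produces threshold $\tfrac{1}{n}\log M_n-\gamma$ and then requires an extra monotonicity argument to replace it by $R$.

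One correction is needed in your treatment of \eqref{rebound}. You assert that $P_{X^nW^n}\{j^{(\theta)}\leq\log(M_n\alpha_n)\}$ is monotone nondecreasing in $\theta$, citing \eqref{monotone-prop}. That inequality only gives monotonicity of $h_1^{(\theta)}$, not of every $h_j^{(\theta)}$; the probability itself need not be monotone (e.g.\ take posterior $(0.4,0.4,0.2)$ and $\alpha=0.45$: it drops from $1$ at $\theta=1$ to $0.2$ as $\theta\to\infty$). What is true, and all that is needed, is that the $\theta$-limit \emph{exists} by the dominated-convergence computation \eqref{dom}--\eqref{limit-sum} you already cite, and that \eqref{codebound} holds for every $\theta\ge 1$ and hence in the limit; this is precisely \eqref{limit-gen-pv-bound}, which is what the paper invokes. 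Finally, your ``main obstacle'' is not an obstacle: there is no interchange of $\lim_\theta$ with $\liminf_n$ to justify, since in \eqref{rebound} the $\theta$-limit is taken first for each fixed $n$ (where it exists) and only then is $\liminf_n$ applied.
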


\bigskip
\begin{proof}
When the channel input $X^n$ is uniformly distributed over the code $\code_n \subseteq \xc^n$
of size $M$, the tilted distribution $P_{X^n|Y^n}^{(\theta)}$
of Theorem~\ref{improved-pv} becomes
\begin{eqnarray}
P_{X^n|Y^n}^{(\theta)}(x^n|y^n)&=&\frac{P_{X^n|Y^n}^\theta(x^n|y^n)}
{\sum_{\hat{x}^n\in\xc^n}P_{X^n|Y^n}^\theta(\hat{x}^n|y^n)} \nonumber\\
&=&\frac{P_{X^n}^\theta(x^n)P_{Y^n|X^n}^\theta(y^n|x^n)/P_{Y^n}^\theta(y^n)}
{\sum_{\hat{x}^n\in\xc^n}P_{X^n}^\theta(\hat{x}^n)P_{Y^n|X^n}^\theta(y^n|\hat{x}^n)/P_{Y^n}^\theta(y^n)} \nonumber\\
&=&\frac{P_{Y^n|X^n}^\theta(y^n|x^n)}
{\sum_{\hat{x}^n\in\xc^n}P_{Y^n|X^n}^\theta(y^n|\hat{x}^n)} \nonumber\\
&=&\frac{P_{Y^n|X^n}^\theta(y^n|x^n)/M}
{\sum_{\hat{x}^n\in\xc^n}P_{X^n}(\hat{x}^n)P_{Y^n|X^n}^\theta(y^n|\hat{x}^n)}
\label{uniform-inp}
\end{eqnarray}
for all $x^n \in \code_n$.
Hence inequality \eqref{codebound} follows directly from 
Theorem~\ref{improved-pv} and (\ref{uniform-inp}). We next prove (\ref{rebound});
the proof of (\ref{rebound-theta}) is identical by omitting the limit over $\theta$.
Setting $\alpha=e^{-n\gamma}$ in \eqref{codebound} yields
\begin{eqnarray*}
-\frac 1n\log P_e(\code_n)&\leq&-\frac 1n\log\left(1-e^{-n\gamma}\right) \\
&&-\frac 1n\log P_{X^nW^n}\left\{(x^n,y^n)\in \xc^n \times \yc^n:
\frac 1nj_{X^nW^n}^{(\theta)}(x^n;y^n)\leq \frac 1n\log M-\gamma\right\},
\end{eqnarray*}
which implies in light of \eqref{limit-gen-pv-bound}
\begin{eqnarray*}
\liminf_{n\rightarrow\infty}-\frac 1n\log P_e(\code_n)&\leq&
\liminf_{n\rightarrow\infty}\lim_{\theta\rightarrow\infty}
-\frac 1n\log P_{X^nW^n}\bigg\{(x^n;y^n)\in \xc^n\times\yc^n:  \\
&& \hspace{2.0in} \frac 1nj_{X^nW^n}^{(\theta)}(x^n;y^n)\leq \frac 1n\log M-\gamma\bigg\}.
\end{eqnarray*}
We can then conclude by definition of the channel reliability function that
\begin{eqnarray*}
E^\ast(R)&=&\sup_{\{\codeb_n=\scc(X^n)\}_{n\geq 1}:\Xb\in\qc(R)}
\liminf_{n\rightarrow\infty}-\frac 1n\log P_e(\code_n)\\
&\leq&\sup_{\Xb\in\qc(R)}\liminf_{n\rightarrow\infty}\lim_{\theta\rightarrow\infty}
-\frac 1n\log P_{X^nW^n}\bigg\{(x^n;y^n)\in \xc^n\times\yc^n:  \\
&& \hspace{2.0in} \frac 1nj_{X^nW^n}^{(\theta)}(x^n;y^n)\leq \frac 1n\log|\scc(X^n)|-\gamma\bigg\}.
\end{eqnarray*}
When considering only the sequence of codes in $\qc(R)$, we can replace $\frac 1n\log|\scc(X^n)|-\gamma$ by $R$ (if $\gamma$ is chosen to be small enough such that
$R<\liminf_{n\rightarrow\infty}\frac 1n\log|\scc(X^n)|-\gamma$ is valid for the 
considered input $\Xb$)
as such a replacement can only (ultimately) increase the upper bound; we thus obtain
$$E^\ast(R)\leq \sup_{\Xb\in\qc(R)}\liminf_{n\rightarrow\infty}\lim_{\theta\rightarrow\infty}
-\frac 1n\log P_{X^nW^n}\bigg\{(x^n;y^n)\in \xc^n\times\yc^n:  
\frac 1nj_{X^nW^n}^{(\theta)}(x^n;y^n)\leq R\bigg\}.$$
\end{proof}

\bigskip

\begin{obs}{\rm \
When $\theta=1$, $j_{X^nW^n}^{(\theta)}(x^n;y^n)$ in (\ref{ti}) reduces to
$$\log\displaystyle\frac{P_{Y^n|X^n}(y^n|x^n)}
{\sum_{\hat x^n\in\xc^n}P_{X^n}(\hat x^n)P_{Y^n|X^n}(y^n|\hat x^n)}
=\log\displaystyle\frac{P_{Y^n|X^n}(y^n|x^n)}{P_{Y^n}(y^n)}
= i_{X^nW^n}(x^n;y^n)$$
which is the channel information density as 
defined in \eqref{info-density}.

In this case, the generalized upper bound for the channel reliability function
$E_{\text{PV}}^{(\theta)}(R)$ of \eqref{rebound-theta} reduces to 
the Poor-Verd\'{u} upper bound $E_{\text{PV}}(R)$ of \eqref{tightened-pv-exp} 
(as expected, since for $\theta=1$, (\ref{gen-pv-bound}) reduces to (\ref{pv-bound})).

}
\end{obs}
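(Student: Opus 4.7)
The plan is to verify the observation by direct algebraic substitution, since both assertions are consistency checks rather than substantive theorems. First I would specialize the definition (\ref{ti}) of $j_{X^nW^n}^{(\theta)}(x^n;y^n)$ by setting $\theta = 1$. The denominator collapses via the law of total probability:
$$\sum_{\hat x^n \in \xc^n} P_{X^n}(\hat x^n) P_{Y^n|X^n}(y^n|\hat x^n) = P_{Y^n}(y^n),$$
so $j_{X^nW^n}^{(1)}(x^n;y^n) = \log[P_{Y^n|X^n}(y^n|x^n)/P_{Y^n}(y^n)]$, which matches $i_{X^nW^n}(x^n;y^n)$ as given in (\ref{info-density}). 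This establishes the first assertion of the observation.

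Next I would substitute this identification into the definition (\ref{rebound-theta}) of $E_{\text{PV}}^{(\theta)}(R)$ at $\theta = 1$. The probability inside becomes
$$P_{X^nW^n}\left\{(x^n,y^n) \in \xc^n \times \yc^n: \tfrac{1}{n} i_{X^nW^n}(x^n;y^n) \leq R\right\},$$
whose $\liminf_n -\frac{1}{n}\log$ is exactly $\piund_\Xb(R)$ from (\ref{large-dev-spectrum}). Taking the supremum over $\Xb \in \qc(R)$ recovers $E_{\text{PV}}(R)$ as defined in (\ref{tightened-pv-exp}), completing the verification.

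There is no real obstacle: the argument is purely a substitution plus one application of marginalization. I would also briefly remark, for expository coherence, that this specialization mirrors the already-noted reduction of (\ref{gen-pv-bound}) to (\ref{pv-bound}) at $\theta = 1$ in Theorem~\ref{improved-pv}, so that the consistency at the level of the reliability-function upper bounds is the natural downstream consequence of the consistency at the level of the underlying multihypothesis lower bounds.
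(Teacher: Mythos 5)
Your proposal is correct and follows precisely the route the paper takes (the paper simply asserts the observation because, as you note, it is a direct substitution at $\theta=1$ plus marginalization over $\hat{x}^n$). Nothing further is needed.
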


\bigskip

\begin{obs}\label{auxiliary-channel}{\rm \
Note that when $\theta>1$,
the denominator of the fraction in \eqref{ti}
(in other words, $\sum_{\hat x^n\in\xc^n}P_{X^n}(\hat x^n)P_{Y^n|X^n}^\theta(y^n|\hat x^n))$
is not a legitimate distribution since it does not sum to one over $y^n\in\yc^n$.
However, if
\begin{equation}
\sum_{\hat y^n\in\yc^n}P_{Y^n|X^n}^\theta(\hat y^n|x^n)
= \sum_{\hat y^n\in\yc^n}P_{Y^n|X^n}^\theta(\hat y^n|\hat x^n)
\quad \quad \forall \ x^n, {\hat x^n} \in \xc^n, n=1,2,\cdots,
\label{invariance-cond}
\end{equation} 
then $j_{X^nW^n}^{(\theta)}(x^n;y^n)$ can be reformulated as follows
\begin{eqnarray} 
j_{X^nW^n}^{(\theta)}(x^n;y^n)&=&\log\displaystyle\frac{\displaystyle\frac{P_{Y^n|X^n}^\theta(y^n|x^n)}
{\sum_{\hat y^n\in\yc^n}P_{Y^n|X^n}^\theta (\hat y^n|x^n)}}
{\sum_{\hat x^n\in\xc^n}P_{X^n}(\hat x^n)
\displaystyle\frac{P_{Y^n|X^n}^\theta(y^n|\hat x^n)}{{\sum_{\hat y^n\in\yc^n}P_{Y^n|X^n}^\theta (\hat y^n|\hat x^n)}}}\nonumber\\
&=&\log\displaystyle\frac{P^{(\theta)}_{Y^n|X^n}(y^n|x^n)}
{\sum_{\hat x^n\in\xc^n}P_{X^n}(\hat x^n)
P^{(\theta)}_{Y^n|X^n}(y^n|x^n)}\label{nti} \\
&\triangleq& i_{X^n,Y^n}^{(\theta)}(x^n;y^n), \nonumber
\end{eqnarray}
where for each $y^n\in \yc^n$,
$$P^{(\theta)}_{Y^n|X^n}(y^n|x^n)\triangleq\displaystyle\frac{P_{Y^n|X^n}^\theta(y^n|x^n)}
{\sum_{\hat y^n\in\yc^n}P_{Y^n|X^n}^\theta (\hat y^n|x^n)} \quad x^n \in \xc^n$$
is the {\it tilted} distribution with parameter $\theta$ 
of the channel statistics $P_{Y^n|X^n}(\cdot|x^n)$. 
Note that $P^{(\theta)}_{Y^n|X^n}$ is a legitimate
distribution (like $P^{(\theta)}_{X|Y}$ defined in Theorem~\ref{improved-pv}). 
As a result, the new denominator of the fraction in \eqref{nti} 
(i.e., $\sum_{\hat x^n\in\xc^n}P_{X^n}(\hat x^n)P^{(\theta)}_{Y^n|X^n}(y^n|x^n)$)
is a true distribution on $\yc^n$; it is indeed the distribution of
the output due to an input with distribution $P_{X^n}$ sent over 
a channel with (legitimate) tilted statistics $P_{Y^n|X^n}^{(\theta)}$.
We thus conclude that for channels satisfying the invariance condition
of \eqref{invariance-cond},
the upper bounds for the channel reliability function 
in \eqref{rebound-theta} and \eqref{rebound} are actually based on 
the channel information density $i_{X^nW^n}^{(\theta)}(x^n;y^n)$ of an 
{\it auxiliary channel} whose transition probability $P_{Y^n|X^n}^{(\theta)}$
is the \emph{tilted} counterpart of the original channel transition probability $P_{Y^n|X^n}$.

When the output alphabet is finite, 
the channel $\Wb$ satisfies \eqref{invariance-cond} if it is {\it row-symmetric}, i.e.,
if the rows of its transition matrix $[p_{x^ny^n}]$
of size $|\xc^n| \times|\yc^n|$, where $p_{x^ny^n} \triangleq P_{Y^n|X^n}(y^n|x^n)$,
are permutations of each other for each $n$. 
Note that channels whose transition matrix $[p_{x^ny^n}]$ is symmetric in the Gallager sense 
\cite[p.~94]{gallager} for each $n$ are row-symmetric; such channels include
the memoryless BSC and BEC.

When the output alphabet is continuous (i.e., with $\yc=\mathbb{R}$) 
and the channel is described by a 
sequence of $n$-dimensional transition (conditional) probability 
density functions (pdfs) $f_{Y^n|X^n}$, the invariance
condition of \eqref{invariance-cond} translates into
\begin{equation}\label{invariance-cond-cont}
\int_{\hat y^n\in\mathbb{R}^n}f_{Y^n|X^n}^\theta(\hat y^n|x^n) 
d{\hat y}_1 \cdots d{\hat y}_n
= \int_{\hat y^n\in\mathbb{R}^n}f_{Y^n|X^n}^\theta(\hat y^n|\hat x^n) 
d{\hat y}_1 \cdots d{\hat y}_n
\end{equation}
$\forall \ x^n, {\hat x^n} \in \xc^n$, $n=1,2,\cdots$.
The memoryless finite-input AWGN channel 
and the memoryless binary-input (with $\xc=\{-1,+1\}$)
{\it output-symmetric} channel, i.e., whose transition pdf
satisfies $f_{Y|X}(y|-1)=f_{Y|X}(-y|+1)$ $\forall \ y \in \mathbb{R}$,
fulfill \eqref{invariance-cond-cont}.
}
\end{obs}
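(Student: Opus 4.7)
The plan is to verify Observation~\ref{auxiliary-channel} in two stages: first derive the identity \eqref{nti} from the invariance condition \eqref{invariance-cond}, and then check that each channel family listed at the end of the observation satisfies the relevant invariance condition.

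For the identity, the key is a single normalization step. Define $Z_\theta(x^n) \triangleq \sum_{\hat y^n \in \yc^n} P^\theta_{Y^n|X^n}(\hat y^n | x^n)$. Under \eqref{invariance-cond}, $Z_\theta(x^n)$ is independent of $x^n$, so it factors out of the sum over $\hat x^n$ in the denominator of \eqref{ti} and cancels against the same factor one divides into the numerator. Each quotient $P^\theta_{Y^n|X^n}(y^n|x^n)/Z_\theta(x^n)$ is then exactly the tilted kernel $P^{(\theta)}_{Y^n|X^n}(y^n|x^n)$ appearing in the observation, which yields \eqref{nti} directly. Since $P^{(\theta)}_{Y^n|X^n}(\cdot|x^n)$ is a bona fide probability distribution on $\yc^n$ by construction, the denominator of \eqref{nti} is the legitimate output distribution on $\yc^n$ induced by input $P_{X^n}$ through this auxiliary channel, and the entire expression is by definition the information density $i^{(\theta)}_{X^n,Y^n}(x^n;y^n)$ of that auxiliary channel, as asserted.

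It remains to verify \eqref{invariance-cond} (or its continuous counterpart \eqref{invariance-cond-cont}) for the listed channels. Row-symmetry says that each row of $[p_{x^n y^n}]$ is a permutation of every other row, so the multiset of entries $\{P_{Y^n|X^n}(\hat y^n|x^n): \hat y^n \in \yc^n\}$, and therefore the sum of its $\theta$-th powers, is independent of $x^n$; this is exactly \eqref{invariance-cond}. A Gallager-symmetric transition matrix partitions its columns into sub-matrices whose rows are permutations of each other, and summing $\theta$-th powers inside each sub-matrix is row-invariant for the same reason, hence so is the total sum across sub-matrices, establishing \eqref{invariance-cond} for the BSC and BEC in particular. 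For the memoryless finite-input AWGN channel, one writes $f_{Y|X}(y|x) = f_N(y-x)$ with $f_N$ the Gaussian noise pdf, and the substitution $y \mapsto y-x$ in $\int f_N^\theta(y-x)\,dy$ eliminates the dependence on $x$; Fubini then extends this to the $n$-fold product. For a binary-input output-symmetric channel, the substitution $y \mapsto -y$ in $\int f^\theta_{Y|X}(y|-1)\,dy$ converts it into $\int f^\theta_{Y|X}(y|+1)\,dy$ via the defining identity $f_{Y|X}(y|-1)=f_{Y|X}(-y|+1)$.

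The verification is essentially a chain of elementary change-of-variable or permutation-invariance arguments, and no deep machinery is required beyond the one-line normalization step of the first part. The main obstacle, such as it is, is bookkeeping in the Gallager-symmetric case, where one must handle the sub-matrix partition carefully while arguing that each block has a row-invariant $\theta$-th-power column sum; once this is in place, the remaining cases are immediate.
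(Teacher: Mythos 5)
Your proposal is correct and follows essentially the same route as the paper: you isolate the normalization constant $Z_\theta(x^n)=\sum_{\hat y^n}P^\theta_{Y^n|X^n}(\hat y^n|x^n)$, observe that the invariance condition makes it input-independent so it can be factored into both numerator and denominator of \eqref{ti} to yield \eqref{nti}, and then check each channel family via the obvious permutation-invariance or change-of-variable argument. The paper states the normalization step as a displayed derivation and merely asserts the channel examples, whereas you spell out those verifications (row-symmetry preserves $\theta$-th-power row sums; Gallager symmetry is row-symmetric blockwise hence globally; $y\mapsto y-x$ for AWGN and $y\mapsto-y$ for output-symmetric channels), but this is simply filling in the routine details rather than a different argument.
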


\bigskip

\begin{obs}{\rm \ It can be shown along similar lines as the proof
of \cite[Theorem~1]{pv02} that one can interchange 
the supremum and limit infimum (over $n$)
in $E_{\text{PV}}^{(\theta)}(R)$ and $\bar E_{\text{PV}}(R)$ and obtain
\begin{eqnarray}\label{underbarE}
\lim_{\gamma\downarrow 0}\underbar{$E$}_{\text{PV}}^{(\theta)}(R+\gamma)
\leq E_{\text{PV}}^{(\theta)}(R)
\leq \underbar{$E$}_{\text{PV}}^{(\theta)}(R)
\quad\text{and}\quad
\lim_{\gamma\downarrow 0}\underbar{$E$}_{\text{PV}}(R+\gamma)
\leq \bar{E}_{\text{PV}}(R)
\leq \underbar{$E$}_{\text{PV}}(R),
\end{eqnarray}
where
\begin{multline}
\underbar{$E$}_{\text{PV}}(R)\triangleq
\liminf_{n\rightarrow\infty}\sup_{X^n\in{\cal Q}_n(R)}\lim_{\theta\rightarrow\infty}-\frac 1n\log
P_{X^nW^n}\bigg\{(x^n,y^n)\in\xc^n\times\yc^n:\\
\frac{1}{n}j_{X^nW^n}^{(\theta)}(x^n;y^n)\leq R
\bigg\},\nonumber
\end{multline}
\begin{eqnarray*}
\underbar{$E$}_{\text{PV}}^{(\theta)}(R)\triangleq\liminf_{n\rightarrow\infty}
 \sup_{X^n\in{\cal Q}_n(R)}-\frac 1n\log
P_{X^nW^n}\left\{(x^n,y^n)\in\xc^n\times\yc^n: \frac{1}{n}j_{X^nW^n}^{(\theta)}(x^n;y^n)
\leq R\right\} \nonumber
\end{eqnarray*}
and
\begin{multline}
\qc_n(R)\triangleq\bigg\{X^n:
P_{X^n}(x^n)=\frac 1{|\scc(X^n)|}\text{ for }x^n\in\scc(X^n)
\ \text{and}\ R <\frac
1n\log|\scc(X^n)|\bigg\}.\nonumber
\end{multline}
The new expressions that take the supremum over $\qc_n(R)$ before letting $n$ approaching 
infinity provide an alternative possibility for the evaluation of the two bounds. 
In particular, $\qc_n(R)$ becomes a finite set as the input alphabet is finite;
hence, taking the supremum over $\qc_n(R)$ can be replaced with a maximization operation.
Inequality \eqref{underbarE} nevertheless implies that 
 $E_{\text{PV}}^{(\theta)}(R)
=\underbar{$E$}_{\text{PV}}^{(\theta)}(R)$ and
$\bar{E}_{\text{PV}}(R)=\underbar{$E$}_{\text{PV}}(R)$ almost everywhere in $R$
(since these functions are non-increasing in $R$).
}
\end{obs}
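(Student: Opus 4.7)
The plan is to adapt the argument of \cite[Theorem~1]{pv02}, which interchanges a supremum over inputs with $\liminf_{n}$ at the cost of an infinitesimal rate penalty. I will establish the sandwich for $E_{\text{PV}}^{(\theta)}(R)$; the one for $\bar{E}_{\text{PV}}(R)$ follows line by line after the inner limit $\theta\to\infty$ is pushed through. Throughout, let
$$\phi_n(X^n,R)\triangleq -\frac{1}{n}\log P_{X^nW^n}\Bigl\{(x^n,y^n)\in\xc^n\times\yc^n:\ \tfrac{1}{n}j_{X^nW^n}^{(\theta)}(x^n;y^n)\leq R\Bigr\},$$
and observe that $\phi_n(X^n,\cdot)$ is non-increasing in $R$ because the event being measured grows with $R$. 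This monotonicity is the engine of the argument.

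For the upper bound $E_{\text{PV}}^{(\theta)}(R)\leq \underbar{$E$}_{\text{PV}}^{(\theta)}(R)$, I would use the standard ``$\sup\liminf\leq\liminf\sup$'' inequality, adjusted for the mismatch between $\qc(R)$ and $\qc_n(R)$: any $\Xb\in\qc(R)$ satisfies $R<\liminf_n \tfrac{1}{n}\log|\scc(X^n)|$ strictly, hence $X^n\in\qc_n(R)$ for all $n$ sufficiently large, so $\phi_n(X^n,R)\leq \sup_{\tilde X^n\in\qc_n(R)}\phi_n(\tilde X^n,R)$ eventually; taking $\liminf_n$ and then $\sup_{\Xb\in\qc(R)}$ yields the claim. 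For the reverse direction $\lim_{\gamma\downarrow 0}\underbar{$E$}_{\text{PV}}^{(\theta)}(R+\gamma)\leq E_{\text{PV}}^{(\theta)}(R)$, I would perform a diagonal construction: fix $\gamma>0$ and a vanishing sequence $\ep_n\downarrow 0$, and for each $n$ large enough that $\qc_n(R+\gamma)\neq\emptyset$ (which holds for all large $n$ whenever $R<\log|\xc|$, the only nontrivial regime) pick $X_\ast^n\in\qc_n(R+\gamma)$ with
$$\phi_n(X_\ast^n,R+\gamma)\geq \sup_{X^n\in\qc_n(R+\gamma)}\phi_n(X^n,R+\gamma)-\ep_n.$$
Because an input process in the paper is merely a sequence of finite-dimensional distributions with no consistency requirement, $\Xb_\ast\triangleq \{X_\ast^n\}_{n\geq 1}$ is an admissible input process, and the per-$n$ strict bound $\tfrac{1}{n}\log|\scc(X_\ast^n)|>R+\gamma$ gives $\liminf_n \tfrac{1}{n}\log|\scc(X_\ast^n)|\geq R+\gamma>R$, so $\Xb_\ast\in\qc(R)$. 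Combined with the rate-monotonicity $\phi_n(X_\ast^n,R)\geq \phi_n(X_\ast^n,R+\gamma)$, this yields
$$E_{\text{PV}}^{(\theta)}(R)\geq \liminf_n \phi_n(X_\ast^n,R)\geq \liminf_n \phi_n(X_\ast^n,R+\gamma)\geq \underbar{$E$}_{\text{PV}}^{(\theta)}(R+\gamma),$$
and letting $\gamma\downarrow 0$ closes the loop. The proof for $\bar{E}_{\text{PV}}$ is identical with $\phi_n$ replaced by $\lim_{\theta\to\infty}\phi_n$, whose monotonicity in $R$ is inherited from that of each $\phi_n^{(\theta)}$.

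The almost-everywhere equality is then automatic: the sandwich reads $\underbar{$E$}_{\text{PV}}^{(\theta)}(R)\geq E_{\text{PV}}^{(\theta)}(R)\geq \underbar{$E$}_{\text{PV}}^{(\theta)}(R+)$ (right limit), and since $\underbar{$E$}_{\text{PV}}^{(\theta)}(\cdot)$ is non-increasing in $R$, its right limit agrees with its value except on an at-most-countable set of discontinuities, so the two quantities coincide outside that null set; the same argument applies verbatim to $\bar{E}_{\text{PV}}(R)$ versus $\underbar{$E$}_{\text{PV}}(R)$. The main obstacle is really the diagonal step in the reverse inequality: the penalty $\gamma>0$ cannot be dispensed with, because $\qc(R)$ is defined by a strict bound on a $\liminf$ whereas the selection $X_\ast^n$ only provides a strict bound per $n$, and the slack $\gamma$ is exactly what lets the latter aggregate into the former. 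Everything else is bookkeeping in the same spirit as \cite[Theorem~1]{pv02}.
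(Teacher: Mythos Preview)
Your proposal is correct and matches the approach the paper itself points to: the paper does not give a proof of this observation but simply states that it ``can be shown along similar lines as the proof of \cite[Theorem~1]{pv02},'' and what you have written is precisely that adaptation. The two key ingredients you identify---(i) that any $\Xb\in\qc(R)$ has $X^n\in\qc_n(R)$ for all sufficiently large $n$, giving the $\sup\liminf\leq\liminf\sup$ direction, and (ii) the diagonal selection $X_\ast^n\in\qc_n(R+\gamma)$ yielding a bona fide process $\Xb_\ast\in\qc(R)$ with the rate slack $\gamma$ absorbing the gap between the per-$n$ strict inequality and the $\liminf$ constraint---are exactly the mechanism of \cite[Theorem~1]{pv02}, and your use of the monotonicity of $\phi_n(X^n,\cdot)$ in $R$ to reconcile the threshold mismatch is the right observation. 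The almost-everywhere equality from monotonicity is also what the paper asserts.
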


\subsection{Information-Spectral Characterization of the Reliability Function for a 
Class of Channels}\label{exact-exp}

We next employ Theorem~\ref{tightness-theo} to show that the
upper bound in \eqref{rebound} is tight for the memoryless finite-input AWGN channel as
well as a larger class of channels, hence providing an information-spectral
characterization
for the reliability function of these channels. This exact expression $E^\ast(R)=\bar{E}_{\text{PV}}(R)$
holds for all rates $R$ (below channel capacity), 
albeit its determination in single-letter form
(i.e., solving the optimization of a large-deviation rate function)
remains a challenging open problem.
  
We first focus on the Gaussian channel and then present the result for a wider class of channels.
Consider a finite-input AWGN channel described by $Y_i=X_i+Z_i$, $i=1,2,\cdots$, 
where $X_i$, $Y_i$ and $Z_i$ are the channel's input, output and noise at time $i$,
respectively. 
We assume that the noise process $\Zb$ is i.i.d.\ with each $Z_i$ being a zero-mean
Gaussian random variable with variance $\sigma^2>0$. We also assume that the noise
and input processes are independent from each other.

\bigskip

\begin{theorem}\label{awgn-exact-exp-theo}
{\rm
The channel reliability function $E^\ast(R)$ of the above finite-input
AWGN channel satisfies
\begin{eqnarray*}
E^\ast(R) &=& \bar{E}_{\text{PV}}(R) \\
&=& \sup_{\boldsymbol X\in{\cal Q}(R)}\liminf_{n\rightarrow\infty}
\lim_{\theta\rightarrow\infty}-\frac 1n\log
P_{X^nW^n}\left\{(x^n,y^n)\in\xc^n\times\yc^n: \frac{1}{n}j_{X^nW^n}^{(\theta)}
(x^n;y^n)\leq R\right\} \\
&=& \sup_{\boldsymbol X\in{\cal Q}(R)}\liminf_{n\rightarrow\infty}
\lim_{\theta\rightarrow\infty}-\frac 1n\log
P_{X^nW^n}\left\{(x^n,y^n)\in\xc^n\times\yc^n: \frac{1}{n}i_{X^nW^n}^{(\theta)}
(x^n;y^n)\leq R\right\}
\end{eqnarray*}
for any $0<R<C$, where $C$ denotes the channel's capacity,
and $j_{X^nW^n}^{(\theta)}(x^n,y^n)$ and $i_{X^nW^n}^{(\theta)}(x^n,y^n)$
are given in \eqref{ti} and \eqref{nti}, respectively.
}
\end{theorem}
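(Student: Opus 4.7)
\medskip

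\noindent\textbf{Overall plan.} The inequality $E^\ast(R) \le \bar{E}_{\text{PV}}(R)$ is already given by Theorem~\ref{exp-upper-bound}, so the task is to (i) establish the matching lower bound $E^\ast(R) \ge \bar{E}_{\text{PV}}(R)$ using Theorem~\ref{tightness-theo}, and (ii) show that on the AWGN channel $j_{X^nW^n}^{(\theta)}$ may be replaced by the auxiliary-channel information density $i_{X^nW^n}^{(\theta)}$ via Observation~\ref{auxiliary-channel}. The strategy for (i) is to produce, for every input $\Xb\in\qc(R)$, a natural code $\code_n = \scc(X^n)$ whose exact MAP error probability coincides with the information-spectrum probability inside $\bar{E}_{\text{PV}}(R)$; the reliability function then dominates this exponent by Definition~\ref{pv-exp-def}.

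\medskip

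\noindent\textbf{Step 1: MAP uniqueness for the AWGN channel.} For an arbitrary codebook $\code_n \subset \xc^n$, the conditional density $f_{Y^n|X^n}(y^n|x^n) \propto \exp(-\|y^n - x^n\|^2/(2\sigma^2))$ depends strictly monotonically on Euclidean distance, so two distinct codewords $x^n \ne \tilde{x}^n$ achieve equal likelihood only on the affine hyperplane $\{y^n : \langle y^n,\, x^n - \tilde{x}^n\rangle = \tfrac{1}{2}(\|x^n\|^2 - \|\tilde{x}^n\|^2)\}$, a set of Lebesgue (hence $P_{Y^n}$) measure zero. A union bound over the finitely many codeword pairs shows that the MAP estimator is almost surely unique in $P_{Y^n}$, so condition~\eqref{condition} of Theorem~\ref{tightness-theo} holds.

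\medskip

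\noindent\textbf{Step 2: Matching the exponent via Theorem~\ref{tightness-theo}.} Fix $\Xb \in \qc(R)$, set $\code_n = \scc(X^n)$, $M_n = |\scc(X^n)|$, and $R_n = (1/n)\log M_n$; by definition of $\qc(R)$ we have $R_n > R$ for all sufficiently large $n$, so the choice $\alpha_n \triangleq e^{-n(R_n - R)}$ lies in $(0,1)$. Applying Theorem~\ref{tightness-theo} yields
$$P_e(\code_n) = \lim_{\theta\to\infty} P_{X^nW^n}\left\{(x^n,y^n):\, P_{X^n|Y^n}^{(\theta)}(x^n|y^n) \le \alpha_n\right\}.$$
The uniform-input identity~\eqref{uniform-inp} rewrites the event inside as $\{(1/n) j_{X^nW^n}^{(\theta)}(x^n;y^n) \le R_n + (1/n)\log\alpha_n\} = \{(1/n) j_{X^nW^n}^{(\theta)}(x^n;y^n) \le R\}$, so $\alpha_n$ is precisely calibrated to install the threshold $R$. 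Taking $-(1/n)\log$, interchanging with $\lim_\theta$ by continuity of $\log$ at $P_e(\code_n)>0$, and then taking $\liminf_n$, gives
$$\liminf_{n\to\infty}-\frac{1}{n}\log P_e(\code_n) = \liminf_{n\to\infty}\lim_{\theta\to\infty}-\frac{1}{n}\log P_{X^nW^n}\left\{\tfrac{1}{n} j_{X^nW^n}^{(\theta)}(x^n;y^n)\le R\right\}.$$
Since $\{\code_n\}$ is admissible in Definition~\ref{pv-exp-def}, the left-hand side is dominated by $E^\ast(R)$. Supremizing over $\Xb \in \qc(R)$ delivers $E^\ast(R) \ge \bar{E}_{\text{PV}}(R)$, which combined with Theorem~\ref{exp-upper-bound} proves the first equality.

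\medskip

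\noindent\textbf{Step 3: Equivalence of the $i^{(\theta)}$ form, and main obstacle.} For the Gaussian density, $\int_{\mathbb{R}} f_{Y|X}^\theta(y|x)\, dy = (2\pi\sigma^2)^{(1-\theta)/2}/\sqrt{\theta}$ is independent of $x$; the product structure of the memoryless AWGN channel propagates this to the $n$-dimensional invariance condition~\eqref{invariance-cond-cont}, and Observation~\ref{auxiliary-channel} then gives $j_{X^nW^n}^{(\theta)} = i_{X^nW^n}^{(\theta)}$ pointwise, yielding the second equality. The main obstacle in the whole argument is the calibration of $\alpha_n = e^{-n(R_n - R)}$: only by letting it decay exponentially in $n$ at exactly rate $R_n - R$ does the $\theta$-limit equality in Theorem~\ref{tightness-theo} evaluate the information-spectrum probability at the fixed threshold $R$, rather than at a threshold that drifts with the code rate; any $n$-independent choice of $\alpha$ leaves a residual $R_n$-dependent gap that cannot be absorbed into $\liminf_n$. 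A minor subsidiary point is the positivity of $P_e(\code_n)$, which is immediate for non-degenerate Gaussian noise and non-trivial codes.
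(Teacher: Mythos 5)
Your proof is correct and follows essentially the same route as the paper: establish almost-sure MAP uniqueness via the finite-union-of-hyperplanes argument, invoke Theorem~\ref{tightness-theo} to obtain an exact (not merely lower-bounded) $P_e(\code_n)$, convert the tilted-posterior event into a threshold event on $j^{(\theta)}_{X^nW^n}$, and finally appeal to the invariance condition of Observation~\ref{auxiliary-channel} to swap $j^{(\theta)}$ for $i^{(\theta)}$. One noteworthy refinement: the paper picks $\alpha = e^{-n\gamma}$ with a fixed small $\gamma$, producing a drifting threshold $\tfrac1n\log M_n - \gamma$ and then informally argues that this may be replaced by $R$ in the sup over $\Xb \in \qc(R)$; your per-$n$ calibration $\alpha_n = e^{-n(R_n - R)}$, which places the threshold at exactly $R$ for every $n$, makes this replacement exact rather than heuristic and is a cleaner way to close the argument.
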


\medskip
\begin{proof}
Fix $0<R<C$. Let its channel input
$X^n$ be uniformly distributed over a codebook $\code_n \subset \xc^n$
and let $Y^n$ be the corresponding channel output. 
Then, for $x^n \in \code_n$,
\begin{eqnarray*}
P_{X^n|Y^n}(x^n|y^n)&=&\frac{P_{X^n}(x^n)f_{Y^n|X^n}(y^n|x^n)}{f_{Y^n}(y^n)}\\
&=&\frac 1{|\code_n|\cdot f_{Y^n}(y^n)}
\frac{1}{(2\pi\sigma^2)^{n/2}}\exp\left\{-\frac{\|y^n-x^n\|^2}{2\sigma^2}\right\},
\end{eqnarray*}
where $\|\cdot\|$ denotes the Euclidean norm.
For a given $y^n$ received at the channel output,  
if $\ell(y^n)$ as defined in \eqref{ell-def} is greater than or equal to 2,
then there exist distinct codewords $x^n$ and $\tilde x^n$ in $\code_n$
such that
$$\|y^n-x^n\|^2=\|y^n-\tilde x^n\|^2, \text{ equivalently }
\sum_{i=1}^n(x_i-\tilde x_i)y_i=\frac 12\sum_{i=1}^n(x_i^2-\tilde x_i^2);$$
hence such $y^n$ belongs to an (affine) hyperplane in $\mathbb{R}^n$.
In other words, we have that 
$$\{y^n\in\mathbb{R}^n:\ell(y^n)\ge 2 \} \subseteq\yc(\code_n),$$
where
$$\yc(\code_n)\triangleq\left\{y^n\in\mathbb{R}^n:\|y^n-x^n\|^2=\|y^n-\tilde x^n\|^2\text{ for some }x^n,\tilde x^n\in\code_n\text{ and }x^n\neq \tilde x^n\right\}$$
consists of the union of $\binom{|\code_n|}{2}$ hyperplanes
in $\mathbb{R}^n$. But as the Lebesgue measure of every hyperplane in $\mathbb{R}^n$
is zero (since its volume is zero), we then obtain that the above finite union of hyperplanes
has Lebesgue measure zero. Thus, $P_{Y^n}\{\yc(\code_n)\}=0$ which directly yields
that $\Pr[\ell(Y^n)\ge 2]=0$, and hence $\Pr[\ell(Y^n)=1]=1$.
Theorem~\ref{tightness-theo} then implies that
\begin{equation*}
P_e(\code_n)=\lim_{\theta\rightarrow\infty}
P_{X^nW^n}\left\{(x^n,y^n)\in\xc^n\times\yc^n:j_{X^nW^n}^{(\theta)}(x^n;y^n)\leq \log M+\log \alpha\right\}
\end{equation*}
for $\alpha\in[0,1)$.
As a result, with $\alpha=e^{-n\gamma}$ for arbitrarily small $\gamma>0$,
\begin{eqnarray*}
&& \liminf_{n\rightarrow\infty}-\frac 1n\log P_e(\code_n) \\
&& \hspace{-0.15in}  =\liminf_{n\rightarrow\infty}\lim_{\theta\rightarrow\infty}-\frac 1n\log
P_{X^nW^n}\left\{(x^n,y^n)\in\xc^n\times\yc^n: 
\frac 1nj_{X^nW^n}^{(\theta)}(x^n;y^n)\leq \frac 1n\log |\code_n|-\gamma\right\},
\end{eqnarray*}
where $j_{X^nW^n}^{(\theta)}(x^n,y^n)$ is as defined in \eqref{ti}.
As stated in the proof of Theorem~\ref{exp-upper-bound}, 
the channel input that achieves the channel reliability
should has the chosen $\gamma$ and supports satisfying $\liminf_{n\rightarrow\infty}\frac 1n\log|\scc(X^n)|-\gamma$ strictly larger but arbitrarily close to $R$.
This concludes to
\begin{eqnarray*}
E^\ast(R)&=&\sup_{\{\codeb_n=\scc(X^n)\}_{n\geq 1}:\Xb\in\qc(R)}
\liminf_{n\rightarrow\infty}-\frac 1n\log P_e(\code_n)\\
&=&\sup_{\Xb\in\qc(R)}\liminf_{n\rightarrow\infty}\lim_{\theta\rightarrow\infty}
-\frac 1n\log P_{X^nW^n}\left\{(x^n,y^n)\in\xc^n\times\yc^n:
\frac 1nj_{X^nW^n}^{(\theta)}(x^n;y^n)\leq R\right\} \\
&\triangleq&\bar{E}_{\text{PV}}(R).
\end{eqnarray*}

Furthermore, since this channel satisfies \eqref{invariance-cond-cont},
we can replace $j_{X^nW^n}^{(\theta)}(x^n;y^n)$ with $i_{X^nW^n}^{(\theta)}(x^n;y^n)$
in the expression of $\bar{E}_{\text{PV}}(R)$ as shown
in Observation~\ref{auxiliary-channel} to obtain that
\begin{eqnarray*}
E^\ast(R)=
\sup_{\Xb\in\qc(R)}\liminf_{n\rightarrow\infty}\lim_{\theta\rightarrow\infty}
-\frac 1n\log P_{X^nW^n}\left\{(x^n,y^n)\in\xc^n\times\yc^n:
\frac 1ni_{X^nW^n}^{(\theta)}(x^n;y^n)\leq R\right\}.
\end{eqnarray*}
\end{proof}

An information-spectral representation of $E^\ast(R)$ for the memoryless finite-input AWGN channel
is thus established for all rates, although its solution in closed (single-letter)
form is still a daunting task.

We emphasize that the above finding also holds for any channel satisfying
$\ell(Y^n)=1$ almost surely in $P_{Y^n}$ as shown above; we hence have
the following result (which directly follows from Theorem~\ref{tightness-theo} 
along the same lines as the above proof).

\bigskip

\begin{corollary}\label{exact-exp-cor}{\rm
Given a channel $\Wb$, if for its input $\Xb$ uniform over any block codebook $\code_n$,
the following holds almost surely in $P_{Y^n}$
\begin{equation}\label{cor1-condition}
\max_{x^n\in\codeb_n}P_{Y^n|X^n}(y^n|x^n)>
\max_{x^n\in\codeb_n\setminus\{e(y^n)\}}P_{Y^n|X^n}(y^n|x^n)
\end{equation}
for each $n=1,2,\cdots$,
where $e_{ML}(y^n)=\arg\max_{x^n\in\codeb_n}P_{Y^n|X^n}(y^n|x^n)$ is the maximum likelihood
estimate of the transmitted codeword from the received channel output $y^n$, 
then the channel reliability function of $\Wb$ is given by
\begin{eqnarray*}
E^\ast(R) &=& \bar{E}_{\text{PV}}(R) \\
&=& \sup_{\boldsymbol X\in{\cal Q}(R)}\liminf_{n\rightarrow\infty}
\lim_{\theta\rightarrow\infty}-\frac 1n\log
P_{X^nW^n}\left\{(x^n,y^n)\in\xc^n\times\yc^n: \frac{1}{n}j_{X^nW^n}^{(\theta)}(x^n;y^n)\leq R
\right\} 
\end{eqnarray*}
for any $0<R<C$, where $C$ is the channel's capacity.

Furthermore, if the channel satisfies the invariance conditions \eqref{invariance-cond}
or \eqref{invariance-cond-cont}, then 
$j_{X^nW^n}^{(\theta)}(x^n;y^n) = i_{X^nW^n}^{(\theta)}(x^n;y^n)$, which is the information density
for the auxiliary channel with transition distribution $P_{Y^n|X^n}^{(\theta)}$ (i.e., the
tilted distribution of the original channel distribution $P_{Y^n|X^n}$). In this case
the channel reliability function becomes
\begin{eqnarray*}
E^\ast(R) &=& \bar{E}_{\text{PV}}(R) \\
&=& \sup_{\boldsymbol X\in{\cal Q}(R)}\liminf_{n\rightarrow\infty}
\lim_{\theta\rightarrow\infty}-\frac 1n\log
P_{X^nW^n}\left\{(x^n,y^n)\in\xc^n\times\yc^n: \frac{1}{n}i_{X^nW^n}^{(\theta)}(x^n;y^n)\leq R
\right\} 
\end{eqnarray*}
for any $0<R<C$.
}
\end{corollary}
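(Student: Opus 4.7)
The plan is to recognize that this corollary is simply the natural abstraction of Theorem~\ref{awgn-exact-exp-theo}: the only feature of the memoryless AWGN channel actually invoked in that proof is that, for every codebook, $\Pr[\ell(Y^n)=1]=1$ (which there was established via a Lebesgue-measure argument on hyperplanes in $\mathbb{R}^n$). The hypothesis \eqref{cor1-condition} is precisely that uniqueness-of-MAP condition when the channel input $X^n$ is placed uniformly over $\code_n$. I would therefore replicate the AWGN proof, replacing the hyperplane argument by the hypothesis \eqref{cor1-condition} itself, and leaving the rest of the derivation intact.

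Concretely, first fix $\Xb\in\qc(R)$ with each $X^n$ uniform over $\code_n=\scc(X^n)$ and let $M_n=|\code_n|$. Using \eqref{uniform-inp} from the proof of Theorem~\ref{exp-upper-bound}, the event $\{P^{(\theta)}_{X^n|Y^n}(x^n|y^n)\leq\alpha\}$ is equivalent to $\{j^{(\theta)}_{X^nW^n}(x^n;y^n)\leq\log(M_n\alpha)\}$. Hypothesis \eqref{cor1-condition} is exactly condition~\eqref{condition} of Theorem~\ref{tightness-theo} (equivalently \eqref{ell-cond}) applied to $(X^n,Y^n)$, so Theorem~\ref{tightness-theo} yields, for every $\alpha\in(0,1)$ and every $n$,
$$P_e(\code_n)=\lim_{\theta\to\infty}P_{X^nW^n}\!\left\{(x^n,y^n)\in\xc^n\times\yc^n:\,j^{(\theta)}_{X^nW^n}(x^n;y^n)\leq\log(M_n\alpha)\right\}.$$

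Next, setting $\alpha=e^{-n\gamma}$ for an arbitrarily small $\gamma>0$, taking $-\tfrac{1}{n}\log$ and then $\liminf_n$, and finally supping over $\Xb\in\qc(R)$, I would carry out the same replacement step used in Theorem~\ref{awgn-exact-exp-theo}: for any $\Xb\in\qc(R)$ one can choose $\gamma$ small enough that $R<\liminf_n\frac{1}{n}\log M_n-\gamma$, and then taking $\gamma\downarrow 0$ together with inputs whose $\liminf_n\frac{1}{n}\log M_n$ is arbitrarily close to $R$ lets me replace $\frac{1}{n}\log M_n-\gamma$ by $R$ in the probability. Combining the resulting equality with the upper bound $E^\ast(R)\leq\bar{E}_{\text{PV}}(R)$ from Theorem~\ref{exp-upper-bound} gives $E^\ast(R)=\bar{E}_{\text{PV}}(R)$. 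For the second form, I would invoke Observation~\ref{auxiliary-channel}: under \eqref{invariance-cond} or \eqref{invariance-cond-cont} the denominator in \eqref{ti} can be re-normalized so that $j^{(\theta)}_{X^nW^n}(x^n;y^n)=i^{(\theta)}_{X^nW^n}(x^n;y^n)$, and the second asserted identity follows by direct substitution.

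The work is essentially bookkeeping rather than a new idea; the main obstacle, inherited from Theorem~\ref{awgn-exact-exp-theo}, is the $\gamma$-replacement step, where one must verify that letting $\gamma\downarrow 0$ and taking the supremum over $\qc(R)$ commute comfortably enough to yield the exact equality with $\bar{E}_{\text{PV}}(R)$ rather than just an inequality. Once this is handled as in the AWGN case (using that $R$ can be approached from above by $\liminf_n\frac{1}{n}\log|\scc(X^n)|-\gamma$ for suitable inputs), the rest of the argument is a transparent specialization of already-proved results.
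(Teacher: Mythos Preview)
Your proposal is correct and matches the paper's own approach: the paper explicitly states that the corollary ``directly follows from Theorem~\ref{tightness-theo} along the same lines as the above proof'' (i.e., the proof of Theorem~\ref{awgn-exact-exp-theo}), with the hyperplane/Lebesgue-measure step replaced by hypothesis~\eqref{cor1-condition}. Your identification of \eqref{cor1-condition} with condition~\eqref{condition} for uniform $X^n$, the application of Theorem~\ref{tightness-theo}, the $\gamma$-replacement step, and the invocation of Observation~\ref{auxiliary-channel} for the second form are exactly the ingredients the paper intends.
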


\bigskip

\begin{obs}{\rm \ 
Corollary \ref{exact-exp-cor} requires condition \eqref{cor1-condition} to be valid
for any block codebook $\code_n$ and for each $n=1,2,\cdots$. One can immediately weaken the condition by considering only sufficiently large $n$; but without further knowledge on the optimal codebook (equivalently, the optimal channel input $\Xb$ that 
achieves $\bar{E}_{\text{PV}}(R)$), it may be hard to derive an alternative condition
for \eqref{cor1-condition} that holds unanimously for \emph{any} codebook. 
In particular, for discrete memoryless channels (DMC) with finite or countably infinite 
output alphabets, a codebook that fails condition~\eqref{cor1-condition} can always 
be constructed except if the channels are not noiseless (i.e., perfect).\footnote{
As a simple proof, note that for a noisy DMC
there exist two inputs $a,a'\in\xc$ and an output $b\in\yc$ satisfying
$\min\{P_{Y|X}(b|a),P_{Y|X}(b|a')\}>0$. Then for a codebook $\code_n$ consisting of two distinct codewords $x^n$ and $\tilde x^n$, where one of them is the permutation of the other, and their components are either $a$ or $a'$, we obtain 
\begin{eqnarray*}
P_{X^n|Y^n}(x^n|y^n)=\frac{P_{X^n}(x^n)P_{Y^n|X^n}(y^n|x^n)}{P_{Y^n}(y^n)}
=\frac{P_{Y^n|X^n}(y^n|x^n)}{|\code_n|\cdot P_{Y^n}(y^n)}
=\frac{P_{X^n}(\tilde x^n)P_{Y^n|X^n}(y^n|\tilde x^n)}{P_{Y^n}(y^n)}
=P_{X^n|Y^n}(\tilde x^n|y^n)
\end{eqnarray*}
for the channel output $y^n$ satisfying $y_i=b$ for every $1\leq i\leq n$;
hence, $\ell(y^n)\geq 2$ with  $P_{Y^n}(y^n)
=\frac 12P_{Y^n|X^n}(y^n|x^n)+\frac 12P_{Y^n|X^n}(y^n|\tilde x^n)>0$. This codebook 
therefore violates condition \eqref{cor1-condition}.

Notably, for a channel satisfying $\min\{P_{Y|X}(b|a),P_{Y|X}(b|a')\}=0$
for every unequal $a,a'\in\xc$ and $b\in\yc$, the error rate is zero for any codebook $\code_n$. So, only under such a noiseless situation can the finite- or countable-output DMC meet the strict requirement that $\ell(Y^n)=1$ with probability one for any codebook $\code_n$.
}
Hence, in its current form, Corollary~\ref{exact-exp-cor} is not useful for
discrete-output channels; instead, it is of interest for continuous-output channels. 
}
\end{obs}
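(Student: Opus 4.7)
The plan is to exhibit, for any noisy DMC with finite or countably infinite output alphabet $\yc$, an explicit codebook $\code_n \subseteq \xc^n$ (for any $n \ge 2$) for which condition \eqref{cor1-condition} fails on a set of strictly positive $P_{Y^n}$-measure. The idea is to exploit the fact that ``noisy'' forces at least one output symbol to be reachable from two different inputs, and then to build two codewords that are permutations of one another so that a suitable constant output vector is equally likely under both.

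First, I would unpack ``not noiseless'' into the concrete existence of a triple $(a,a',b)$ with $a \neq a'$ in $\xc$ and $b \in \yc$ satisfying $\min\{P_{Y|X}(b|a), P_{Y|X}(b|a')\} > 0$. The contrapositive is clean: if no such triple exists, then for every $b \in \yc$ there is at most one input letter producing $b$ with positive probability, which means each column of the transition matrix has at most one nonzero entry, i.e.\ the channel is deterministic (noiseless/perfect). Thus the noisy hypothesis immediately delivers the required triple.

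Next I would fix any $n \ge 2$ and choose $\code_n = \{x^n, \tilde x^n\}$ with components drawn from $\{a, a'\}$ and with $\tilde x^n$ a nontrivial permutation of $x^n$ (for instance $x^n = (a,a',a',\ldots,a')$ and $\tilde x^n = (a',a,a',\ldots,a')$). For the particular output $y^n = (b,b,\ldots,b)$, the memoryless property gives
\begin{equation*}
P_{Y^n|X^n}(y^n|x^n) = \prod_{i=1}^n P_{Y|X}(b|x_i) = P_{Y|X}(b|a)\,P_{Y|X}(b|a')^{n-1},
\end{equation*}
and $P_{Y^n|X^n}(y^n|\tilde x^n)$ equals the same product because $\tilde x^n$ has the identical multiset of components. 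Hence both codewords attain the maximum of $P_{Y^n|X^n}(y^n|\cdot)$ over $\code_n$, so $\ell(y^n) \ge 2$ at this $y^n$. Moreover $P_{Y^n}(\{y^n\}) = \tfrac12 P_{Y^n|X^n}(y^n|x^n) + \tfrac12 P_{Y^n|X^n}(y^n|\tilde x^n) > 0$, so the event $\{\ell(Y^n) \ge 2\}$ has strictly positive $P_{Y^n}$-probability, which is exactly the negation of the almost-sure requirement in \eqref{cor1-condition}.

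The main obstacle is really only the extraction of the triple $(a,a',b)$: once this is in hand, the permutation-codebook construction is a one-line computation. The subtlety worth flagging is that \eqref{cor1-condition} must hold almost surely in $P_{Y^n}$, so one must check not merely that a bad $y^n$ exists abstractly but that it carries positive output mass; this is why the strict positivity $\min\{P_{Y|X}(b|a), P_{Y|X}(b|a')\}>0$ in the triple is essential, and also why the argument breaks down exactly in the noiseless case, where no such $b$ exists and indeed $P_e(\code_n)=0$ makes the whole discussion moot.
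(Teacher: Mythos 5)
Your proof is correct and follows essentially the same route as the paper's footnote: extract a triple $(a,a',b)$ with $\min\{P_{Y|X}(b|a),P_{Y|X}(b|a')\}>0$ from the noisy hypothesis, build a two-codeword codebook whose codewords are permutations of one another with letters in $\{a,a'\}$, and observe that the all-$b$ output vector has positive mass and equal likelihood under both codewords. The only (immaterial) cosmetic differences are that you work directly with the likelihoods $P_{Y^n|X^n}$ where the paper writes the equivalent posterior chain, and you spell out the contrapositive step producing $(a,a',b)$ that the paper merely asserts.
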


\bigskip

\begin{obs}{\rm \ In light of the above observation, we further
consider channels with continuous-output alphabets.
For a channel that admits a channel transition pdf, the proof of 
Theorem~\ref{awgn-exact-exp-theo} actually indicates that as long as 
$P_{Y^n}\{\yc(\code_n)\}=0$ for any block codebook $\code_n$, where
$$\yc(\code_n)\triangleq\left\{y^n\in\mathbb{R}^n:
f_{Y^n|X^n}(y^n|x^n)=f_{Y^n|X^n}(y^n|\tilde x^n)\text{ for some }x^n,\tilde x^n\in\code_n\text{ and }x^n\neq \tilde x^n\right\},$$
we have $\Pr[\ell(Y^n)=1]=1$ and \eqref{cor1-condition} holds.
We note that this is indeed valid for any sequence of transition pdf's for which
the number of solutions in $y_n$ satisfying
$$f_{Y^n|X^n}(y^n|x^n)=f_{Y^n|X^n}(y^n|\tilde x^n)$$ for given codewords $x^n$, $\tilde x^n$
in $\code_n$ and given $y^{n-1}$ is either finite or countable (as this condition
immediately implies that $\yc(\code_n)$ has Lebesgue measure zero).
A large class of channels satisfy this condition. 
For example, channels with memoryless additive noise, where the noise
pdf is not uniform or piecewise-uniform, satisfy this condition and hence
\eqref{cor1-condition} and Corollary~\ref{exact-exp-cor}.
This allows for most standard continuous distributions for the noise,
such as the generalized-Gaussian distribution 
with shape parameter $c>0$ (e.g., cf. \cite{miller-thomas});
this distribution includes the Gaussian and Laplacian distributions as special
cases, realized for $c=2$ and $c=1$, respectively.

}
\label{cont-channel-obs}
\end{obs}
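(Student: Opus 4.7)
The observation asserts two things that require justification: that $P_{Y^n}\{\yc(\code_n)\}=0$ always suffices for \eqref{cor1-condition} (so that Corollary~\ref{exact-exp-cor} applies), and that the countable-solutions condition on the transition pdf implies this measure vanishes, with memoryless non-uniform additive-noise channels being a large class in which that condition holds. My plan is to mirror the structure of the proof of Theorem~\ref{awgn-exact-exp-theo}: once $P_{Y^n}\{\yc(\code_n)\}=0$ is in hand, the argument there (based on $\ell(Y^n)=1$ almost surely and Theorem~\ref{tightness-theo}) already delivers the conclusion verbatim. Since $X^n$ is uniform over the finite codebook and the channel admits the transition pdf $f_{Y^n|X^n}$, the output is absolutely continuous with density $f_{Y^n}(y^n)=|\code_n|^{-1}\sum_{x^n\in\code_n} f_{Y^n|X^n}(y^n|x^n)$, so $P_{Y^n}\{\yc(\code_n)\}$ vanishes as soon as $\yc(\code_n)$ has Lebesgue measure zero.

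Next I would reduce the measure-zero claim to the per-pair slice condition via Fubini--Tonelli. Writing $\yc(\code_n)=\bigcup_{x^n\neq \tilde x^n}\yc(x^n,\tilde x^n)$ as a finite union over pairs of distinct codewords, where $\yc(x^n,\tilde x^n)\triangleq\{y^n\in\mathbb{R}^n: f_{Y^n|X^n}(y^n|x^n)=f_{Y^n|X^n}(y^n|\tilde x^n)\}$, it suffices to show each $\yc(x^n,\tilde x^n)$ has Lebesgue measure zero. By Tonelli, this measure equals $\int_{\mathbb{R}^{n-1}}\lambda\!\left(\{y_n\in\mathbb{R}: (y^{n-1},y_n)\in \yc(x^n,\tilde x^n)\}\right)dy^{n-1}$, where $\lambda$ is one-dimensional Lebesgue measure. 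By the stated countable-solutions hypothesis, every slice is at most countable, hence has $\lambda$-measure zero, and the outer integral vanishes. If a particular pair forces $y_n$-slices that are the whole line (which can happen when $x^n$ and $\tilde x^n$ agree in the last coordinate), I would relabel coordinates so that Fubini is applied with the integration carried out over a coordinate along which the two codewords disagree; invariance of the Lebesgue measure under coordinate permutation legitimizes this.

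Finally, to cover the class of memoryless additive-noise channels with non-(piecewise-)uniform noise pdf $f_N$, the transition pdf factors as $f_{Y^n|X^n}(y^n|x^n)=\prod_{i=1}^n f_N(y_i-x_i)$. With the coordinate choice above, the slice equation reads $f_N(y_n-a)=K\,f_N(y_n-b)$ for some $a\neq b$ and a constant $K>0$ independent of $y_n$. For the generalized Gaussian with shape $c>0$, taking logarithms reduces this to $|y_n-a|^c-|y_n-b|^c=\text{const}$, a piecewise-monotone equation with finitely many roots; similar elementary arguments dispatch other standard noise models. The main obstacle I foresee is justifying the finite/countable-slice conclusion under the general hypothesis of non-piecewise-uniformity: one must rule out that the level set $\{y_n:f_N(y_n-a)=K f_N(y_n-b)\}$ carries positive Lebesgue measure. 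I would handle this by showing that positive-measure equality on such a set, via a translation-and-rescaling argument on Lebesgue density points, would force $f_N$ to be constant on overlapping intervals of positive length, i.e., piecewise uniform, contradicting the hypothesis. This analytic step is the delicate one; once it is in place, the rest of the observation follows routinely from the two preceding paragraphs.
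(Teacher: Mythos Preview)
The paper presents this as an observation without a separate proof: it simply points back to the argument of Theorem~\ref{awgn-exact-exp-theo} for the first claim and asserts the remaining ones. Your proposal is correct and fills in precisely the details the paper leaves implicit, along the same natural route (mirror the proof of Theorem~\ref{awgn-exact-exp-theo}; write $\yc(\code_n)$ as a finite union over codeword pairs; apply Tonelli to reduce to one-dimensional slices; then specialize to memoryless additive noise). Two remarks are worth making. First, your coordinate-relabeling step is a genuine refinement: the paper's slice condition, read literally in the coordinate $y_n$, can fail for a pair with $x_n=\tilde x_n$, and permuting coordinates so that Fubini is taken along an index where the codewords differ is exactly the right repair. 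Second, the step you flag as delicate---deducing from ``not piecewise-uniform'' that $\{y_n:f_N(y_n-a)=K f_N(y_n-b)\}$ is Lebesgue-null---is also where the paper is least rigorous; it does not prove this either, but rather retreats to the claim for ``most standard continuous distributions'' and the explicit generalized-Gaussian family, which your piecewise-monotone argument handles cleanly. So your proposal is at least as complete as the paper's own treatment, and in the two places just mentioned it is more careful.
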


\subsection{Examples of Channels for which the $E_{\text{PV}}^{(\theta)}(R)$ Bound Is Not Tight} 
As already mentioned, the (analytical or numerical) computation of both upper bounds, 
$E_{\text{PV}}^{(\theta)}(R)$ and $\bar{E}_{\text{PV}}(R)$, 
to the channel reliability function, given in \eqref{rebound-theta} and \eqref{rebound},
respectively, is formidable since they involve a difficult supremum operation
of input processes in $\qc(R)$ in addition to the limit operations.

We can however lower-bound $E_{\text{PV}}^{(\theta)}(R)$, for a given (fixed) $\theta$,
using an auxiliary class of i.i.d.\ inputs and compare this lower bound
to $E_{\text{PV}}^{(\theta)}(R)$ with familiar channel reliability function upper bounds
(such as the sphere-packing upper bound). If the former is shown to be strictly
larger than the latter for a range of rates, then this indicates
that for that particular $\theta$, $E_{\text{PV}}^{(\theta)}(R)$ is not tight.
The lower bound to $E_{\text{PV}}^{(\theta)}(R)$, which we denote by $F(R,\theta)$,
is derived in Appendix~B and given in \eqref{F-R-theta} for
the case of memoryless channels.
We herein calculate $F(R,\theta)$ numerically to demonstrate
that $E_{\text{PV}}^{(\theta)}(R)$ is not tight within a rate range and for certain
choices of $\theta$ (including $\theta=1$ which gives the Poor-Verd\'{u} bound
of \eqref{tightened-pv-exp}); this is shown for two standard binary-input memoryless channels:
the BSC and the Z-channel.

\subsubsection{Memoryless BSC}
For the BSC with crossover probability $\ep$, setting $p \triangleq P_{\bar X}(1)$
and $s=\frac{1}{1-\rho}$ in \eqref{F-R-theta} yields
\begin{eqnarray*}
E_{\text{PV}}^{(\theta)}(R) &\geq & F(R,\theta) \\
&=& \sup_{0<s<1}\bigg\{\left(1-\frac 1s\right)R-\inf_{p: h_{\rm b}(p)>R}\log
\bigg[\frac{(1-p)(1-\ep)^{1+\theta-\theta/s}+p\ep^{1+\theta-\theta/s}}{\left[(1-p)(1-\ep)
^\theta+p\ep^\theta\right]^{(1-1/s)}} \\
&& \hspace{2.5in}  +\frac{(1-p)\ep^{1+\theta-\theta/s}+p(1-\ep)^{1+\theta-\theta/s}}
{\left[(1-p)\ep^\theta+p(1-\ep)^\theta\right]^{(1-1/s)}}\bigg]\bigg\}
\end{eqnarray*}
for reals $\theta\ge 1$ and $0<R<C=\log(2) - h_{\rm b}(\ep)$, where $C$ is the channel capacity
and $h_{\rm b}(\ep)=-\ep\log \ep - (1-\ep) \log (1-\ep)$ is the binary entropy function.

We compare $F(R,\theta)$ with the sphere packing upper bound 
to the BSC's reliability function (e.g., \cite{gallager,blahut}), which is 
denoted by $E_{\text{sp}}(R)$ and given by
\begin{equation*}
E_{\text{sp}}(R)=\sup_{0<s\leq 1}\left\{
    \left(1-\frac 1{s}\right)\left(R-\log 2\right)-\frac 1{s}\log\left[(1-\ep)^{s}+\ep^{s}
\right]\right\}
\end{equation*}
for $0<R<C$.
In Fig.~\ref{fig-bsc0.01}, we plot $E_{\text{sp}}(R)$ and $F(R,\theta)$ for $\theta=1$ and 2
and $\ep=0.01$.
The figure indicates that for $\theta=1$, $F(R,\theta) > E_{\text{sp}}(R)$ for all rates $R$.
This directly implies that 
$$E_{\text{PV}}(R)=E_{\text{PV}}^{(\theta=1)}(R) \ge F(R,\theta) > E_{\text{sp}}(R)$$
for all $0<R<C$. 
Now recall that the sphere-packing upper bound $E_{\text{sp}}(R)$ is loose at low rates (for rates $R$ less
than the critical rate \cite{gallager}) and tight (i.e., exactly equal to the channel reliability
function $E^\ast(R)$) at high rates (rates between the critical rate and capacity).
Thus for the BSC, the Poor-Verd\'{u} bound
of \eqref{tightened-pv-exp} is not tight for all rates.
Furthermore, note from the figure that since  
$F(R,\theta) < E_{\text{sp}}(R)$ for $\theta=2$, we cannot make a conclusion regarding
the tightness of $E_{\text{PV}}^{(\theta)}(R)$ in this case (this is also observed for $\theta > 2$).

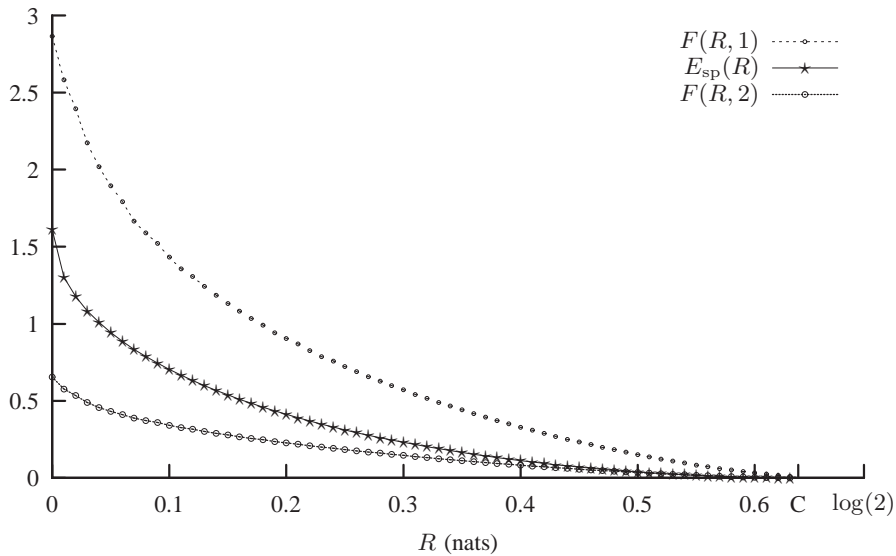
\begin{figure}[h!]
\begin{center}
\setlength{\unitlength}{0.120450pt}
\begin{picture}(3000,1800)(0,0)
\footnotesize
\thicklines \path(348,265)(389,265)
\put(307,265){\makebox(0,0)[r]{ 0}}
\thicklines \path(348,507)(389,507)
\put(307,507){\makebox(0,0)[r]{ 0.5}}
\thicklines \path(348,749)(389,749)
\put(307,749){\makebox(0,0)[r]{ 1}}
\thicklines \path(348,992)(389,992)
\put(307,992){\makebox(0,0)[r]{ 1.5}}
\thicklines \path(348,1234)(389,1234)
\put(307,1234){\makebox(0,0)[r]{ 2}}
\thicklines \path(348,1476)(389,1476)
\put(307,1476){\makebox(0,0)[r]{ 2.5}}
\thicklines \path(348,1718)(389,1718)
\put(307,1718){\makebox(0,0)[r]{ 3}}
\thicklines \path(348,265)(348,306)
\put(348,182){\makebox(0,0){0}}
\thicklines \path(716,265)(716,306)
\put(716,182){\makebox(0,0){0.1}}
\thicklines \path(1083,265)(1083,306)
\put(1083,182){\makebox(0,0){0.2}}
\thicklines \path(1451,265)(1451,306)
\put(1451,182){\makebox(0,0){0.3}}
\thicklines \path(1819,265)(1819,306)
\put(1819,182){\makebox(0,0){0.4}}
\thicklines \path(2187,265)(2187,306)
\put(2187,182){\makebox(0,0){0.5}}
\thicklines \path(2554,265)(2554,306)
\put(2554,182){\makebox(0,0){0.6}}
\thicklines \path(2691,265)(2691,306)
\put(2691,182){\makebox(0,0){C}}
\thicklines \path(2897,265)(2897,306)
\put(2897,182){\makebox(0,0){$\log(2)$}}
\thicklines \path(348,1718)(348,265)(2897,265)
\put(1622,58){\makebox(0,0){$R$ (nats)}}
\put(2569,1636){\makebox(0,0)[r]{$F(R,1)$}}
\thinlines \dashline[90]{10}(2610,1636)(2815,1636)
\thinlines \dashline[90]{10}(348,1653)(348,1653)(385,1516)(422,1425)(458,1318)(495,1243)(532,1183)(569,1133)(605,1072)(642,1035)(679,1002)(716,959)(753,922)(789,898)(826,867)(863,839)(900,813)(936,789)(973,766)(1010,745)(1047,721)(1083,703)(1120,686)(1157,665)(1194,646)(1231,632)(1267,615)(1304,599)(1341,583)(1378,569)(1414,555)(1451,542)(1488,527)(1525,515)(1562,502)(1598,491)(1635,479)(1672,467)(1709,456)(1745,445)(1782,435)(1819,424)(1856,415)(1893,405)(1929,395)(1966,386)(2003,378)(2040,370)(2076,361)(2113,354)(2150,346)
\thinlines \dashline[90]{10}(2150,346)(2187,338)(2223,331)(2260,324)(2297,318)(2334,312)(2371,305)(2407,300)(2444,294)(2481,290)(2518,285)(2554,281)(2591,277)(2628,273)(2665,270)
\put(348,1653){\circle{12}}
\put(385,1516){\circle{12}}
\put(422,1425){\circle{12}}
\put(458,1318){\circle{12}}
\put(495,1243){\circle{12}}
\put(532,1183){\circle{12}}
\put(569,1133){\circle{12}}
\put(605,1072){\circle{12}}
\put(642,1035){\circle{12}}
\put(679,1002){\circle{12}}
\put(716,959){\circle{12}}
\put(753,922){\circle{12}}
\put(789,898){\circle{12}}
\put(826,867){\circle{12}}
\put(863,839){\circle{12}}
\put(900,813){\circle{12}}
\put(936,789){\circle{12}}
\put(973,766){\circle{12}}
\put(1010,745){\circle{12}}
\put(1047,721){\circle{12}}
\put(1083,703){\circle{12}}
\put(1120,686){\circle{12}}
\put(1157,665){\circle{12}}
\put(1194,646){\circle{12}}
\put(1231,632){\circle{12}}
\put(1267,615){\circle{12}}
\put(1304,599){\circle{12}}
\put(1341,583){\circle{12}}
\put(1378,569){\circle{12}}
\put(1414,555){\circle{12}}
\put(1451,542){\circle{12}}
\put(1488,527){\circle{12}}
\put(1525,515){\circle{12}}
\put(1562,502){\circle{12}}
\put(1598,491){\circle{12}}
\put(1635,479){\circle{12}}
\put(1672,467){\circle{12}}
\put(1709,456){\circle{12}}
\put(1745,445){\circle{12}}
\put(1782,435){\circle{12}}
\put(1819,424){\circle{12}}
\put(1856,415){\circle{12}}
\put(1893,405){\circle{12}}
\put(1929,395){\circle{12}}
\put(1966,386){\circle{12}}
\put(2003,378){\circle{12}}
\put(2040,370){\circle{12}}
\put(2076,361){\circle{12}}
\put(2113,354){\circle{12}}
\put(2150,346){\circle{12}}
\put(2187,338){\circle{12}}
\put(2223,331){\circle{12}}
\put(2260,324){\circle{12}}
\put(2297,318){\circle{12}}
\put(2334,312){\circle{12}}
\put(2371,305){\circle{12}}
\put(2407,300){\circle{12}}
\put(2444,294){\circle{12}}
\put(2481,290){\circle{12}}
\put(2518,285){\circle{12}}
\put(2554,281){\circle{12}}
\put(2591,277){\circle{12}}
\put(2628,273){\circle{12}}
\put(2665,270){\circle{12}}
\put(2712,1636){\circle{12}}
\put(2569,1553){\makebox(0,0)[r]{$E_{\rm sp}(R)$}}
\thinlines \path(2610,1553)(2815,1553)
\thinlines \path(348,1046)(348,1046)(385,895)(422,835)(458,790)(495,754)(532,722)(569,694)(605,669)(642,647)(679,626)(716,606)(753,588)(789,571)(826,555)(863,540)(900,526)(936,512)(973,500)(1010,487)(1047,476)(1083,465)(1120,454)(1157,444)(1194,434)(1231,425)(1267,416)(1304,407)(1341,399)(1378,391)(1414,384)(1451,377)(1488,370)(1525,363)(1562,357)(1598,351)(1635,345)(1672,339)(1709,334)(1745,329)(1782,324)(1819,319)(1856,315)(1893,311)(1929,306)(1966,303)(2003,299)(2040,296)(2076,292)(2113,289)(2150,286)
\thinlines \path(2150,286)(2187,284)(2223,281)(2260,279)(2297,277)(2334,275)(2371,273)(2407,271)(2444,270)(2481,268)(2518,267)(2554,266)(2591,266)(2628,265)(2665,265)
\put(348,1046){\makebox(0,0){$\star$}}
\put(385,895){\makebox(0,0){$\star$}}
\put(422,835){\makebox(0,0){$\star$}}
\put(458,790){\makebox(0,0){$\star$}}
\put(495,754){\makebox(0,0){$\star$}}
\put(532,722){\makebox(0,0){$\star$}}
\put(569,694){\makebox(0,0){$\star$}}
\put(605,669){\makebox(0,0){$\star$}}
\put(642,647){\makebox(0,0){$\star$}}
\put(679,626){\makebox(0,0){$\star$}}
\put(716,606){\makebox(0,0){$\star$}}
\put(753,588){\makebox(0,0){$\star$}}
\put(789,571){\makebox(0,0){$\star$}}
\put(826,555){\makebox(0,0){$\star$}}
\put(863,540){\makebox(0,0){$\star$}}
\put(900,526){\makebox(0,0){$\star$}}
\put(936,512){\makebox(0,0){$\star$}}
\put(973,500){\makebox(0,0){$\star$}}
\put(1010,487){\makebox(0,0){$\star$}}
\put(1047,476){\makebox(0,0){$\star$}}
\put(1083,465){\makebox(0,0){$\star$}}
\put(1120,454){\makebox(0,0){$\star$}}
\put(1157,444){\makebox(0,0){$\star$}}
\put(1194,434){\makebox(0,0){$\star$}}
\put(1231,425){\makebox(0,0){$\star$}}
\put(1267,416){\makebox(0,0){$\star$}}
\put(1304,407){\makebox(0,0){$\star$}}
\put(1341,399){\makebox(0,0){$\star$}}
\put(1378,391){\makebox(0,0){$\star$}}
\put(1414,384){\makebox(0,0){$\star$}}
\put(1451,377){\makebox(0,0){$\star$}}
\put(1488,370){\makebox(0,0){$\star$}}
\put(1525,363){\makebox(0,0){$\star$}}
\put(1562,357){\makebox(0,0){$\star$}}
\put(1598,351){\makebox(0,0){$\star$}}
\put(1635,345){\makebox(0,0){$\star$}}
\put(1672,339){\makebox(0,0){$\star$}}
\put(1709,334){\makebox(0,0){$\star$}}
\put(1745,329){\makebox(0,0){$\star$}}
\put(1782,324){\makebox(0,0){$\star$}}
\put(1819,319){\makebox(0,0){$\star$}}
\put(1856,315){\makebox(0,0){$\star$}}
\put(1893,311){\makebox(0,0){$\star$}}
\put(1929,306){\makebox(0,0){$\star$}}
\put(1966,303){\makebox(0,0){$\star$}}
\put(2003,299){\makebox(0,0){$\star$}}
\put(2040,296){\makebox(0,0){$\star$}}
\put(2076,292){\makebox(0,0){$\star$}}
\put(2113,289){\makebox(0,0){$\star$}}
\put(2150,286){\makebox(0,0){$\star$}}
\put(2187,284){\makebox(0,0){$\star$}}
\put(2223,281){\makebox(0,0){$\star$}}
\put(2260,279){\makebox(0,0){$\star$}}
\put(2297,277){\makebox(0,0){$\star$}}
\put(2334,275){\makebox(0,0){$\star$}}
\put(2371,273){\makebox(0,0){$\star$}}
\put(2407,271){\makebox(0,0){$\star$}}
\put(2444,270){\makebox(0,0){$\star$}}
\put(2481,268){\makebox(0,0){$\star$}}
\put(2518,267){\makebox(0,0){$\star$}}
\put(2554,266){\makebox(0,0){$\star$}}
\put(2591,266){\makebox(0,0){$\star$}}
\put(2628,265){\makebox(0,0){$\star$}}
\put(2665,265){\makebox(0,0){$\star$}}
\put(2712,1553){\makebox(0,0){$\star$}}
\put(2569,1470){\makebox(0,0)[r]{$F(R,2)$}}
\thinlines \dottedline{10}(2610,1470)(2815,1470)
\thinlines \dottedline{10}(348,582)(348,582)(385,544)(422,524)(458,502)(495,486)(532,474)(569,464)(605,453)(642,445)(679,439)(716,430)(753,423)(789,418)(826,411)(863,405)(900,400)(936,394)(973,389)(1010,385)(1047,379)(1083,375)(1120,371)(1157,366)(1194,362)(1231,358)(1267,354)(1304,350)(1341,346)(1378,343)(1414,339)(1451,336)(1488,332)(1525,329)(1562,325)(1598,322)(1635,319)(1672,316)(1709,313)(1745,310)(1782,308)(1819,305)(1856,302)(1893,300)(1929,297)(1966,295)(2003,292)(2040,290)(2076,288)(2113,286)(2150,284)
\thinlines \dottedline{10}(2150,284)(2187,282)(2223,280)(2260,278)(2297,276)(2334,275)(2371,273)(2407,272)(2444,270)(2481,269)(2518,268)(2554,267)(2591,266)(2628,266)(2665,265)
\put(348,582){\circle{18}}
\put(385,544){\circle{18}}
\put(422,524){\circle{18}}
\put(458,502){\circle{18}}
\put(495,486){\circle{18}}
\put(532,474){\circle{18}}
\put(569,464){\circle{18}}
\put(605,453){\circle{18}}
\put(642,445){\circle{18}}
\put(679,439){\circle{18}}
\put(716,430){\circle{18}}
\put(753,423){\circle{18}}
\put(789,418){\circle{18}}
\put(826,411){\circle{18}}
\put(863,405){\circle{18}}
\put(900,400){\circle{18}}
\put(936,394){\circle{18}}
\put(973,389){\circle{18}}
\put(1010,385){\circle{18}}
\put(1047,379){\circle{18}}
\put(1083,375){\circle{18}}
\put(1120,371){\circle{18}}
\put(1157,366){\circle{18}}
\put(1194,362){\circle{18}}
\put(1231,358){\circle{18}}
\put(1267,354){\circle{18}}
\put(1304,350){\circle{18}}
\put(1341,346){\circle{18}}
\put(1378,343){\circle{18}}
\put(1414,339){\circle{18}}
\put(1451,336){\circle{18}}
\put(1488,332){\circle{18}}
\put(1525,329){\circle{18}}
\put(1562,325){\circle{18}}
\put(1598,322){\circle{18}}
\put(1635,319){\circle{18}}
\put(1672,316){\circle{18}}
\put(1709,313){\circle{18}}
\put(1745,310){\circle{18}}
\put(1782,308){\circle{18}}
\put(1819,305){\circle{18}}
\put(1856,302){\circle{18}}
\put(1893,300){\circle{18}}
\put(1929,297){\circle{18}}
\put(1966,295){\circle{18}}
\put(2003,292){\circle{18}}
\put(2040,290){\circle{18}}
\put(2076,288){\circle{18}}
\put(2113,286){\circle{18}}
\put(2150,284){\circle{18}}
\put(2187,282){\circle{18}}
\put(2223,280){\circle{18}}
\put(2260,278){\circle{18}}
\put(2297,276){\circle{18}}
\put(2334,275){\circle{18}}
\put(2371,273){\circle{18}}
\put(2407,272){\circle{18}}
\put(2444,270){\circle{18}}
\put(2481,269){\circle{18}}
\put(2518,268){\circle{18}}
\put(2554,267){\circle{18}}
\put(2591,266){\circle{18}}
\put(2628,266){\circle{18}}
\put(2665,265){\circle{18}}
\put(2712,1470){\circle{18}}
\thicklines \path(348,1718)(348,265)(2897,265)
\end{picture}
\end{center}
\vspace{-0.2in}
\caption{BSC with crossover probability $\ep=0.01$: 
lower bound $F(R,\theta)$ to $E_{\text{PV}}^{(\theta)}(R)$ for $\theta=1,2$
and the sphere packing bound $E_{\rm sp}(R)$.}
\label{fig-bsc0.01}
\end{figure}

\subsubsection{Memoryless Z-Channel}
We next consider the memoryless binary Z-channel 
described by $P_{Y|X}(0|1)=\ep$ and $P_{Y|X}(0|0)=1$. 
Again, setting $p \triangleq P_{\bar X}(1)$
and $s=\frac{1}{1-\rho}$ in \eqref{F-R-theta} yields
\begin{eqnarray*}
E_{\text{PV}}^{(\theta)}(R) &\geq & F(R,\theta) \\
&=& \sup_{0<s<1}\left\{\left(1-\frac 1s\right) R-\inf_{p: h_{\rm b}(p)>R}\log
\left[
\frac{1-p+p\ep^{1+\theta-\theta/s}}{\left[1-p+p\ep^{\theta}\right]^{1-1/s}}+
p^{1/s}(1-\ep)\right]\right\}
\end{eqnarray*}
for $\theta\ge 1$ and $0<R<C=\log\left(1+(1-\ep)\ep^{\frac{\ep}{1-\ep}}\right)$.
Furthermore, the channel's sphere packing upper bound is given by
\begin{equation*}
E_{\text{sp}}(R)=\sup_{0<s\leq 1}\left\{
    \left(1-\frac 1s\right) R-\inf_{0\leq p\leq 1}\log\left[\left(1-p+p\ep^{s}
\right)^{1/s}+p^{1/s}(1-\ep)\right]\right\}\end{equation*}
for $0<R<C$.
In Fig.~\ref{fig-z-channel0.01}, we plot $E_{\text{sp}}(R)$ and $F(R,\theta)$ for $\theta=1,3,10,100$
and $\ep=0.01$. We remark from the figure that for all considered values of $\theta$
(including $\theta$ very large not shown herein), $F(R,\theta) > E_{\text{sp}}(R)$ for high rates.
This leads us to conclude that for the Z-channel, bound 
$E_{\text{PV}}^{(\theta)}(R)$ of \eqref{rebound-theta}  is not tight
at high rates even when $\theta$ approaches infinity. 

\begin{figure}[h!]
\begin{center}
\input{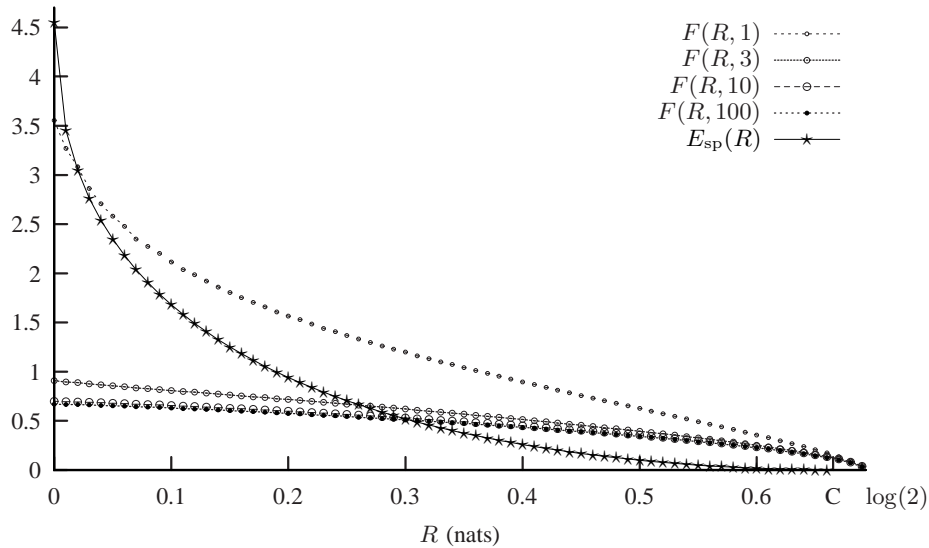}
\end{center}
\vspace{-0.2in}
\caption{Z-channel with crossover probability $\ep=0.01$: 
lower bound $F(R,\theta)$ to $E_{\text{PV}}^{(\theta)}(R)$ for $\theta=1,3,10,100$
and the sphere packing bound $E_{\rm sp}(R)$.}
\label{fig-z-channel0.01}
\end{figure}

\bigskip

\begin{obs}{\rm 
It should be emphasized that
the above numerical examples regarding the looseness of $E_{\text{PV}}^{(\theta)}(R)$ within
a rate region and for given values
of $\theta$ do not shed any light on the tightness of $\bar{E}_{\text{PV}}(R)$
given in \eqref{rebound}, since the expression of $\bar{E}_{\text{PV}}(R)$ requires
taking the limit with respect to $\theta$ {\it before} taking the limit with respect
to the blocklength $n$.
}
\end{obs}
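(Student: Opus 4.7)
The plan is to recognize that the italicized sentence is a cautionary remark rather than a theorem, and to back it up by isolating the order-of-limits asymmetry between the two quantities and noting that the numerics constrain the wrong one. First I would abbreviate
\begin{eqnarray*}
E_{\text{PV}}^{(\theta)}(R) &=& \sup_{\Xb\in\qc(R)} \liminf_{n\to\infty} G_n(\Xb,\theta), \\
\bar{E}_{\text{PV}}(R) &=& \sup_{\Xb\in\qc(R)} \liminf_{n\to\infty} \lim_{\theta\to\infty} G_n(\Xb,\theta),
\end{eqnarray*}
where $G_n(\Xb,\theta)\triangleq -\frac{1}{n}\log P_{X^nW^n}\{(x^n,y^n):\frac{1}{n}j_{X^nW^n}^{(\theta)}(x^n;y^n)\le R\}$. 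The numerical work only establishes $E_{\text{PV}}^{(\theta)}(R)\ge F(R,\theta)>E_{\text{sp}}(R)\ge E^\ast(R)$ at \emph{particular} finite values of $\theta$; this is a statement about $\liminf_n G_n(\Xb,\theta)$ for fixed $\theta$, and never touches the inner $\theta$-limit that defines $\bar{E}_{\text{PV}}(R)$.

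Next I would invoke the standard non-commutativity of iterated limits: for a real-valued array $(G_n(\Xb,\theta))$ the only generic relation one can assert (when the pointwise $\theta$-limit exists) is a Fatou-type inequality such as $\liminf_n \lim_\theta G_n(\Xb,\theta)\le \lim_\theta \liminf_n G_n(\Xb,\theta)$, and this can be strict even under monotonicity hypotheses. The outer $\sup_\Xb$ further decouples the two constructions: a sequence of near-maximizers $\{\Xb_\theta\}$ for $E_{\text{PV}}^{(\theta)}(R)$ may bear no resemblance to any $\Xb^\ast$ that nearly attains $\bar{E}_{\text{PV}}(R)$, so one cannot form a chain of inequalities converting a strict gap $E_{\text{PV}}^{(\theta)}(R)-E^\ast(R)>0$ into a corresponding strict gap for $\bar{E}_{\text{PV}}(R)$.

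Finally, to exhibit concretely that such a transfer is logically impossible, I would appeal to Theorem~\ref{awgn-exact-exp-theo}: for the finite-input memoryless AWGN channel we have $\bar{E}_{\text{PV}}(R)=E^\ast(R)$ for every $0<R<C$, while nothing in the development forbids $E_{\text{PV}}^{(\theta)}(R)>E^\ast(R)$ at every finite $\theta$ (indeed the Z-channel numerics suggest this phenomenon persists even for $\theta=100$). Thus tightness of $\bar{E}_{\text{PV}}(R)$ and looseness of $E_{\text{PV}}^{(\theta)}(R)$ for every finite $\theta$ are jointly consistent, which is precisely what the observation claims. The only subtle point (rather than a real obstacle) is to make explicit that the observation concerns logical compatibility: the numerical data shown for the BSC and Z-channel simply do not probe the $\theta\to\infty$ limit inside the $\liminf_n$, and so they cannot refute the tightness of $\bar{E}_{\text{PV}}(R)$.
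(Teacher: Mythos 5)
Your proposal captures the essential content of the observation, which in the paper is not accompanied by any formal proof: the remark is self-justifying, resting entirely on the fact that $\bar{E}_{\text{PV}}(R)$ takes $\theta\to\infty$ \emph{inside} the $\liminf_n$, while the numerics only constrain $E_{\text{PV}}^{(\theta)}(R)$, where $\theta$ is held fixed through the $n$-limit. Your expansion of this -- isolating the functional $G_n(\Xb,\theta)$, noting that the numerical bound $F(R,\theta)>E_{\text{sp}}(R)$ only constrains $\liminf_n G_n(\Xb,\theta)$ at finite $\theta$, pointing out that the outer supremum over $\Xb$ further decouples the two constructions, and invoking Theorem~\ref{awgn-exact-exp-theo} as a concrete witness that $\bar{E}_{\text{PV}}(R)=E^\ast(R)$ can coexist with $E_{\text{PV}}^{(\theta)}(R)>E^\ast(R)$ for every finite $\theta$ -- is all correct and goes somewhat further than what the paper actually states, while remaining faithful to it.

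One substantive inaccuracy, however: you assert that the ``only generic relation'' between the two iterated limits is a Fatou-type inequality
$\liminf_n \lim_\theta G_n(\Xb,\theta)\le \lim_\theta \liminf_n G_n(\Xb,\theta)$.
This inequality is \emph{not} generically valid, even when the inner $\theta$-limit exists for each $n$. A counterexample: take $G_n(\theta)=1$ if $\theta\ge n$ and $G_n(\theta)=0$ otherwise; then $\lim_\theta G_n(\theta)=1$ for every $n$ so $\liminf_n\lim_\theta G_n=1$, whereas $\liminf_n G_n(\theta)=0$ for every $\theta$ so $\lim_\theta\liminf_n G_n=0$, violating the claimed inequality. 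In fact there is \emph{no} generic order relation between $\liminf_n\lim_\theta$ and $\lim_\theta\liminf_n$, and the paper establishes no monotonicity of $G_n(\Xb,\theta)$ in $\theta$ that would restore one here (the tail probability $P_{X^nW^n}\{j^{(\theta)}\le nR\}$ is not monotone in $\theta$). This error is not damaging to your argument -- if anything, the total absence of any generic inequality between the iterated limits supports the observation's point even more strongly -- but the specific inequality should either be removed or replaced by the correct statement that the iterated limits are incomparable in general.
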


\section{Conclusion}\label{conclusion}
In this work, we generalized the Poor-Verd\'{u} lower bound
for the multihypothesis testing error probability. The new bound, which involves
the tilted posterior distribution of the hypothesis given 
the observation with tilting parameter 
$\theta$, reduces to the original Poor-Verd\'{u} bound when $\theta=1$.
We established a sufficient condition 
under which the bound  (without its multiplicative factor)
provides the exact error probability when $\theta \rightarrow \infty$.
We also provided some examples to illustrate the tightness of the bound in terms
of $\theta$.

We next applied the new bound to obtain two new upper information-spectrum 
based bounds to the reliability function of general channels with memory,
$E_{\text{PV}}^{(\theta)}(R)$ and $\bar{E}_{\text{PV}}(R)$, given in \eqref{rebound-theta}
and \eqref{rebound}, respectively.
It was shown that $\bar{E}_{\text{PV}}(R)$ is tight at all rates (below channel capacity)
for a class of channels
that include the finite-input memoryless Gaussian channel, hence providing
an information-spectral characterization for these channels' reliability function.
The determination of $\bar{E}_{\text{PV}}(R)$ in closed form and its calculation
remains a challenging problem (specially at low rates) as it involves
taking the limit with respect to $\theta$ followed by optimizing
the resulting large-deviation rate function over a constrained set of input
processes (see \eqref{rebound}). It is anticipated that i.i.d.~channel inputs
are unlikely to be a valid optimizer for $\bar{E}_{\text{PV}}(R)$. Although
the evaluation of $\bar{E}_{\text{PV}}(R)$ for non--i.i.d.~channel inputs 
appears difficult, the judicious use of Markovian inputs might be worthwhile
investigating in the future. 


\section*{Appendix~A}
\label{limit}
\begin{lemma}
If the limit (in $n$) of $a_{n,j}$ exists for every $j=1,2,3,\ldots$,
then
$$\lim_{n\rightarrow\infty}\sum_{j=1}^\infty a_{n,j}
=\sum_{j=1}^\infty \lim_{n\rightarrow\infty} a_{n,j}.$$
\end{lemma}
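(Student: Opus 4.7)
The plan is to prove this as a dominated-convergence statement on the counting measure over $\mathbb{N}$. Strictly speaking, the conclusion is not true from pointwise existence of the limits alone: the moving-bump example $a_{n,j}=\oneb(j=n)$ satisfies $\lim_n a_{n,j}=0$ for every $j$, yet $\sum_j a_{n,j}=1$ for every $n$. So one must supply a uniform domination hypothesis. In the setting where the lemma is invoked in the proof of Theorem~\ref{tightness-theo}, namely $a_{\theta,j}=h_j(y)\cdot\oneb(h_j^{(\theta)}(y)>\alpha)$, we have $0\le a_{\theta,j}\le h_j(y)$ with $\sum_{j}h_j(y)=1<\infty$, so $\{h_j(y)\}$ serves as the dominating envelope. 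I would therefore prove the lemma under the standing assumption that there exists $\{b_j\}_{j\ge 1}$ with $|a_{n,j}|\le b_j$ for all $n,j$ and $\sum_j b_j<\infty$; this is precisely the case that is used.

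Given such an envelope, the argument reduces to a standard $\ep/3$ splitting. First, fix $\ep>0$ and choose $J$ so large that $\sum_{j>J}b_j<\ep/3$; since $|a_{n,j}|\le b_j$ and (by Fatou or direct bound) $|\lim_m a_{m,j}|\le b_j$, this bounds the infinite tails of both $\sum_j a_{n,j}$ and $\sum_j \lim_m a_{m,j}$ by $\ep/3$ uniformly in $n$. Second, for this fixed $J$, the pointwise existence of $\lim_m a_{m,j}$ for each $j\le J$ together with the finiteness of the index range produces an $N$ such that $|a_{n,j}-\lim_{m} a_{m,j}|<\ep/(3J)$ for all $n\ge N$ and all $j\le J$. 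Summing over $j\le J$ and combining with the two tail estimates gives $\bigl|\sum_j a_{n,j}-\sum_j\lim_m a_{m,j}\bigr|<\ep$ for all $n\ge N$, which is the claim.

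Alternatively, the conclusion follows immediately from the dominated convergence theorem \cite[Thm.~16.4]{billingsley} specialized to counting measure on $\mathbb{N}$, taking $b_j$ as the dominating integrable function. The only real obstacle, and the step I would emphasize, is recognizing that the lemma requires a domination hypothesis to be true at all; once that hypothesis is supplied (and it is naturally satisfied in the proof of Theorem~\ref{tightness-theo} by the summability $\sum_j h_j(y)=1$), neither the $\ep/3$ argument nor the appeal to dominated convergence presents any further difficulty.
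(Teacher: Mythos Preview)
Your observation is correct: the lemma as stated is false, and your moving-bump counterexample $a_{n,j}=\oneb(j=n)$ demonstrates this cleanly. The paper's own proof proceeds by a different route---it recursively applies the superadditivity of $\liminf$ and the subadditivity of $\limsup$ to peel off one term at a time, arriving (after $K$ steps) at
\[
\limsup_{n}\sum_{j\ge 1}a_{n,j}\ \le\ \sum_{j=1}^{K}\limsup_{n}a_{n,j}\ +\ \limsup_{n}\sum_{j>K}a_{n,j},
\]
and then tacitly lets $K\to\infty$ while dropping the tail. But that last step is exactly where the missing hypothesis bites: without a summable envelope, the tail $\limsup_{n}\sum_{j>K}a_{n,j}$ need not vanish as $K\to\infty$, and in your counterexample it equals $1$ for every $K$. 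So the paper's argument contains the very gap you identified.

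Your repair---assume $|a_{n,j}|\le b_j$ with $\sum_j b_j<\infty$, then either run the $\ep/3$ tail-splitting or invoke dominated convergence on counting measure---is the standard and correct fix, and it is precisely what the application in Theorem~\ref{tightness-theo} supplies via $0\le a_{\theta,j}\le h_j(y)$ and $\sum_j h_j(y)=1$. The paper's recursive $\liminf/\limsup$ framing can also be salvaged under the same hypothesis, since then the tail is bounded by $\sum_{j>K}b_j\to 0$; but your route is more direct and, crucially, makes the needed domination explicit rather than leaving it implicit.
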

\begin{proof}
Since for any sequences $\{b_n\}$ and $\{c_n\}$, 
$$\liminf_{n\rightarrow\infty}\left(b_n+c_n\right)
\geq \liminf_{n\rightarrow\infty}b_n+\liminf_{n\rightarrow\infty}c_n,$$
we recursively have that
\begin{eqnarray*}
\liminf_{n\rightarrow\infty}\sum_{j=1}^\infty a_{n,j}
&\geq& \liminf_{n\rightarrow\infty}a_{n,1}+\liminf_{n\rightarrow\infty}\sum_{j=2
}^\infty a_{n,j}\\
&\geq& \liminf_{n\rightarrow\infty}a_{n,1}+\liminf_{n\rightarrow\infty}a_{n,2}+
\liminf_{n\rightarrow\infty}\sum_{j=3}^\infty a_{n,j} \\ &\geq& \cdots\\
&\geq&
\sum_{j=1}^\infty \liminf_{n\rightarrow\infty}a_{n,j}.
\end{eqnarray*}
Similarly, since 
$$\limsup_{n\rightarrow\infty}\left(b_n+c_n,\right)
\leq \limsup_{n\rightarrow\infty}b_n+\limsup_{n\rightarrow\infty}c_n,$$
we obtain that
$$\limsup_{n\rightarrow\infty}\sum_{j=1}^\infty a_{n,j}
\leq \sum_{j=1}^\infty \limsup_{n\rightarrow\infty}a_{n,j}.$$
Since 
$$\limsup_{n\rightarrow\infty}a_{n,j}=\liminf_{n\rightarrow\infty}a_{n,j}=\lim_{
n\rightarrow\infty}a_{n,j}\mbox{ for every }j,$$
we have
$$\sum_{j=1}^\infty \lim_{n\rightarrow\infty}a_{n,j}\geq\limsup_{n\rightarrow\infty}
\sum_{j=1}^\infty a_{n,j}\geq
\liminf_{n\rightarrow\infty}\sum_{j=1}^\infty a_{n,j}\geq\sum_{j=1}^\infty \lim_
{n\rightarrow\infty}a_{n,j},$$
which immediately yields the desired result.
\end{proof}

\section*{Appendix~B}\label{exp-appendix}

We derive a lower bound to $E_{\text{PV}}^{(\theta)}(R)$ given in \eqref{rebound-theta},
which can be numerically evaluated for different values of $\theta$
when the channel is memoryless.


Consider a general channel $\Wb=\{W^n\}_{n=1}^\infty$ with finite input alphabet $\xc$ and
arbitrary output alphabet $\yc$.
Fix $R>0$. Given an i.i.d.~process ${\bar \Xb}= \{\bar X^n\}_{n=1}^\infty$ with 
alphabet $\xc$ and entropy $H(\bar X)>R$ 
and a constant $0<\delta<H(\bar X)-R$ arbitrarily small, define the (weakly) 
$\delta$-typical set as:
\begin{eqnarray*}
\fc_n(\de|\bar X)&\triangleq&
\left\{x^n\in\xc^n:\left|-\frac 1n
\log P_{\bar X^n}(x^n)-
H(\bar X)\right| \le \de\right\} \\
&=&
\left\{x^n\in\xc^n:\left|-\frac 1n\sum_{i=1}^n
\log P_{\bar X}(x_i)-
H(\bar X)\right| \le \de\right\}.
\end{eqnarray*}
We now recall the consequence of the Asymptotic Equipartition Property for
i.i.d.\ (memoryless) sources (e.g., see \cite{blahut,cover-thomas}).

\bigskip

\begin{prop}\label{aep}
{\rm 
Given an i.i.d.~source $\{\bar X_n\}_{n=1}^{\infty}$
with entropy $H(\bar X)$ and any $\de$ greater than zero,
then its $\de$-typical set $\fc_n(\de|\bar X)$ satisfies the following.
\begin{enumerate}
\item\label{sm1} If $x^n\in\fc_n(\de|\bar X)$, then
$e^{-n(H(\bar X)+\de)} \le P_{\bar X^n}(x^n) \le e^{-n(H(\bar X)-\de)}$.

\item\label{sm2} $P_{\bar X^n}\left(\fc_n^c(\de|\bar X)\right)<\de$ for sufficiently large $n$, where
the superscript ``$c$'' denotes the complement set operation.

\item\label{sm3} $|\fc_n(\de|\bar X)|>(1-\de)e^{n(H(\bar X)-\de)}$ for sufficiently large $n$,
and $|\fc_n(\de|\bar X)|\leq e^{n(H(\bar X)+\de)}$ for every $n$, where
$|\fc_n(\de|\bar X)|$ denotes the number of elements in $\fc_n(\de|\bar X)$.
\end{enumerate}
}\end{prop}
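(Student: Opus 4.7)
The plan is to establish the three assertions in sequence using only the definition of the $\delta$-typical set and the weak law of large numbers. Since $\xc$ is finite (which it is here; the i.i.d.\ process takes values in the finite input alphabet), the entropy $H(\bar X)$ is finite and the random variables $-\log P_{\bar X}(\bar X_i)$ are bounded, so integrability is not an issue.

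For part~\ref{sm1}, I would simply unpack the definition of $\fc_n(\delta|\bar X)$. The defining inequality $|-\frac{1}{n}\log P_{\bar X^n}(x^n) - H(\bar X)| \le \delta$ rearranges to $H(\bar X)-\delta \le -\frac{1}{n}\log P_{\bar X^n}(x^n) \le H(\bar X)+\delta$, and exponentiating yields the claimed two-sided bound on $P_{\bar X^n}(x^n)$ for every $x^n \in \fc_n(\delta|\bar X)$.

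For part~\ref{sm2}, the key observation is that $-\log P_{\bar X^n}(\bar X^n) = \sum_{i=1}^n -\log P_{\bar X}(\bar X_i)$ is a sum of i.i.d.\ random variables, each with finite mean $H(\bar X)$. Applying the weak law of large numbers gives $\frac{1}{n}\sum_{i=1}^n -\log P_{\bar X}(\bar X_i) \to H(\bar X)$ in probability, which is precisely the statement that $\Pr\{\bar X^n \notin \fc_n(\delta|\bar X)\} \to 0$; hence this probability is below $\delta$ for all sufficiently large $n$.

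For part~\ref{sm3}, I would combine the two previous parts. The upper bound on cardinality follows from summing the lower bound in part~\ref{sm1} over the typical set, $1 \ge \sum_{x^n\in\fc_n(\delta|\bar X)} P_{\bar X^n}(x^n) \ge |\fc_n(\delta|\bar X)|\,e^{-n(H(\bar X)+\delta)}$, which rearranges to the claimed bound and holds for every $n$. The lower bound on cardinality uses part~\ref{sm2} and the upper bound in part~\ref{sm1}: for $n$ large enough, $1-\delta < P_{\bar X^n}(\fc_n(\delta|\bar X)) \le |\fc_n(\delta|\bar X)|\,e^{-n(H(\bar X)-\delta)}$, and rearranging gives the desired inequality. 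There is no real obstacle to anticipate here; this is the standard AEP statement for finite-alphabet i.i.d.\ sources, and the proof is essentially definitional once the WLLN has been invoked.
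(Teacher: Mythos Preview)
Your proof is correct and is precisely the standard AEP argument for finite-alphabet i.i.d.\ sources. The paper itself does not prove this proposition; it merely recalls it as a textbook consequence of the AEP with citations to \cite{blahut,cover-thomas}, and the proof you outline is essentially what one finds in those references.
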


\bigskip

Let ${\hat \Xb}=\{\hat X^n\}_{n=1}^\infty$ be a process that is 
uniformly distributed over $\fc_n(\de|\bar X)$ for each $n$;
i.e., $P_{\hat X^n}(x^n)=\frac{1}{|\fc(\delta|\bar X)|}$ for $x^n\in\fc_n(\de|\bar X)$
and $n=1,2,\cdots$.
From Proposition~\ref{aep}, we also obtain that for $n$ sufficiently large and 
$x^n\in\fc_n(\de|\bar X)$,
\begin{eqnarray}\label{aep-ineq}
(1-\delta)e^{-2n\delta}\leq P_{\bar X^n}(x^n)|\fc_n(\delta|\bar X)|=\frac{P_{\bar X^n}(x^n)}{P_{\hat X^n}(x^n)}\leq e^{2n\delta}.
\end{eqnarray}
For ${\hat \Xb}$ to belong to the set $\qc(R)$ as defined in \eqref{input-set},
it is required
that
\begin{equation}\label{typical-condition}
\liminf_{n\rightarrow\infty}\frac 1n\log|\scc(\hat X^n)|
=\liminf_{n\rightarrow\infty}\frac 1n\log|\fc(\de|\bar X)|>R.
\end{equation}
But condition \eqref{typical-condition} can be guaranteed by setting $H(\bar X)>R$
and taking $\delta<H(\bar X)-R$ (as already assumed) since 
$$\liminf_{n\rightarrow\infty}\frac 1n\log|\fc(\de|\bar X)|
\geq \liminf_{n\rightarrow\infty}\frac 1n\log (1-\de)e^{n(H(\bar X)-\de)}
=H(\bar X)-\de > R,$$
where the inequality follows from property~\ref{sm1} of Proposition~\ref{aep}.
Hence, such $\{\hat X^n\}_{n=1}^\infty$ process,
uniformly distributed over its support, belongs to $\qc(R)$. Thus, we can lower-bound $E_{\text{PV}}^{(\theta)}(R)$ for channel
$\Wb=\{W^n\}_{n=1}^\infty$ and a given $\theta>1$ as follows
\begin{eqnarray*}
E_{\text{PV}}^{(\theta)}(R) &\triangleq&
\sup_{\boldsymbol X\in{\cal Q}(R)}\liminf_{n\rightarrow\infty}-\frac 1n\log
P_{X^nW^n}\bigg\{(x^n,y^n)\in\xc^n\times\yc^n:
\frac{1}{n}j_{X^nW^n}^{(\theta)}(x^n;y^n)\leq R \bigg\}
\\
&\ge& \liminf_{n\rightarrow\infty}
-\frac 1n\log P_{{\hat X}^nW^n}\bigg\{(x^n,y^n)\in\xc^n\times\yc^n: 
\frac 1nj_{{\hat X}^nW^n}^{(\theta)}(x^n;y^n)\leq R\bigg\}.
\end{eqnarray*}
For $n$ sufficiently large, we can write
\begin{eqnarray*} 
\lefteqn{j_{\hat X^nW^n}^{(\theta)}(x^n;y^n)}\\
&=&\log\displaystyle\frac{P^{\theta}_{Y^n|X^n}(y^n|x^n)}
{\sum_{\hat x^n\in\xc^n}P_{\hat X^n}(\hat x^n)
P^{\theta}_{Y^n|X^n}(y^n|\hat x^n)}\\
&=&\log\displaystyle\frac{P^{\theta}_{Y^n|X^n}(y^n|x^n)}
{\sum_{\hat x^n\in\fc_n(\delta|\bar X)}P_{\hat X^n}(\hat x^n)
P^{\theta}_{Y^n|X^n}(y^n|\hat x^n)+
\sum_{\hat x^n\not\in\fc_n(\delta|\bar X)}P_{\hat X^n}(\hat x^n)
P^{\theta}_{Y^n|X^n}(y^n|\hat x^n)}\\
&=&\log\displaystyle\frac{P^{\theta}_{Y^n|X^n}(y^n|x^n)}
{\sum_{\hat x^n\in\fc_n(\delta|\bar X)}P_{\hat X^n}(\hat x^n)
P^{\theta}_{Y^n|X^n}(y^n|\hat x^n)} \\
&\ge& \log\displaystyle\frac{P^{\theta}_{Y^n|X^n}(y^n|x^n)}
{\frac{e^{2n\delta}}{1-\delta}
\sum_{\hat x^n\in\fc_n(\delta|\bar X)}P_{\bar X^n}(\hat x^n)
P^{\theta}_{Y^n|X^n}(y^n|\hat x^n)} \\
&\ge&\log\displaystyle\frac{(1-\de)e^{-2n\delta}P^{\theta}_{Y^n|X^n}(y^n|x^n)}
{\sum_{\hat x^n\in\xc^n}P_{\bar X^n}(\hat x^n)
P^{\theta}_{Y^n|X^n}(y^n|\hat x^n)} \\
&=&\log(1-\delta)-2n\delta +j_{\bar X^nW^n}^{(\theta)}(x^n;y^n),
\end{eqnarray*}
where the first inequality follows from the lower bound in \eqref{aep-ineq}.
Accordingly,
\begin{eqnarray}
E_{\text{PV}}^{(\theta)}(R) &\ge&
\liminf_{n\rightarrow\infty}
-\frac 1n\log P_{{\hat X}^nW^n}\bigg\{(x^n,y^n)\in\xc^n\times\yc^n:
\frac 1nj_{{\hat X}^nW^n}^{(\theta)}(x^n;y^n)\leq R\bigg\} \nonumber \\
&\ge& \liminf_{n\rightarrow\infty}
-\frac 1n\log P_{{\hat X}^nW^n}\bigg\{(x^n,y^n)\in\xc^n\times\yc^n:
\frac 1n\log(1-\delta)-2\delta \nonumber \\
&& \hspace{3.0in} +\frac 1nj_{{\bar X}^nW^n}^{(\theta)}(x^n;y^n)\leq R\bigg\}
\nonumber  \\
&=& \liminf_{n\rightarrow\infty}
-\frac 1n\log P_{{\hat X}^nW^n}\bigg\{(x^n,y^n)\in\xc^n\times\yc^n:
\frac 1nj_{{\bar X}^nW^n}^{(\theta)}(x^n;y^n)  \nonumber \\
&& \hspace{3.0in} \leq R-\frac 1n\log(1-\delta)+2\delta\bigg\}.
\label{xx}
\end{eqnarray}
Observe that
\begin{equation*}
P_{\hat X^nW^n}(x^n,y^n)=P_{\hat X^n}(x^n)P_{Y^n|X^n}(y^n|x^n)
\leq\frac {e^{2n\delta}}{1-\delta}P_{\bar X^n}(x^n)P_{Y^n|X^n}(y^n|x^n),
\end{equation*}
where the inequality follows from \eqref{aep-ineq}. 
Then, we can further lower-bound the right-hand side term of \eqref{xx} to obtain
\begin{eqnarray*}
E_{\text{PV}}^{(\theta)}(R) &\ge&
\liminf_{n\rightarrow\infty}
-\frac 1n\log \bigg( \frac {e^{2n\delta}}{1-\delta} P_{\bar X^nW^n}
\bigg\{(x^n,y^n)\in\xc^n\times\yc^n: \nonumber \\
&& \hspace{2.0in} \frac 1nj_{{\bar X}^nW^n}^{(\theta)}(x^n;y^n) \leq 
R-\frac 1n\log(1-\delta)+2\delta\bigg\} \bigg) \nonumber \\
&\ge& \liminf_{n\rightarrow\infty} -\frac 1n\log P_{\bar X^nW^n}
\bigg\{(x^n,y^n)\in\xc^n\times\yc^n: \frac 1nj_{{\bar X}^nW^n}^{(\theta)}(x^n;y^n) 
\leq R+\gamma \bigg\} -2\delta, 
\end{eqnarray*}
where it suffices to take $\gamma>2\delta$ to have
$\gamma>-\frac 1n\log(1-\delta)+2\delta$ for
$n$ sufficiently large.

In summary, we have shown that for any channel $\Wb=\{W^n\}_{n=1}^\infty$, the upper
bound $E_{\text{PV}}^{(\theta)}(R)$ to its channel reliability function satisfies
\begin{eqnarray*}
E_{\text{PV}}^{(\theta)}(R) &\ge& 
\liminf_{n\rightarrow\infty} -\frac 1n\log P_{\bar X^nW^n}
\bigg\{(x^n,y^n)\in\xc^n\times\yc^n: \frac 1nj_{{\bar X}^nW^n}^{(\theta)}(x^n;y^n) 
\leq R+\gamma \bigg\} -2\delta 
\end{eqnarray*}
for $\theta\ge 1$ and any i.i.d.\ input process ${\bar X}$ with
$$\begin{cases}
H(\bar X) >R & \\
0<\delta<H(\bar X)-R & \\
\gamma > 2 \delta. & 
\end{cases}
$$
We next specialize the above lower bound for the case when channel 
$\Wb$ is memoryless.
For a memoryless channel with an i.i.d.~input, we have for $\rho<0$,
\begin{eqnarray*}
\lefteqn{P_{\bar X^nW^n}\left\{(x^n,y^n)\in\xc^n\times\yc^n:
\frac 1nj_{\bar X^nW^n}^{(\theta)}(x^n;y^n)\leq
 R+\gamma\right\}}\\
&=&
P_{\bar X^nW^n}\left\{(x^n,y^n)\in\xc^n\times\yc^n:\rho\sum_{i=1}^n \log\frac{P_
{Y|X}^{\theta}(y_i|x_i)}{\sum_{x'\in\xc}P_{\bar X}(x')P_{Y|X}^{\theta}(y_i|x')}\geq
 n\rho(R+\gamma)\right\}\\
&\leq&\left(e^{-\rho(R+\gamma)}
\left[\sum_{x\in\xc}\sum_{y\in\yc}P_{\bar X}(x)P_{Y|X}(y|x)e^{\rho \log\frac{P_{
Y|X}^{\theta}(y|x)}{\sum_{x'\in\xc}P_{\bar X}(x')P_{Y|X}^{\theta}(y|x')}}\right]
\right)^n\\
&=&\left(e^{-\rho(R+\gamma)}
\left[\sum_{x\in\xc}\sum_{y\in\yc}P_{\bar X}(x)P_{Y|X}(y|x)\left(\frac{P_{Y|X}^{
\theta}(y|x)}{\sum_{x'\in\xc}P_{\bar X}(x')P_{Y|X}^{\theta}(y|x')}\right)^\rho
\right]\right)^n\\
&=&\left(e^{-\rho(R+\gamma)}
\left[\sum_{x\in\xc}\sum_{y\in\yc}P_{\bar X}(x)\frac{P_{Y|X}^{1+\rho\theta}(y|x)}
{\left(\sum_{x'\in\xc}P_{\bar X}(x')P_{Y|X}^{\theta}(y|x')\right)^\rho}\right]
\right)^n,
\end{eqnarray*}
where the inequality follows from Markov's inequality.
Thus, for $\rho <0$, we have
\begin{eqnarray*}
E_{\text{PV}}^{(\theta)}(R) &\ge&
\liminf_{n\rightarrow\infty} -\frac 1n\log P_{\bar X^nW^n}
\bigg\{(x^n,y^n)\in\xc^n\times\yc^n: \frac 1nj_{{\bar X}^nW^n}^{(\theta)}(x^n;y^n)
\leq R+\gamma \bigg\} -2\delta \\
&\geq&\liminf_{n\rightarrow\infty}
-\frac 1n\log\left(e^{-\rho(R+\gamma)}
\left[\sum_{x\in\xc}\sum_{y\in\yc}P_{\bar X}(x)\frac{P_{Y|X}^{1+\rho\theta}(y|x)
}{\left(\sum_{x'\in\xc}P_{\bar X}(x')P_{Y|X}^{\theta}(y|x')\right)^\rho}\right]
\right)^n-2\delta\\
&\geq&\liminf_{n\rightarrow\infty}
\left({\rho(R+\gamma)}-\log
\left[\sum_{x\in\xc}\sum_{y\in\yc}P_{\bar X}(x)\frac{P_{Y|X}^{1+\rho\theta}(y|x)
}{\left(\sum_{x'\in\xc}P_{\bar X}(x')P_{Y|X}^{\theta}(y|x')\right)^\rho}\right]
\right)-2\delta\\
&=&{\rho(R+\gamma)}-\log
\left[\sum_{x\in\xc}\sum_{y\in\yc}P_{\bar X}(x)\frac{P_{Y|X}^{1+\rho\theta}(y|x)
}{\left(\sum_{x'\in\yc}P_{\bar X}(x')P_{Y|X}^{\theta}(y|x')\right)^\rho}\right]-2\delta.
\end{eqnarray*}
Since $\rho<0$, $\gamma$ should be made as small as possible. But as $\gamma>2\delta$,
it should thus approach $2\delta$ to obtain
\begin{eqnarray*}
E_{\text{PV}}^{(\theta)}(R) &\ge&
\rho R -\log
\left[\sum_{x\in\xc}\sum_{y\in\yc}P_{\bar X}(x)\frac{P_{Y|X}^{1+\rho\theta}(y|x)
}{\left(\sum_{x'\in\yc}P_{\bar X}(x')P_{Y|X}^{\theta}(y|x')\right)^\rho}\right]
-2(1-\rho)\delta. 
\end{eqnarray*}
Taking $\delta \downarrow 0$ 
yield the following lower bound 
to $E_{\text{PV}}^{(\theta)}(R)$ for a memoryless channel
\begin{eqnarray}
E_{\text{PV}}^{(\theta)}(R) &\ge& \sup_{P_{\bar X}:H(\bar X) >R} \sup_{\rho <0} \left\{
\rho R -\log
\left[\sum_{x\in\xc}\sum_{y\in\yc}P_{\bar X}(x)\frac{P_{Y|X}^{1+\rho\theta}(y|x)
}{\left(\sum_{x'\in\yc}P_{\bar X}(x')P_{Y|X}^{\theta}(y|x')\right)^\rho}\right]
\right\} \nonumber \\
&\triangleq& F(R,\theta)
\label{F-R-theta}
\end{eqnarray}
for $\theta\ge 1$. 

\bigskip

\end{document}